\documentclass[11pt]{article}

\usepackage[T1]{fontenc}
\usepackage[utf8]{inputenc}
\usepackage[english]{babel}
\usepackage{amsmath,amssymb,amsthm,mathtools}
\usepackage{booktabs,array,float}
\usepackage[protrusion=false]{microtype}
\usepackage[letterpaper,hmargin=1.005in,vmargin=1.05in,includefoot]{geometry}
\usepackage[numbers]{natbib}
\usepackage[hidelinks]{hyperref}
\providecommand{\doi}[1]{\href{https://doi.org/#1}{\nolinkurl{https://doi.org/#1}}}
\hypersetup{
  pdflang={en-US},
  pdftitle={Coding for Multiple Reverse-Complement and Palindromic Duplications},
  pdfauthor={Aryeh Lev Zabokritskiy (Yohananov)},
  pdfsubject={Finite-error duplication-correcting codes},
  pdfkeywords={reverse-complement duplication, palindromic duplication, duplication-correcting codes, DNA storage, sphere-packing bounds}
}

\newtheorem{theorem}{Theorem}[section]
\newtheorem{lemma}[theorem]{Lemma}
\newtheorem{proposition}[theorem]{Proposition}
\newtheorem{corollary}[theorem]{Corollary}
\theoremstyle{definition}
\newtheorem{definition}[theorem]{Definition}

\newtheorem{openproblem}[theorem]{Open Problem}
\theoremstyle{remark}
\newtheorem{remark}[theorem]{Remark}

\newcommand{\rc}{\mathrm{rc}}
\newcommand{\pal}{\mathrm{pal}}
\newcommand{\Amap}{\mathsf{A}}
\newcommand{\cU}{\mathcal{U}}
\newcommand{\Z}{\mathbb{Z}}
\newcommand{\eqdef}{\triangleq}
\DeclareMathOperator{\InsAfter}{InsAfter}
\DeclareMathOperator{\weave}{weave}
\DeclareMathOperator{\Dec}{Dec}

\title{Coding for Multiple Reverse-Complement and Palindromic Duplications}
\author{Aryeh Lev Zabokritskiy (Yohananov)\thanks{\small Department of Computer
Science, Tel-Hai University of Kiryat Shmona in the Galilee; MIGAL --
Galilee Research Institute.  Email:
\href{mailto:yuhanalev@telhai.ac.il}{yuhanalev@telhai.ac.il}.  ORCID:
\href{https://orcid.org/0000-0003-3151-6192}{0000-0003-3151-6192}.}}
\date{}

\begin{document}
\maketitle

\begin{abstract}
Reverse-complement (RC) and palindromic (PAL) duplications copy a length-$k$
block, reverse the copy, and insert it immediately after the original block; an RC
duplication also complements the copied symbols.  We study $q$-ary codes
  correcting $t$ such operations performed sequentially, so a later operation
  may copy symbols created by an earlier one.  For fixed $q\geq2$ and $k,t\geq1$, every
  length-$n$ code $C$ for either channel satisfies
  $n-\log_q|C|\geq t\log_q n-O_{q,k,t}(1)$; for fixed $q,k$ and $1\leq t=t(n)=o(n)$ the lower bound is
$t\log_q(n/t)-O_{q,k}(t)$.  For a single RC error over an even alphabet with
a fixed-point-free complement, the previously known RC-specific construction applies
at odd $k$ and does not cover even $k$.  For every
even $k$ and any involutive complement, we give a coordinate-wise bijection that turns each RC duplication
into a PAL duplication.  Applying this bijection to every codeword therefore
converts any $t$-error-correcting RC code into a PAL code of the same size, and
conversely; encoders and decoders transfer by adding linear-time coordinate
passes.  For every even $k$, we also determine the maximum number of distinct
descendants produced by exactly two errors from one source word.  Words alternating between any two
distinct alphabet symbols attain this maximum for PAL, and the bijection gives
the RC maximizers.  For both PAL and RC at fixed even $k$, we prove the
existence of two-error-correcting codes with redundancy
$4\log_q n+O_{q,k}(1)$.  In the binary two-error problem, the converse gives
$2\log_2 n-O_k(1)$, leaving a factor-two gap in the best existence bounds.
\end{abstract}

\begingroup
\small
\noindent\textbf{Keywords:} duplication-correcting codes, DNA storage,
palindromic duplication, reverse-complement duplication, sphere-packing
bounds.\par
\endgroup

\section{Introduction}
\label{sec:introduction}

String-duplication systems were developed as models of repeated-sequence
generation~\cite{FarnoudSchwartzBruck2016}.  The standard tandem-duplication
channel inserts an unchanged copy of the selected factor
~\cite{JainFarnoudSchwartzBruck2017}.  For a fixed
tandem-duplication length and a prescribed number of errors, asymptotically
optimal bounds and constructions are known
~\cite{KovacevicTan2018,LenzJungerWachterZeh2018}.  The length-one
specialization is closely related to sticky insertions, for which efficient
asymptotically optimal codes are also known~\cite{MahdavifarVardy2017}.

Here we study two channels in which the copied factor is reversed; ordinary
tandem duplication serves only as a literature benchmark.
A length-$k$ palindromic (PAL) duplication
copies a factor, reverses it, and inserts the reversed copy immediately after
the source factor.  A reverse-complement (RC) duplication additionally
complements every copied symbol.  In a sequence of duplications, each operation
acts on the current word.  Its source block may therefore include symbols
inserted by earlier operations.

We work over a $q$-symbol alphabet $\Sigma$, where $q\geq2$, with an involutive
complement map, which may have fixed points.
Fix integers $n\geq k\geq1$ and $t\geq0$, and a channel
$\chi\in\{\pal,\rc\}$.  A code
$C\subseteq\Sigma^n$ corrects exactly $t$ length-$k$ duplications in channel
$\chi$ when distinct codewords have no common descendant after $t$
operations.  Its $q$-ary redundancy is
  $r_q(C)=n-\log_q|C|$.  Exact-$t$ and at-most-$t$
  correction are equivalent for these insertion-only channels.  The
finite-error problem asks for the minimum redundancy under either convention.
The symbol $C$ always refers to a code for the channel currently under
discussion; PAL and RC codes are not assumed to be the same subsets of
$\Sigma^n$.

For palindromic and reverse-complement duplication, Lenz, Wachter-Zeh, and
Yaakobi established several one-error bounds and constructions.  Their
conclusion explicitly listed ``code size upper bounds for multiple
palindromic duplications'' as an open direction
~\cite{LenzWachterZehYaakobi2019}.  Their model fixes the duplication
length, permits arbitrary sequential legal starts, and defines correction
through balls of at most $t$ errors.  Since the exact and at-most conventions
coincide for these insertion-only channels, our endpoint-multiplicity argument
gives a general finite-length upper bound in this direction.  It counts an
injective canonical family obtained by repeatedly copying selected original
source occurrences.  For fixed $t$,
it yields
  $r_q(C)\geq t\log_q n-O_{q,k,t}(1)$ for PAL and RC errors, with no parity
  restriction and no requirement that successive source blocks remain
  disjoint.  It does not provide a finite-$t$ construction,
determine the optimal additive term, or resolve the full finite-error
trade-off.  Yohananov and Schwartz later determined the unbounded-error
capacities for PAL over $q\geq2$ and for RC over even alphabets with a
fixed-point-free complement~\cite{YohananovSchwartz2025}; for every $k\geq2$,
both capacities are zero.  That result leaves open the finite-error regime,
in which codes with sublinear redundancy can exist.

The closest parity-sensitive predecessor is the work of Ben-Tolila and
Schwartz on one RC duplication of odd fixed length over an even alphabet with
a fixed-point-free complement~\cite{BenTolilaSchwartz2022}.  They prove a
sphere-packing converse and give an odd-$k$ alphabet lift: a binary code for
one arbitrary length-$k$ insertion burst, which is a stronger error, becomes a
$q$-ary RC code.  Instantiating that lift with the earlier burst code of
Schoeny, Wachter-Zeh, Gabrys, and Yaakobi gives first-order-optimal redundancy
with an additional $O_k(\log\log n)$ term for fixed odd $k\geq3$
~\cite{SchoenyWachterZehGabrysYaakobi2017}.  Thus this RC-specific construction
covers one error at odd $k$; it neither covers multiple errors nor extends its
alphabet lift to even $k$.

The even-$k$ gap just described is structural, not the absence of any
one-error code.  A code correcting one arbitrary fixed-length insertion burst
also corrects one PAL or RC duplication, and recent burst codes achieve
$\log_q n+O_{q,k}(1)$ redundancy for every fixed $k$
~\cite{SunLuZhangGe2025}.  They do not, however, identify the relation between
the complete PAL and RC channels or provide a two-error comparison.  Other
nearby regimes include fixed $t$, $k=1$, and even $q\geq4$ with a
fixed-point-free complement, where Sun and Ge give
RC codes with redundancy
$(2t-1)\log_q n+O(\log_q\log_q n)$ and efficient encoding and decoding
~\cite{SunGe2026}, and a single RC duplication of unknown length $k\geq2$
for even $q\geq4$ with a fixed-point-free complement
~\cite{LiuTangFanSagar2026}.

The complement-preserving map used for odd $k$ is a symbol projection from
the original alphabet to a binary alphabet.  Sun and Ge show that its pullback
can fail already over $q=4$: it need not correct one RC duplication of length
$k=2$, and it can also fail for two RC duplications of length one
~\cite[Example~V.3]{SunGe2026}.  Our even-length map instead acts as a
coordinate-dependent bijection of the full word space: under zero-based indexing it
complements exactly the odd coordinates, is an involution, and conjugates each
reverse-complement step to a palindromic step.  Unlike the odd-$k$ lift, this
conjugacy requires only an involution on $\Sigma$; fixed points are allowed
  and $q$ need not be even.  Iterating the identity therefore covers arbitrary
  sequential histories, including those in which a later source block contains
  symbols inserted by an earlier operation.

Our main results are the following.  In the even-length sphere and code-size
results, we prove the PAL statements directly and obtain their RC counterparts
through the conjugacy.
\begin{enumerate}
  \item \emph{General finite-error converse.}
  For either channel, every code correcting exactly, or at most, $t$
  duplications satisfies, for fixed $q\geq2$ and $k,t\geq1$,
  \[
    |C|=O_{q,k,t}\!\left(\frac{q^n}{n^t}\right),
    \qquad
    r_q(C)\geq t\log_q n-O_{q,k,t}(1).
  \]
  For fixed $q,k$ and integer-valued $1\leq t=t(n)=o(n)$, the bound becomes
  $t\log_q(n/t)-O_{q,k}(t)$.  The ambient channel permits arbitrary overlap
  and reuse of previously inserted symbols.

  \item \emph{Even-length conjugacy.}
  At every even $k$ and for any involutive complement, the alternating-complement map is a bijective conjugacy
  of the complete sequential channels.  Applying it to every word maps any
  $t$-error-correcting PAL code to an RC code with the same length and size,
  and conversely.  An encoder or decoder transfers by adding at most two
  coordinate passes, each linear in the word length.

  \item \emph{Maximum exact-two spheres at even duplication length.}
  For every even $k$, the maximum exact-two descendant sphere has size
  \[
    (n-k+1)(n+1)-\binom{\max\{n-2k+2,0\}}{2}.
  \]
  Words alternating between any two distinct alphabet symbols attain it for
  PAL, and conjugacy gives the corresponding RC maximizers.

  \item \emph{Direct two-error existence bound.}
  The exact sphere formula and an inverse-parent count give a finite-length
  lower bound on the size of two-error-correcting PAL and RC codes.
  For every fixed even $k$, it yields redundancy $4\log_q n+O_{q,k}(1)$,
  recovering the leading order already supplied by the stronger-channel bound
  of Ye et al.~\cite[Theorem~III.1 and Corollary~III.1]{YeSunYuGeElishco2026}.
  For binary codes, the
  direct converse and this existence result bracket the minimum possible
  redundancy between $2\log_2 n-O_k(1)$ and
  $4\log_2 n+O_k(1)$.
\end{enumerate}

The one-error applications of the conjugacy use cited PAL or arbitrary-burst
codes.  For a separate algorithmic comparison, we combine a cited two-deletion syndrome ensemble
with the exact-two deletion--insertion equivalence to obtain a code for the
strictly stronger channel of two arbitrary fixed-length insertion bursts
~\cite[Theorem~IV.1]{YeSunYuGeElishco2026}
~\cite[Sec.~II, p.~3]{WangKongYaakobiDuman2026}.  Such a code automatically
corrects two PAL or RC duplications.  Its redundancy has leading
coefficient five, so it does not improve our coefficient-four existence
bound.  Its additional value is a polynomial-time codeword decoder once
suitable syndrome parameters have been fixed.  The ensemble theorem proves that
such parameters exist but does not give an efficient parameter-selection
algorithm or an efficient message encoder.  We also identify limitations of
natural separated-row and history-syndrome approaches.

Section~\ref{sec:model} defines the channel and its coding parameters.
Section~\ref{sec:separator-subcone-packing} proves the general packing
bound.  Section~\ref{sec:even-equivalence} develops the even-length
conjugacy and its algorithmic transfer.  Sections~\ref{sec:edge-tree},
\ref{sec:existence}, and \ref{sec:target} treat the exact two-error
geometry, the existence bound, and the binary construction gap.

\section{Channel model and code parameters}
\label{sec:model}

Let $\Sigma$ be a finite alphabet of size $q\geq2$, equipped with an
involution $\iota:\Sigma\to\Sigma$, written
$\iota(a)=\overline a$.  In the usual DNA model the involution has no fixed
points; the abstract results below also permit fixed points.  In particular,
the general $q$-ary results include the identity involution when $q=2$.
Earlier work
often writes the alphabet as $\Z_q$; taking $\Sigma=\Z_q$ recovers that
convention, but no ring structure is used here.  In the binary RC
specializations below, however, $\Sigma=\{0,1\}$ and
$\overline a=1-a$.  We
extend the complement coordinate-wise to words and index all coordinates
from zero.  For $x=x_0\cdots x_{n-1}$, its reversal is
$x^R=x_{n-1}\cdots x_0$; for $m\geq0$, the power $x^m$ denotes the
concatenation of $m$ copies of $x$.

If $x=uvw\in\Sigma^n$, where $|u|=i$ and $|v|=k$, define
\begin{align}
  T^{\rc}_{i,k}(x)&\eqdef uv\,(\overline v)^R w
  =uv\,\overline{v^R}w,\label{eq:rc-duplication}\\
  T^{\pal}_{i,k}(x)&\eqdef uv\,v^Rw.\label{eq:pal-duplication}
\end{align}
The first operation inserts a reverse-complement copy of $v$, while the
second inserts a reversed copy.  For a channel label
$\chi\in\{\pal,\rc\}$, it is convenient to write both rules as insertion of
$\iota_\chi(v^R)$, where
\[
  \iota_{\pal}(a)\eqdef a,
  \qquad
  \iota_{\rc}(a)\eqdef\iota(a),
\]
and each map is extended coordinate-wise to words.

For a word $z=z_0\cdots z_{\ell-1}$, a word $a$, and
$0\leq h<\ell$, let
\[
  \InsAfter_h(z;a)
  \eqdef z_0\cdots z_h\,a\,z_{h+1}\cdots z_{\ell-1};
\]
thus $\InsAfter_h$ inserts $a$ immediately after coordinate $h$.

For $\chi\in\{\pal,\rc\}$, let $D^t_{k,\chi}(x)$ be the set of
descendants obtained from $x$ by exactly $t$ length-$k$ operations of type
$\chi$, with $D^0_{k,\chi}(x)=\{x\}$.  Put
\[
  D^{\leq t}_{k,\chi}(x)
  \eqdef \bigcup_{s=0}^{t}D^s_{k,\chi}(x).
\]
We call $D^t_{k,\chi}(x)$ the \emph{exact-$t$ descendant sphere} of $x$
and $D^{\leq t}_{k,\chi}(x)$ its \emph{at-most-$t$ descendant ball}.
A code $C\subseteq\Sigma^n$ corrects exactly $t$ such duplications when
the sets $D^t_{k,\chi}(x)$ are pairwise disjoint for $x\in C$; it
corrects at most $t$ duplications when the sets
$D^{\leq t}_{k,\chi}(x)$ are pairwise disjoint.  Recall that its $q$-ary
redundancy is
\[
  r_q(C)=n-\log_q|C|.
\]

For either the exact-$t$ or at-most-$t$ convention, the associated
\emph{confusability graph} has vertex set $\Sigma^n$ and joins two distinct
words when their corresponding descendant sets intersect.  Its independent
sets are precisely the correcting codes for that convention.

The two correction conventions agree for this insertion-only channel.

\begin{lemma}[Exact and at-most correction]
\label{lem:exact-at-most}
Fix $\chi\in\{\pal,\rc\}$, $1\leq k\leq n$, and $t\geq0$.  A code corrects
exactly $t$ length-$k$ duplications of type $\chi$ if
and only if it corrects at most $t$ such duplications.
\end{lemma}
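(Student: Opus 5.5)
The approach is to prove both directions separately. The nontrivial one rests on a single observation: in these insertion-only channels, the number of operations is determined by the length of the output. Every operation of type $\chi$ inserts exactly $k$ symbols, so every $y\in D^s_{k,\chi}(x)$ with $x\in\Sigma^n$ has length exactly $n+sk$. Since $k\geq1$, the sets $D^s_{k,\chi}(x)$ for different $s$ lie in the disjoint word spaces $\Sigma^{n+sk}$.

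\emph{At most $t$ implies exactly $t$.} For each $x$ we have $D^t_{k,\chi}(x)\subseteq D^{\leq t}_{k,\chi}(x)$. Hence if the balls are pairwise disjoint over $C$, so are the exact-$t$ spheres. This direction needs nothing else.

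\emph{Exactly $t$ implies at most $t$.} Suppose $x\neq x'$ in $C$ and $y\in D^{\leq t}_{k,\chi}(x)\cap D^{\leq t}_{k,\chi}(x')$. By the length observation, $y\in D^s_{k,\chi}(x)\cap D^s_{k,\chi}(x')$ for a single common $s\leq t$, namely $s=(|y|-n)/k$.

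The goal is then to push $y$ forward to a common descendant at exactly $t$ steps. I would apply $t-s$ further operations to $y$, for instance the duplication $T^{\chi}_{0,k}$ repeatedly. This is legal at every step because the current word has length at least $n\geq k$, and the length only grows. The result $y'$ lies in $D^t_{k,\chi}(x)\cap D^t_{k,\chi}(x')$, since descendant sets compose: applying $r$ operations to a word in $D^s_{k,\chi}(x)$ gives a word in $D^{s+r}_{k,\chi}(x)$. This composition holds directly from the definition of $D^s$ as the words reachable by $s$ sequential operations on the current word. The conclusion contradicts exact-$t$ correction.

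The main point to check is the legality of the extra operations. This only requires $k\leq$ current length, which holds since $k\leq n$. The degenerate case $t=0$ is trivial, since both conditions then say the singletons $\{x\}$ are distinct.
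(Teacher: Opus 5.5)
Your proof is correct and follows essentially the same route as the paper: the trivial inclusion in one direction, and in the other, length separation ($n+sk$) to place a common descendant in a common layer $s$ and then $t-s$ identical further duplications to reach a common exact-$t$ descendant. You also spell out two points the paper leaves implicit, namely that $T^{\chi}_{0,k}$ is always legal because $k\leq n\leq$ the current length, and the trivial case $t=0$.
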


\begin{proof}
One implication is immediate.  Conversely, suppose two distinct codewords
have a common descendant after exactly $s<t$ operations.  Starting from
that common word, apply the same legal duplication another $t-s$ times.
The resulting word is a common exact-$t$ descendant, a contradiction.
Collisions between different error counts cannot occur because the output
length after $s$ operations is $n+sk$.
\end{proof}

\section{An endpoint-multiplicity packing bound}
\label{sec:separator-subcone-packing}

We derive a packing bound by selecting an explicitly countable subfamily
inside each exact descendant sphere.  Recall that a length-$k$ operation in
channel $\chi\in\{\pal,\rc\}$ inserts $\iota_\chi(v^R)$ immediately
after its source factor $v$, where $\iota_\chi$ is the symbol map
specified by the channel model.  Throughout this section,
$q=|\Sigma|\geq2$.

\begin{definition}[Eligible endpoints]
\label{def:eligible-endpoints}
For $x=x_0\cdots x_{n-1}\in\Sigma^n$ and $1\leq k\leq n$, define
\[
  J_{k,\chi}(x)
  \eqdef
  \{n-1\}
  \cup
  \bigl\{j\in\Z:
    k-1\leq j\leq n-2,\
    x_{j+1}\ne\iota_\chi(x_j)
  \bigr\},
\]
and put
\[
  G_{k,\chi}(x)\eqdef |J_{k,\chi}(x)|.
\]
The terminal endpoint $n-1$ is always eligible, so
$G_{k,\chi}(x)\geq1$.
\end{definition}

The separator condition makes multiplicities recoverable from the resulting
word.  A transformed reversed copy of a source block ending at $j$ begins
with $\iota_\chi(x_j)$, whereas the unmodified source continues with
$x_{j+1}$.  At an eligible nonterminal endpoint, these two symbols therefore
distinguish another inserted copy from forward progress in the source.

\begin{lemma}[Endpoint-multiplicity embedding]
\label{lem:separator-subcone}
For every channel label $\chi\in\{\pal,\rc\}$, $x\in\Sigma^n$,
$1\leq k\leq n$, and $t\geq0$,
\begin{equation}
  \bigl|D^t_{k,\chi}(x)\bigr|
  \geq
  \binom{G_{k,\chi}(x)+t-1}{t}.
  \label{eq:separator-subcone-size}
\end{equation}
\end{lemma}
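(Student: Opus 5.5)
We will build an injection from multisets of size $t$ drawn from $J_{k,\chi}(x)$ into $D^t_{k,\chi}(x)$; the number of such multisets is exactly $\binom{G_{k,\chi}(x)+t-1}{t}$. For an eligible endpoint $j$, the source block is $v_j=x_{j-k+1}\cdots x_j$, which is legal because $j\geq k-1$. Write $c_j\eqdef\iota_\chi(v_j^R)$; its first symbol is $\iota_\chi(x_j)$. Given multiplicities $(m_j)_{j\in J_{k,\chi}(x)}$ with $\sum_j m_j=t$, define the canonical descendant
\[
  y(m)\eqdef x_0\cdots x_{j_1}\,c_{j_1}^{m_{j_1}}\,x_{j_1+1}\cdots x_{j_2}\,c_{j_2}^{m_{j_2}}\cdots,
\]
so that $m_j$ copies of $c_j$ sit immediately after original coordinate $j$.

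The first step is to show that $y(m)\in D^t_{k,\chi}(x)$. Process the endpoints from right to left, and at each $j$ apply $m_j$ times the operation whose source is the original block $v_j$. Because we work right to left, every earlier insertion lies to the right of position $j$, so the prefix $x_0\cdots x_j$ is unchanged and still ends with the factor $v_j$. Each duplication then inserts $c_j$ directly after $x_j$, pushing any previously inserted copies of $c_j$ further right. Hence $m_j$ operations produce $c_j^{m_j}$ after $x_j$, and $t$ operations in total produce $y(m)$.

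The second step, and the main obstacle, is injectivity: $y(m)$ must determine $m$. We decode by a greedy left-to-right parse of $y=y(m)$ that tracks a position $p$ in $y$ and an index $i$ in $x$. At each stage we have read $x_0\cdots x_i$, and if $i\in J_{k,\chi}(x)$ we must decide whether the next symbol begins a copy of $c_i$ or is $x_{i+1}$.
\begin{itemize}
  \item If $i<n-1$ and $i$ is eligible, then $\iota_\chi(x_i)\neq x_{i+1}$, so the single next symbol of $y$ tells us which case holds. We consume a copy of $c_i$ of length $k$ exactly when that symbol equals $\iota_\chi(x_i)$, and repeat; this determines $m_i$.
  \item If $i$ is not eligible, $m_i=0$ by construction, and we simply advance to $x_{i+1}$.
  \item At the terminal endpoint $i=n-1$, the remaining suffix has length exactly $k\,m_{n-1}$, which fixes $m_{n-1}$.
\end{itemize}

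A short induction, using only that $x$ is known to the decoder and that each $c_j$ has length $k$, shows that the parse recovers every $m_j$. Therefore $m\mapsto y(m)$ is injective, and counting multisets gives~\eqref{eq:separator-subcone-size}. The delicate point is that the parse never confuses a copy of $c_i$ with the continuation of the source. This is guaranteed because each decision depends only on the first symbol after $x_i$, and the separator condition in Definition~\ref{def:eligible-endpoints} was chosen precisely to make that symbol distinguishing.
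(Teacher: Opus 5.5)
Your proposal is correct and follows the paper's proof essentially step for step. It uses the same canonical word with $m_j$ copies of $\iota_\chi(v_j^R)$ placed immediately after original coordinate $j$, the same right-to-left processing to show reachability, and the same separator condition $x_{j+1}\neq\iota_\chi(x_j)$ for injectivity. Your left-to-right parse is just an unrolled version of the paper's comparison at the first endpoint where two multiplicity vectors differ.
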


\begin{proof}
For every $j\in\{k-1,\ldots,n-1\}$, let
\[
  v_j=x_{j-k+1}\cdots x_j,
  \qquad
  w_j=\iota_\chi(v_j^R).
\]
Choose nonnegative multiplicities
$\mathbf m=(m_j)_{j\in J_{k,\chi}(x)}$ with
$\sum_jm_j=t$, and set $m_j=0$ outside $J_{k,\chi}(x)$.  Define
\[
  F^\chi_{\mathbf m}(x)
  \eqdef
  x_0\cdots x_{k-2}
  \prod_{j=k-1}^{n-1}\bigl(x_jw_j^{m_j}\bigr),
\]
where the initial prefix is empty when $k=1$, and the product denotes
concatenation in increasing order of $j$.

This word is an exact-$t$ descendant.  Process the eligible endpoints in
strictly decreasing order.  When endpoint $j$ is reached, every operation
already performed inserted symbols to the right of an original endpoint
larger than $j$.  The original occurrence
$v_j=x_{j-k+1}\cdots x_j$ is therefore still contiguous.  Duplicate this
same occurrence $m_j$ times.  It remains intact after each operation, and
each repetition inserts the same word $w_j$ immediately after the original
$x_j$.  Operations at smaller endpoints only push the completed suffix to
the right.  This proves reachability of $F^\chi_{\mathbf m}(x)$, including
when the selected source blocks overlap.

It remains to prove injectivity.  Suppose that
$\mathbf m\ne\mathbf m'$ are two weak compositions of the same $t$, and
let $j$ be the smallest eligible endpoint at which their multiplicities
differ.  The endpoint cannot be terminal if all earlier multiplicities
agree, because both total sums equal $t$.  Assume without loss of generality
that $m_j<m'_j$.  After the common prefix and the first $m_j$ copies of
$w_j$, the word $F^\chi_{\mathbf m}(x)$ continues with the original symbol
$x_{j+1}$, whereas $F^\chi_{\mathbf m'}(x)$ continues with the first symbol
of another $w_j$, namely $\iota_\chi(x_j)$.  These symbols are different
by eligibility.  Hence the descendants are distinct, and counting weak
compositions proves \eqref{eq:separator-subcone-size}.
\end{proof}

\begin{remark}[Relation to constant-run insertions]
Following the usual run notation, let $r^{(\geq k)}(x)$ denote the number
of maximal constant runs of $x$ having length at least $k$.  For the
palindromic channel, every such run ends at an eligible endpoint.  Thus
\[
  |D^t_{k,\pal}(x)|
  \geq
  \binom{r^{(\geq k)}(x)+t-1}{t}.
\]
Every such run contributes a distinct eligible endpoint, so
$G_{k,\pal}(x)\geq r^{(\geq k)}(x)$ pointwise.  This domination can be
strict by a linear amount when $k\geq2$: an alternating word has
$r^{(\geq k)}(x)=0$ but $G_{k,\pal}(x)=n-k+1$.  Consequently, at any common
threshold, the eligible-endpoint family has at least as many descendants and
at most as many exceptional source words as the constant-run family;
for fixed $q,k,t$, both approaches nevertheless have the same leading
$t\log_q n$ packing order.  Indeed, among
$\lfloor n/(k+1)\rfloor$ disjoint length-$(k+1)$ blocks, the events that a
block has the form $a^k b$ with $a\ne b$ are independent, each with
probability $(q-1)/q^k$, and their successes certify distinct maximal runs of
length at least $k$.  A Chernoff bound therefore gives
$r^{(\geq k)}(x)=\Omega(n)$ outside an $e^{-\Omega(n)}$ fraction of words,
so the corresponding fixed-$t$ subfamilies have size $\Theta(n^t)$.
\end{remark}

To turn Lemma~\ref{lem:separator-subcone} into a packing bound, it remains to
control the number of ambient words with too few eligible endpoints.  We
count them using the uniform measure on $\Sigma^n$, under which every word
has mass $q^{-n}$; multiplying the probability of an endpoint-count event
by $q^n$ therefore gives exactly the number of ambient words in that event.
A lower-tail estimate for $G_{k,\chi}$ controls the low-endpoint words,
while Lemma~\ref{lem:separator-subcone} supplies a large sphere for every
remaining word.  Disjointness then packs those spheres inside
$\Sigma^{n+kt}$.

Let $X=(X_0,\ldots,X_{n-1})$ be uniform on $\Sigma^n$, and put
\begin{equation}
  N\eqdef n-k,
  \qquad
  p\eqdef\frac{q-1}{q}.
  \label{eq:packing-parameters}
\end{equation}
For either channel,
\begin{equation}
  G_{k,\chi}(X)-1
  =
  \sum_{j=k-1}^{n-2}
  \mathbf 1\{X_{j+1}\ne\iota_\chi(X_j)\}
  \sim\operatorname{Bin}(N,p).
  \label{eq:eligible-endpoint-binomial}
\end{equation}
Indeed, the prefix preceding $X_{k-1}$ is arbitrary and cancels from the
probability.  For any prescribed pattern of equalities and inequalities
between $X_{j+1}$ and $\iota_\chi(X_j)$ on the remaining $N$ transitions,
$X_{k-1}$ has $q$ choices, every constrained equality has one
choice, and every constrained inequality has $q-1$ choices.  The transition
indicators are therefore independent Bernoulli variables with parameter
$p$.

\begin{theorem}[Finite-length packing bound]
\label{thm:finite-length-packing}
Let $1\leq k\leq n$, $t\geq1$, and
$\chi\in\{\pal,\rc\}$.  Suppose that
$C\subseteq\Sigma^n$ has pairwise disjoint exact-$t$ descendant spheres:
\[
  D^t_{k,\chi}(x)\cap D^t_{k,\chi}(x')=\varnothing
  \qquad
  \text{for all distinct }x,x'\in C.
\]
Let $B_{N,p}\sim\operatorname{Bin}(N,p)$, with $N$ and $p$ as in
\eqref{eq:packing-parameters}.  Then, for every integer $0\leq h\leq N$,
\begin{equation}
  \frac{|C|}{q^n}
  \leq
  \Pr\{B_{N,p}<h\}
  +\frac{q^{kt}}{\binom{h+t}{t}}.
  \label{eq:finite-length-packing}
\end{equation}
In particular, if $N>0$ and $h=\lfloor pN/2\rfloor$, then
\begin{equation}
  |C|
  \leq
  q^n\exp\!\left(-\frac{p^2N}{2}\right)
  +\frac{q^{n+kt}}{\binom{\lfloor pN/2\rfloor+t}{t}}.
  \label{eq:finite-length-hoeffding}
\end{equation}
\end{theorem}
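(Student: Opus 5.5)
The plan is to split $C$ according to the number of eligible endpoints. Fix an integer $0\leq h\leq N$ and write $C=C_{\mathrm{low}}\cup C_{\mathrm{high}}$, where $C_{\mathrm{low}}=\{x\in C: G_{k,\chi}(x)-1<h\}$ and $C_{\mathrm{high}}=\{x\in C: G_{k,\chi}(x)\geq h+1\}$.

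For the low part, ignore the code structure and bound $|C_{\mathrm{low}}|$ by the number of all ambient words with $G_{k,\chi}(x)-1<h$. By the distributional identity \eqref{eq:eligible-endpoint-binomial}, this count is exactly $q^n\Pr\{B_{N,p}<h\}$.

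For the high part, apply Lemma~\ref{lem:separator-subcone}. Since $G_{k,\chi}(x)\geq h+1$ and the binomial coefficient is monotone in its upper argument, every $x\in C_{\mathrm{high}}$ satisfies $|D^t_{k,\chi}(x)|\geq\binom{h+t}{t}$. The spheres are pairwise disjoint by hypothesis, and all of them lie in $\Sigma^{n+kt}$, because each operation adds exactly $k$ symbols. Hence $|C_{\mathrm{high}}|\binom{h+t}{t}\leq q^{n+kt}$.

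Adding the two bounds and dividing by $q^n$ gives \eqref{eq:finite-length-packing}.

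For the specialization, take $h=\lfloor pN/2\rfloor$, which lies in $[0,N]$. Then $\Pr\{B_{N,p}<h\}\leq\Pr\{B_{N,p}\leq pN/2\}=\Pr\{B_{N,p}-pN\leq -pN/2\}$. Hoeffding's inequality for a sum of $N$ independent indicators with deviation $s=pN/2$ bounds this by $\exp(-2s^2/N)=\exp(-p^2N/2)$. Multiplying by $q^n$ gives \eqref{eq:finite-length-hoeffding}.

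The only delicate point is bookkeeping: the threshold convention must be stated as $G-1<h$ versus $G\geq h+1$, so that the binomial coefficient $\binom{G+t-1}{t}$ reaches exactly $\binom{h+t}{t}$. The case $N=0$ is harmless, since only $h=0$ is allowed there, the probability term is $0$, and the bound reduces to the trivial sphere bound. I expect no genuine obstacle beyond these checks.
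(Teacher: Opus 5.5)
Your proposal is correct and follows essentially the same argument as the paper: split the codewords by whether $G_{k,\chi}(x)-1<h$, count the exceptional ones exactly via \eqref{eq:eligible-endpoint-binomial}, pack the disjoint spheres of the remaining codewords inside $\Sigma^{n+kt}$ using Lemma~\ref{lem:separator-subcone}, and then apply Hoeffding at $h=\lfloor pN/2\rfloor$. Your explicit checks of the threshold convention and of the degenerate case $N=0$ are details the paper leaves implicit.
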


\begin{proof}
Call a word exceptional when $G_{k,\chi}(x)-1<h$.  By
\eqref{eq:eligible-endpoint-binomial}, the number of exceptional ambient
words is exactly $q^n\Pr\{B_{N,p}<h\}$.  Every remaining codeword has, by
Lemma~\ref{lem:separator-subcone}, at least
\[
  \binom{G_{k,\chi}(x)+t-1}{t}
  \geq
  \binom{h+t}{t}
\]
exact-$t$ descendants.  The descendant spheres of the nonexceptional
codewords are pairwise disjoint, and all lie in $\Sigma^{n+kt}$.  Packing
these spheres and then adding the exceptional codewords proves
\eqref{eq:finite-length-packing}.  Hoeffding's inequality gives
\[
  \Pr\{B_{N,p}<pN/2\}
  \leq
  \exp(-p^2N/2),
\]
which proves \eqref{eq:finite-length-hoeffding}.
\end{proof}

\begin{corollary}[A fixed number of duplications]
\label{cor:fixed-number-packing}
Fix $q\geq2$, $k\geq1$, $t\geq1$, and
$\chi\in\{\pal,\rc\}$.  Every family
$C_n\subseteq\Sigma^n$ with pairwise disjoint exact-$t$ descendant
spheres obeys
\begin{equation}
  |C_n|
  \leq
  \left(
    \frac{t!\,q^{(k+1)t}}{(q-1)^t}+o(1)
  \right)
  \frac{q^n}{n^t}.
  \label{eq:fixed-t-code-size}
\end{equation}
Equivalently, for the $q$-ary redundancy
$r_q(C_n)=n-\log_q|C_n|$,
\begin{equation}
  r_q(C_n)
  \geq
  t\log_q n
  +t\log_q(q-1)
  -\log_q(t!)
  -(k+1)t
  -o(1).
  \label{eq:fixed-t-redundancy}
\end{equation}
In particular, a binary code correcting two length-two palindromic or
reverse-complement duplications satisfies
\[
  |C_n|
  \leq
  (128+o(1))\frac{2^n}{n^2},
  \qquad
  r_2(C_n)\geq2\log_2 n-7-o(1).
\]
\end{corollary}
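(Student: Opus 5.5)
We apply Theorem~\ref{thm:finite-length-packing} directly, in its Hoeffding form \eqref{eq:finite-length-hoeffding}, with $q,k,t$ fixed and $n\to\infty$. Here $N=n-k$ and $p=(q-1)/q$ are as in \eqref{eq:packing-parameters}, and we take $h=\lfloor pN/2\rfloor$. The plan has three steps: show that the exceptional term is negligible against $q^n/n^t$; compute the asymptotics of the packing term; and then take logarithms.

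\emph{Exceptional term.} The first summand is $q^n\exp(-p^2(n-k)/2)$. Since $p>0$ is fixed, $\exp(-p^2(n-k)/2)\,n^t\to0$. So this summand is $o(q^n/n^t)$ and is absorbed into the $o(1)$ in \eqref{eq:fixed-t-code-size}.

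\emph{Packing term.} The second summand is $q^{n+kt}/\binom{h+t}{t}$. Because $h=\lfloor p(n-k)/2\rfloor=pn/2+O(1)$, we have
\[
  \binom{h+t}{t}=\frac{(h+t)(h+t-1)\cdots(h+1)}{t!}=\frac{h^t}{t!}\bigl(1+o(1)\bigr)=\frac{(pn/2)^t}{t!}\bigl(1+o(1)\bigr).
\]
Hence this summand equals
\[
  \bigl(1+o(1)\bigr)\,\frac{t!\,2^t q^{kt}}{p^t}\cdot\frac{q^n}{n^t}
  =\bigl(1+o(1)\bigr)\,\frac{t!\,2^t q^{(k+1)t}}{(q-1)^t}\cdot\frac{q^n}{n^t}.
\]

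This leaves a stray factor $2^t$ relative to the stated constant $t!\,q^{(k+1)t}/(q-1)^t$, and removing it is the main obstacle. The fix is to use \eqref{eq:finite-length-packing} directly with a threshold $h$ closer to the mean, namely $h=\lfloor(1-\varepsilon)pN\rfloor$ for a fixed $\varepsilon\in(0,1)$. Hoeffding's inequality then bounds the tail by $\exp(-2\varepsilon^2p^2N)$, which is still $o(n^{-t})$. Repeating the computation above, the packing term becomes $\bigl(1+o(1)\bigr)(1-\varepsilon)^{-t}\,t!\,q^{(k+1)t}(q-1)^{-t}\cdot q^n/n^t$. Letting $\varepsilon\to0$ slowly with $n$ gives the constant exactly. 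A concrete choice is $\varepsilon=n^{-1/3}$: the tail bound is then $\exp(-\Theta(n^{1/3}))=o(n^{-t})$, and $(1-\varepsilon)^{-t}\to1$. This proves \eqref{eq:fixed-t-code-size}.

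\emph{Logarithms.} Taking $\log_q$ of \eqref{eq:fixed-t-code-size} gives \eqref{eq:fixed-t-redundancy}, using $\log_q\bigl(c+o(1)\bigr)=\log_q c+o(1)$ for the positive constant $c=t!\,q^{(k+1)t}/(q-1)^t$.

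\emph{Binary specialization.} For $q=k=t=2$ the constant is $2!\cdot2^{6}/1^2=128$, and the redundancy bound is $2\log_2 n+0-1-6-o(1)=2\log_2 n-7-o(1)$. Lemma~\ref{lem:exact-at-most} makes this statement hold under either correction convention.
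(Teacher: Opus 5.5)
Your proposal is correct and takes essentially the paper's route: apply \eqref{eq:finite-length-packing} with a threshold $h=pN-o(N)$ whose Hoeffding tail is $o(n^{-t})$, so that the packing term carries $p^t/t!$ instead of the $(p/2)^t/t!$ forced by the Hoeffding form \eqref{eq:finite-length-hoeffding}. The only difference is the deviation you subtract from $pN$: the paper uses $a_n=\sqrt{\tfrac{t+1}{2}N\ln n}$, which makes the tail exactly $n^{-(t+1)}$, whereas you use $\varepsilon pN$ with $\varepsilon=n^{-1/3}$, which makes it $e^{-\Theta(n^{1/3})}$; both choices suffice.
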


\begin{proof}
Put $N=n-k$ and
\[
  a_n=\sqrt{\frac{t+1}{2}\,N\ln n},
  \qquad
  h_n=\lfloor pN-a_n\rfloor.
\]
For all sufficiently large $n$, $h_n\geq0$, and Hoeffding's inequality
gives
\[
  \Pr\{B_{N,p}<h_n\}
  \leq
  \exp\!\left(-\frac{2a_n^2}{N}\right)
  =n^{-(t+1)}.
\]
Moreover, $h_n=pn-o(n)$ and hence
\[
  \binom{h_n+t}{t}
  =
  \left(\frac{p^t}{t!}+o(1)\right)n^t.
\]
Substitution in \eqref{eq:finite-length-packing}, with $t$ fixed, gives
\[
  |C_n|
  \leq
  \left(\frac{t!\,q^{kt}}{p^t}+o(1)\right)\frac{q^n}{n^t}.
\]
Since $p=(q-1)/q$, this is \eqref{eq:fixed-t-code-size}; taking
$q$-ary logarithms gives \eqref{eq:fixed-t-redundancy}.  At
$q=k=t=2$, the constant is $2!\,2^6=128$.
\end{proof}

\begin{corollary}[A sublinear number of duplications]
\label{cor:sublinear-number-packing}
Fix $q\geq2$, $k\geq1$, and $\chi\in\{\pal,\rc\}$, and let
$1\leq t=t(n)=o(n)$ be integer-valued.  For every fixed $0<\alpha<p$, every
family $C_n\subseteq\Sigma^n$ with pairwise disjoint exact-$t(n)$
descendant spheres obeys, as $n\to\infty$,
\begin{equation}
  |C_n|
  \leq
  (1+o(1))q^n
  \left(
    \frac{q^k t}{\alpha(n-k)}
  \right)^t.
  \label{eq:sublinear-t-code-size}
\end{equation}
Choosing, for example, $\alpha=p/2$, we consequently obtain
\begin{equation}
  r_q(C_n)
  \geq
  t\log_q(n/t)-O_{q,k}(t).
  \label{eq:sublinear-t-redundancy}
\end{equation}
If in addition $t\to\infty$, then the sharper form
\[
  r_q(C_n)
  \geq
  t\log_q\!\left(\frac{n-k}{t}\right)
  -kt+t\log_q(pe)-o(t)
\]
holds.
\end{corollary}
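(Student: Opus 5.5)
We will apply Theorem~\ref{thm:finite-length-packing} with the threshold $h=h_n=\lceil \alpha N\rceil$, where $N=n-k$, and bound the two terms of \eqref{eq:finite-length-packing} separately.

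\emph{Exceptional term.} Since $\alpha<p$ is fixed, Hoeffding's inequality gives
\[
\Pr\{B_{N,p}<h_n\}\leq \exp\bigl(-2(p-\alpha)^2N+O(1)\bigr),
\]
which is exponentially small in $n$.

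\emph{Sphere term.} We will use the elementary bound
\[
\binom{h+t}{t}\geq \frac{h^t}{t!}\geq\Bigl(\frac{h}{t}\Bigr)^t .
\]
Because $t=o(n)$ and $h_n\geq\alpha N$, this gives
\[
\frac{q^{kt}}{\binom{h_n+t}{t}}\leq \Bigl(\frac{q^k t}{\alpha(n-k)}\Bigr)^t .
\]

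\emph{Comparing the two terms.} The main obstacle is to show that the exponentially small exceptional term is $o$ of this sphere term, uniformly for $t=o(n)$; this is what yields the factor $1+o(1)$. Write $\epsilon=t/n\to0$. The sphere term is at least
\[
\exp\bigl(-t\ln(n/t)-O(t)\bigr)=\exp\bigl(-n\,\epsilon\ln(1/\epsilon)-O(\epsilon n)\bigr).
\]
Since $\epsilon\ln(1/\epsilon)\to0$, its exponent is $o(n)$. The Hoeffding term has exponent $-\Theta(n)$, so the ratio of the two terms tends to $0$. This proves \eqref{eq:sublinear-t-code-size}.

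\emph{Redundancy.} Taking $\log_q$ of \eqref{eq:sublinear-t-code-size} with $\alpha=p/2$ gives
\[
r_q\geq t\log_q\frac{\alpha(n-k)}{q^k t}-o(1)=t\log_q(n/t)-O_{q,k}(t),
\]
since $\log_q\bigl((n-k)/n\bigr)=O(k/n)$ is bounded for $n\geq 2k$. The $O_{q,k}(t)$ term absorbs both $t\log_q(q^k/\alpha)$ and this loss.

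\emph{Sharper form when $t\to\infty$.} We replace the crude bound $t!\leq t^t$ with Stirling's formula, $t!=(t/e)^t e^{o(t)}$. We also choose $h=\lceil pN-a\rceil$ with $a=o(N)$ but $a^2/N\gg t\ln n$, for instance $a=\sqrt{N\,t\ln n}\cdot\omega_n$ for a slowly growing $\omega_n$. This choice requires $t\ln n=o(N)$; if only $t=o(n)$ holds, we instead take $a=\delta N$ and let $\delta\to0$ slowly. The Hoeffding term is then still negligible against the sphere term, whose exponent is of order $t\ln(n/t)$. We also need
\[
\binom{h+t}{t}\geq \frac{h^t}{t!}=\Bigl(\frac{e\,pN(1-o(1))}{t}\Bigr)^t e^{-o(t)} .
\]
Taking logarithms gives
\[
r_q\geq t\log_q\frac{n-k}{t}-kt+t\log_q(pe)-o(t).
\]
The delicate point is making the choice of $a$ and the comparison uniform. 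Because $t\to\infty$, errors of order $o(t)$ in the exponent are allowed, and we will lean on this.
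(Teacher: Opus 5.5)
Your proposal is correct and follows the paper's proof essentially step for step. Both arguments use Theorem~\ref{thm:finite-length-packing} with $h\approx\alpha N$, Hoeffding for the exceptional term, $\binom{h+t}{t}\geq h^t/t!\geq(h/t)^t$, the fact that $t\ln(n/t)=o(n)$ to make the tail negligible, and Stirling with $h\approx pN-a$, $a=o(N)$, for the sharper form. Your ceiling even removes the paper's $(\alpha N/h)^t=1+o(1)$ correction. The only real difference is that the paper takes $a=\sqrt{Nt\ln(N/t)}$, which is $o(N)$ and makes the tail $\exp(-2t\ln(N/t))$ negligible uniformly under $t=o(n)$. That choice makes your case split between $a=\sqrt{Nt\ln n}\,\omega_n$ and $a=\delta N$ unnecessary; in the latter case, $\delta$ must be chosen with $\delta^2N\gg t\ln(N/t)$.
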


\begin{proof}
Put $N=n-k$ and take $h=\lfloor\alpha N\rfloor$.  For all sufficiently
large $n$, $h\geq1$.  Hoeffding's inequality bounds the first term of
\eqref{eq:finite-length-packing} by
$\exp(-2(p-\alpha)^2N)$.  Also,
\[
  \binom{h+t}{t}
  \geq
  \frac{h^t}{t!},
  \qquad
  t!\leq t^t,
\]
so the second term is at most
\[
  \left(\frac{q^kt}{h}\right)^t
  =
  (1+o(1))
  \left(\frac{q^kt}{\alpha N}\right)^t,
\]
where $(\alpha N/h)^t=\exp(O(t/N))=1+o(1)$.  Because $t=o(n)$,
\[
  -2(p-\alpha)^2N
  +t\ln\!\left(\frac{\alpha N}{q^kt}\right)
  =-\Theta(n)+o(n).
\]
The binomial-tail term is therefore negligible relative to this upper
envelope, including when $t$ is bounded or oscillates.  This proves
\eqref{eq:sublinear-t-code-size}; choosing $\alpha=p/2$, taking logarithms,
and using $t\log_q(n/(n-k))=o(1)$ proves
\eqref{eq:sublinear-t-redundancy}.

For the sharper assertion, assume $t\to\infty$ and put
$L=\ln(N/t)$,
$a=\sqrt{NtL}$, and $h=\lfloor pN-a\rfloor$.  Then $a=o(N)$,
the binomial tail is at most $\exp(-2tL)$, and Stirling's formula gives
\[
  \ln\binom{h+t}{t}
  =
  t\ln(N/t)+t\ln p+t+o(t).
\]
The tail is negligible relative to the packing term
$\exp(-tL+O(t))$.  Taking logarithms yields the claimed refinement.
\end{proof}

In particular, these bounds apply to reverse-complement duplication at every
length $k$.  Section~\ref{sec:even-equivalence} develops the additional
channel equivalence available when $k$ is even.

\begin{remark}[Exact and at-most error correction]
By Lemma~\ref{lem:exact-at-most}, pairwise disjointness of the exact-$t$
layer is equivalent to correction of at most $t$ duplications.  The packing
proof therefore applies to either convention.  The ambient channel permits
arbitrary overlap and nesting, while the canonical family in
Lemma~\ref{lem:separator-subcone} already permits repeated use of one
endpoint and overlapping original source blocks.
\end{remark}

\begin{remark}[Asymptotic scope]
In Corollary~\ref{cor:fixed-number-packing}, $q,k,t$ are fixed while
$n\to\infty$.  In Corollary~\ref{cor:sublinear-number-packing}, only $q,k$
are fixed and $t=o(n)$.  The constants are not uniform when the alphabet
size or the duplication length grows with $n$.

At $k=1$, the palindromic endpoint-multiplicity family reduces to the
classical sticky/repetition sphere; see, for example,
\cite{MahdavifarVardy2017}.  For a single palindromic duplication,
exact collision and sphere information is available in
\cite{LenzWachterZehYaakobi2019}.
\end{remark}

\section{Even-length equivalence of the two channels}
\label{sec:even-equivalence}

The direct packing bound of Section~\ref{sec:separator-subcone-packing}
does not require a relation between the two channels.  When the duplication
length is even, however, a simple change of coordinates identifies their
entire error structure.  For $x=x_0\cdots x_{n-1}\in\Sigma^n$, define the
map $\Amap_n:\Sigma^n\to\Sigma^n$ by
\begin{equation}
  \bigl(\Amap_n(x)\bigr)_j\eqdef \iota^j(x_j),
  \qquad 0\leq j<n.
  \label{eq:Amap}
\end{equation}
Thus $\Amap_n$ complements precisely the odd coordinates.  Because
$\iota^2=\mathrm{id}$, we have $\Amap_n^2=\mathrm{id}$; hence $\Amap_n$ is a
bijection and an involution.

\begin{theorem}[Even-length conjugacy]
\label{thm:conjugacy}
Let $k\geq2$ be even.  For every integer $n\geq k$, every
$i\in\{0,\ldots,n-k\}$, and every $x\in\Sigma^n$,
\begin{equation}
  \Amap_{n+k}\!\left(T^{\rc}_{i,k}(x)\right)
  =T^{\pal}_{i,k}\!\left(\Amap_n(x)\right).
  \label{eq:conjugacy}
\end{equation}
The identity remains valid step by step when the duplication lengths vary,
provided that every applied length is even.
\end{theorem}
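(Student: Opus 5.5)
We prove \eqref{eq:conjugacy} by a direct coordinate-wise comparison, splitting the output word of length $n+k$ into three zones: the prefix $uv$ (coordinates $0,\dots,i+k-1$), the inserted block (coordinates $i+k,\dots,i+2k-1$), and the shifted suffix (coordinates $i+2k,\dots,n+k-1$). Write $y=\Amap_n(x)$, so $y_j=\iota^j(x_j)$. Both sides of \eqref{eq:conjugacy} are words in $\Sigma^{n+k}$, and we check agreement zone by zone.

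\emph{Prefix and suffix.} On the prefix, $T^{\rc}_{i,k}(x)$ and $T^{\pal}_{i,k}(y)$ agree with $x$ and $y$ at the same indices. Hence the left side has $\iota^j(x_j)=y_j$ there, which matches the right side. On the suffix, coordinate $j\geq i+2k$ of $T^{\rc}_{i,k}(x)$ equals $x_{j-k}$, and $\Amap_{n+k}$ applies $\iota^{j}$ to it. The right side has $y_{j-k}=\iota^{j-k}(x_{j-k})$. These agree because $k$ is even, so $\iota^{j}=\iota^{j-k}$ using $\iota^2=\mathrm{id}$. This is the first place evenness enters.

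\emph{Inserted block.} This is the only step with any real content. For $0\le s<k$, coordinate $i+k+s$ of $T^{\rc}_{i,k}(x)$ is $\overline{x_{i+k-1-s}}=\iota(x_{i+k-1-s})$. After $\Amap_{n+k}$ it becomes $\iota^{\,i+k+s+1}(x_{i+k-1-s})$. The corresponding coordinate of $T^{\pal}_{i,k}(y)$ is $y_{i+k-1-s}=\iota^{\,i+k-1-s}(x_{i+k-1-s})$. The exponents differ by $(i+k+s+1)-(i+k-1-s)=2s+2$, which is even, so the two sides agree.

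Notice that this parity match on the inserted block holds for every $k$. Evenness of $k$ is needed only for the suffix shift. Since we use only $\iota^2=\mathrm{id}$, fixed points of $\iota$ are allowed.

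\emph{Varying lengths.} Each step is an instance of \eqref{eq:conjugacy} at the current length $n'$ and with its own even $k'$. Composing gives
\[
\Amap\circ T^{\rc}_{i_t,k_t}\circ\cdots\circ T^{\rc}_{i_1,k_1}
=T^{\pal}_{i_t,k_t}\circ\cdots\circ T^{\pal}_{i_1,k_1}\circ\Amap ,
\]
where each $\Amap$ carries the appropriate length subscript. This follows by induction on the number of steps, inserting $\Amap_{n'}\Amap_{n'}=\mathrm{id}$ between consecutive operations.

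The main obstacle is only the index bookkeeping, namely getting the reversal index $i+k-1-s$ and the exponent parities right. No earlier result is needed beyond the definitions \eqref{eq:rc-duplication}, \eqref{eq:pal-duplication}, and \eqref{eq:Amap}.
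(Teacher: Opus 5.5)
Your proof is correct and follows essentially the same route as the paper: you split the output into the same prefix, inserted-block and suffix zones, you find the same exponent difference $2s+2$ on the reversed copy, you use evenness of $k$ only for the shifted suffix, and you iterate step by step for varying even lengths. Your observation that the inserted-block parity holds for every $k$ is exactly the content of the paper's subsequent remark on odd-length suffix parity.
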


\begin{proof}
Write $x=uvw$, where $|u|=i$ and $|v|=k$.  The coordinates of the prefix
$uv$ do not move, so the two sides of
\eqref{eq:conjugacy} agree there.  Fix an offset
$s\in\{0,\ldots,k-1\}$ in the inserted block and put $r=i+k-1-s$.
The reverse-complement symbol inserted at coordinate $i+k+s$ is
$\iota(x_r)$.  After applying the alternating map it becomes
\[
  \iota^{i+k+s}\!\left(\iota(x_r)\right)
  =\iota^{i+k+s+1}(x_r)
  =\iota^r(x_r),
\]
because $(i+k+s+1)-r=2s+2$ is even.  As $s$ increases, $r$ decreases, so
the transformed inserted block is exactly the reverse of the corresponding
factor of $\Amap_n(x)$.

An original suffix symbol at coordinate $j$ moves to coordinate $j+k$.
Since $k$ is even, $\iota^{j+k}=\iota^j$, and its alternating transform is
unchanged.  This proves the identity.  Applying it after each channel
operation proves the variable-length statement.
\end{proof}

For a set $S\subseteq\Sigma^n$, write
$\Amap_n(S)=\{\Amap_n(x):x\in S\}$.

\begin{corollary}[Full channel transfer]
\label{cor:transfer}
Let $k$ be even, let $n\geq k$, and let $x\in\Sigma^n$.  For every
$s\geq0$,
\[
  \Amap_{n+sk}\!\left(D^s_{k,\rc}(x)\right)
  =D^s_{k,\pal}\!\left(\Amap_n(x)\right).
\]
Taking the disjoint union over output lengths gives, for every $t\geq0$,
\begin{equation}
  \bigsqcup_{s=0}^{t}
  \Amap_{n+sk}\!\left(D^s_{k,\rc}(x)\right)
  =
  \bigsqcup_{s=0}^{t}
  D^s_{k,\pal}\!\left(\Amap_n(x)\right).
  \label{eq:at-most-transfer}
\end{equation}
Consequently, the exact-$s$ and at-most-$t$ confusability graphs are
isomorphic.  A code $C\subseteq\Sigma^n$ corrects
the specified palindromic errors if and only if $\Amap_n(C)$ corrects the
corresponding reverse-complement errors.  Sphere sizes, optimal code cardinalities,
and redundancies agree.  The corresponding encoder and decoder maps are made
explicit in Corollary~\ref{cor:decoder-transfer}.
\end{corollary}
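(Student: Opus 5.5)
The plan is to prove the exact-$s$ identity by induction on $s$, using Theorem~\ref{thm:conjugacy} as the inductive step. Everything else in Corollary~\ref{cor:transfer} then follows formally from bijectivity of the maps $\Amap_m$.

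\textbf{Base case and inductive step.} For $s=0$ both sides equal $\{\Amap_n(x)\}$. Assume the identity holds for $s$. Every element of $D^{s+1}_{k,\rc}(x)$ has the form $T^{\rc}_{i,k}(y)$ with $y\in D^s_{k,\rc}(x)$, $|y|=n+sk$, and $0\leq i\leq n+sk-k$. Theorem~\ref{thm:conjugacy}, applied at length $n+sk\geq k$, gives
\[
\Amap_{n+(s+1)k}\bigl(T^{\rc}_{i,k}(y)\bigr)=T^{\pal}_{i,k}\bigl(\Amap_{n+sk}(y)\bigr).
\]
By the inductive hypothesis, $\Amap_{n+sk}(y)$ ranges exactly over $D^s_{k,\pal}(\Amap_n(x))$ as $y$ ranges over $D^s_{k,\rc}(x)$. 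The legal positions $i$ depend only on the length, which is the same on both sides. So the image of the RC layer is contained in the PAL layer, and every PAL descendant $T^{\pal}_{i,k}(z)$ with $z=\Amap_{n+sk}(y)$ is hit. This proves the exact-$s$ equality.

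\textbf{At-most version.} Descendants with different error counts have different lengths $n+sk$, so both unions in \eqref{eq:at-most-transfer} are disjoint. The equality then follows by taking the union over $s\leq t$ of the layerwise identities.

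\textbf{Graph isomorphism and code transfer.} Take $\Amap_n$ as the vertex bijection of $\Sigma^n$. For distinct $x,x'$, the sets $D^s_{k,\rc}(x)$ and $D^s_{k,\rc}(x')$ intersect if and only if their images under the injective map $\Amap_{n+sk}$ intersect. By the layer identity, those images are $D^s_{k,\pal}(\Amap_n x)$ and $D^s_{k,\pal}(\Amap_n x')$. The same argument works for the at-most balls, using the union identity and lengthwise injectivity.

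Hence $\Amap_n$ is an isomorphism from the RC confusability graph to the PAL graph, and it maps independent sets to independent sets. Since $\Amap_n$ is an involution, $C$ is a PAL code if and only if $\Amap_n(C)$ is an RC code. Sphere sizes agree because $\Amap_{n+sk}$ is a bijection. Code sizes, and hence optimal cardinalities and redundancies, agree because $\Amap_n$ is a bijection.

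The only delicate step is checking that the index ranges for $i$ coincide at every length. This holds because both channels permit exactly $0\leq i\leq m-k$ on words of length $m$. With that checked, the argument is routine.
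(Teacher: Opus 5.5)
Your proposal is correct and takes the same route as the paper. The paper's proof also iterates Theorem~\ref{thm:conjugacy} one layer at a time, separates the layers by output length $n+sk$, and derives the graph-isomorphism, code, and size statements from the bijectivity of the maps $\Amap_m$; your induction on $s$, with the check that legal starts depend only on word length, just writes that iteration out in full.
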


\begin{proof}
Iterate Theorem~\ref{thm:conjugacy} for each fixed value of $s$.  The
$s$-th layer has output length $n+sk$, so distinct layers are disjoint and
require their own map $\Amap_{n+sk}$.  The remaining set-theoretic claims
follow from these layerwise bijections.
\end{proof}

To decode a reverse-complement output using Corollary~\ref{cor:transfer},
first transform the received word to palindromic coordinates.  The following
formulas specify the corresponding codeword and message decoders.

\begin{corollary}[Encoder and decoder transfer]
\label{cor:decoder-transfer}
Let $k\geq2$ be even, let $n\geq k$ and $t\geq0$, and let
$C_{\pal}\subseteq\Sigma^n$ correct at most $t$
length-$k$ palindromic duplications, and put
\[
  C_{\rc}\eqdef\Amap_n(C_{\pal}).
\]
Define the valid received sets
\[
  \mathcal Y_{\pal}
  \eqdef
  \bigsqcup_{c\in C_{\pal}}\bigsqcup_{s=0}^{t}D^s_{k,\pal}(c),
  \qquad
  \mathcal Y_{\rc}
  \eqdef
  \bigsqcup_{x\in C_{\rc}}\bigsqcup_{s=0}^{t}D^s_{k,\rc}(x).
\]
Suppose that $\Dec_{\pal}:\mathcal Y_{\pal}\to C_{\pal}$ is a codeword
decoder satisfying
\[
  \Dec_{\pal}(z)=c
  \quad\text{whenever}\quad
  c\in C_{\pal},\qquad
  z\in D^s_{k,\pal}(c),\qquad 0\leq s\leq t.
\]
Then a codeword decoder
$\Dec_{\rc}:\mathcal Y_{\rc}\to C_{\rc}$ is
\begin{equation}
  \Dec_{\rc}(y)
  \eqdef
  \Amap_n\!\left(
    \Dec_{\pal}\!\left(\Amap_{|y|}(y)\right)
  \right).
  \label{eq:decoder-transfer}
\end{equation}
If $\mathcal M$ is a message set and
$E_{\pal}:\mathcal M\to C_{\pal}$ is an encoder, then
$E_{\rc}=\Amap_n\circ E_{\pal}$ encodes the same message set into
$C_{\rc}$.  If
$\widehat{\Dec}_{\pal}:\mathcal Y_{\pal}\to\mathcal M$ returns the encoded
message on every valid received word, its reverse-complement version is
simply the map $\widehat{\Dec}_{\rc}:\mathcal Y_{\rc}\to\mathcal M$ given by
\[
  \widehat{\Dec}_{\rc}(y)
  =\widehat{\Dec}_{\pal}\!\left(\Amap_{|y|}(y)\right).
\]
Both transforms are single coordinate-wise passes.  Thus the encoder gains
$O(n)$ time and the decoder gains $O(|y|)$ time.  Under a mutable
random-access word representation and a constant-time implementation of
$\iota$, either transform may be performed in place with $O(1)$ additional
working memory; an out-of-place implementation uses a linear output buffer.
\end{corollary}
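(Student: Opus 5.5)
The plan is to reduce every claim to the layerwise identity of Corollary~\ref{cor:transfer}, together with the fact that each $\Amap_m$ is an involution. First we check that $\Amap_n$ maps $C_{\pal}$ bijectively onto $C_{\rc}$. By Corollary~\ref{cor:transfer}, $C_{\rc}$ corrects at most $t$ reverse-complement duplications. The same corollary gives, for each $c\in C_{\pal}$ and $0\leq s\leq t$, the equality $\Amap_{n+sk}(D^s_{k,\pal}(c))=D^s_{k,\rc}(\Amap_n(c))$. Here we apply Corollary~\ref{cor:transfer} to $x=\Amap_n(c)$, use $\Amap_n^2=\mathrm{id}$, and then apply $\Amap_{n+sk}$ to both sides. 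Taking the union over $c$ and $s$ shows that $y\mapsto\Amap_{|y|}(y)$ is a length-preserving bijection $\mathcal Y_{\rc}\to\mathcal Y_{\pal}$. Hence \eqref{eq:decoder-transfer} is well defined on all of $\mathcal Y_{\rc}$.

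For correctness, take $x\in C_{\rc}$, $0\leq s\leq t$, and $y\in D^s_{k,\rc}(x)$, and put $c=\Amap_n(x)\in C_{\pal}$. Corollary~\ref{cor:transfer} gives $\Amap_{n+sk}(y)\in D^s_{k,\pal}(c)$. Since $|y|=n+sk$, the hypothesis on $\Dec_{\pal}$ yields $\Dec_{\pal}(\Amap_{|y|}(y))=c$. Applying $\Amap_n$ then gives $\Amap_n(c)=x$. A valid received word may lie in several layers or balls only if disjointness fails, and disjointness is guaranteed by the correction property, so the returned value is unambiguous.

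The encoder claim is immediate: $\Amap_n\circ E_{\pal}$ has image $\Amap_n(C_{\pal})=C_{\rc}$, and it is injective whenever $E_{\pal}$ is. The message decoder follows by the same computation. For $y\in D^s_{k,\rc}(\Amap_n(E_{\pal}(m)))$ we get $\Amap_{|y|}(y)\in D^s_{k,\pal}(E_{\pal}(m))$, so $\widehat{\Dec}_{\pal}$ returns $m$.

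For the complexity statements, $\Amap_m$ applies $\iota$ at odd coordinates only. It is therefore one pass over the word with $O(m)$ evaluations of $\iota$, and it can be done in place by overwriting $y_j$ with $\iota(y_j)$ for odd $j$, using only a loop counter. The decoder uses two such passes, of lengths $|y|$ and $n\leq|y|$, which gives $O(|y|)$ added time. The encoder uses one pass of length $n$.

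No step is a real obstacle. The only point needing care is indexing the correct map $\Amap_{|y|}$ per layer and using the involution property to move $\Amap$ to the other side of the identity in Corollary~\ref{cor:transfer}.
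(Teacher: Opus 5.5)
Your proposal is correct and follows the paper's proof. Both arguments pull the received word back through $\Amap_{|y|}$, apply Corollary~\ref{cor:transfer} layer by layer to land in $D^s_{k,\pal}(\Amap_n(x))$, use the involution property to return $x$, and get the complexity from $\Amap_m$ being one coordinate pass; your explicit check that $y\mapsto\Amap_{|y|}(y)$ maps $\mathcal Y_{\rc}$ bijectively onto $\mathcal Y_{\pal}$ only spells out the well-definedness the paper leaves implicit.
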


\begin{proof}
Let $x\in C_{\rc}$ and $y\in D^s_{k,\rc}(x)$ for some $0\leq s\leq t$.
Since $\Amap_n$ is an involution,
$c\eqdef\Amap_n(x)$ belongs to $C_{\pal}$.  Corollary~\ref{cor:transfer}
gives
\[
  \Amap_{|y|}(y)=\Amap_{n+sk}(y)
  \in D^s_{k,\pal}(c).
\]
Hence $\Dec_{\pal}(\Amap_{|y|}(y))=c$, and the outer transform in
\eqref{eq:decoder-transfer} returns $\Amap_n(c)=x$.  The encoder and
message-decoder formulas follow from the same identities.  The complexity
claims follow because $\Amap_m$ visits each of the $m$ coordinates once.
\end{proof}

More precisely, a finite history is a sequence
$h=((i_1,k_1),\ldots,(i_s,k_s))$, where $s\geq0$ and every $k_r$ is even.
For $s=0$, this is the empty history.  The history is
legal from $x\in\Sigma^n$ when, at step $r$, the start $i_r$ lies in the
legal range for the word obtained after the preceding $r-1$ steps.  For a
source-independent family $\mathcal H$ of such formal sequences, let
$\mathcal H(x)$ be its histories that are legal from $x$.  Write
$T_h^\chi(x)$ for the endpoint obtained by sequentially applying the steps of
$h$ in channel $\chi$.  For $s\geq1$, the order is
$T_h^\chi(x)=T^\chi_{i_s,k_s}(\cdots T^\chi_{i_1,k_1}(x)\cdots)$; for the
empty history, $T_h^\chi(x)=x$.
Legality depends only on the evolving word length, so
$\mathcal H(x)=\mathcal H(\Amap_n(x))$.  If
$L(h)=\sum_{r=1}^s k_r$, with the empty sum equal to zero, repeated use of
Theorem~\ref{thm:conjugacy} gives,
for every total inserted length $L$, the layerwise bijection
\[
  \Amap_{n+L}\!\left(
    \{T_h^{\rc}(x):h\in\mathcal H(x),\ L(h)=L\}
  \right)
  =
  \{T_h^{\pal}(\Amap_n(x)):h\in\mathcal H(x),\ L(h)=L\},
\]
Taking the union over $L$ gives the corresponding endpoint-set,
confusability, and decoder transfers, using $\Amap_n$ on the source and
$\Amap_{|y|}$ on the received word.

\begin{remark}[Odd-length suffix parity]
The inserted-block calculation in the proof does not require $k$ to be
even.  The suffix calculation does: when $k$ is odd, every suffix coordinate
changes parity.  Thus this map does not in general conjugate the full
channels at odd length.  Parity partitions appear in the
expressiveness analysis of Ben-Tolila and Schwartz
~\cite{BenTolilaSchwartz2022}; their set-valued projection anticipates the
alternating-coordinate viewpoint, but it does not give the bijective
even-length conjugacy above.
\end{remark}

The conjugacy can be combined with any palindromic construction.  As a
general benchmark, the fixed-burst construction of Sun, Lu, Zhang, and Ge
gives the following consequence.

\begin{corollary}[One fixed-length duplication]
\label{cor:one-fixed-length}
Fix $q\geq2$ and $k\geq1$.  There are $q$-ary codes correcting one
palindromic duplication of length $k$, with
\[
  r_q(C_n)\leq\log_q n+O_{q,k}(1),
\]
and likewise for one reverse-complement duplication.  For $k=1$ the cited
code is the explicit Tenengolts construction.  For $k\geq2$, the cited
construction is a syndrome-parameterized ensemble containing such a code;
once a suitable parameter tuple is supplied, its decoder is constructive,
whereas the size guarantee is existential over the tuple.  When $k$ is even,
the palindromic and reverse-complement decoders may moreover be paired by
Corollary~\ref{cor:decoder-transfer}.  The redundancy order is optimal up to
an additive constant.
\end{corollary}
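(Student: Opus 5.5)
The proof of Corollary~\ref{cor:one-fixed-length} reduces both channels to a stronger error model and then appeals to existing constructions. The key observation is that one length-$k$ palindromic or reverse-complement duplication of $x\in\Sigma^n$ yields a word of length $n+k$. That word is obtained from $x$ by inserting the block $\iota_\chi(v^R)$ immediately after coordinate $i+k-1$. It is therefore a particular instance of one insertion burst of exactly $k$ consecutive arbitrary symbols. Hence $D^1_{k,\chi}(x)$ is contained in the single-burst insertion ball of $x$. Any code whose burst-insertion balls are pairwise disjoint has pairwise disjoint exact-one descendant spheres, and by Lemma~\ref{lem:exact-at-most} it corrects at most one duplication as well.

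For the achievability bound I would invoke the fixed-length burst-insertion codes of Sun, Lu, Zhang, and Ge~\cite{SunLuZhangGe2025}, which give redundancy $\log_q n+O_{q,k}(1)$ for every fixed $k$. For $k=1$, a single length-one burst insertion is a single insertion, and the Tenengolts code is explicit. This handles PAL and RC simultaneously for every $k$, with no parity restriction. The existential-versus-constructive remarks in the statement just record what that reference provides: an ensemble indexed by syndrome parameters, one member of which attains the size bound, with a decoder available once the parameters are fixed. When $k$ is even, I would additionally set $C_{\rc}=\Amap_n(C_{\pal})$ and apply Corollary~\ref{cor:decoder-transfer} to pair the decoders. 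This costs one linear pass.

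For optimality, I would apply Corollary~\ref{cor:fixed-number-packing} with $t=1$. It gives $r_q(C_n)\geq\log_q n+\log_q(q-1)-(k+1)-o(1)$, which matches the upper bound up to an additive $O_{q,k}(1)$.

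The main obstacle is not mathematical but bookkeeping. I must check that the cited burst model uses exactly the insertion of $k$ contiguous symbols into a length-$n$ word, with the same $q$-ary alphabet and an insertion position ranging over all gaps. I must also confirm that its redundancy statement is uniform in $n$ for fixed $q,k$. If the cited model restricts burst positions differently, the containment of $D^1_{k,\chi}(x)$ in the burst ball must be re-verified, but every PAL/RC insertion point $i+k-1$ lies strictly inside the admissible range, so I expect no difficulty.
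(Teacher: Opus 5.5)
Your proposal is correct and follows essentially the same route as the paper. Both arguments embed the PAL and RC channels in the single fixed-length insertion-burst channel, invoke the Sun--Lu--Zhang--Ge ensemble (Tenengolts for $k=1$), pair the decoders through Corollary~\ref{cor:decoder-transfer} for even $k$, and obtain optimality from Corollary~\ref{cor:fixed-number-packing} with $t=1$. The paper resolves your bookkeeping concern in two ways: the cited Theorem~9 is stated for one $k$-deletion burst, so correction of one insertion burst comes from the deletion--insertion equivalence for the same syndrome class; and its redundancy is given in bits, which is converted to $q$-ary symbols by dividing by $\log_2 q$.
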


\begin{proof}
Sun, Lu, Zhang, and Ge work over the $q$-ary alphabet $\Sigma_q$.  For
$k\geq2$, their Theorem~9 explicitly describes a syndrome-parameterized
ensemble of $q$-ary codes for one $(k,0)$ deletion burst and proves by
pigeonhole that a parameter tuple attaining the stated size exists.  Once
that tuple is supplied, the decoder is constructive.  The deletion--insertion
equivalence recalled there applies to the same length-$n$ syndrome class,
which therefore corrects one $(0,k)$ insertion
burst~\cite[Theorem~9]{SunLuZhangGe2025}.  The
case $k=1$ is supplied by the classical $q$-ary Tenengolts construction
recalled immediately before Definition~3 of that paper
~\cite{SunLuZhangGe2025}.  The selected codes have
$\log_2 n+O_{q,k}(1)$ bits of redundancy.  A
palindromic or reverse-complement duplication inserts one particular
length-$k$ block, so both duplication channels are subchannels of that burst
channel.  After division by $\log_2 q$, the redundancy is
$\log_q n+O_{q,k}(1)$ in $q$-ary symbols; equivalently, bit redundancy
$r_{\rm bit}$ and $q$-ary redundancy satisfy
$r_q=r_{\rm bit}/\log_2 q$.  The matching lower bound follows
from Corollary~\ref{cor:fixed-number-packing} with $t=1$.
\end{proof}

\begin{remark}[Comparison with the odd-length one-error result]
Ben-Tolila and Schwartz constructed codes for one reverse-complement
duplication of fixed odd length~\cite{BenTolilaSchwartz2022}.  For $k=1$
their redundancy is $\log_2 n+o(1)$ bits.  For every fixed odd
$k\geq3$, their displayed burst-code instantiation has the form
\[
  \log_2 n+(k-1)\log_2\!\log_2 n+O_k(1),
\]
and is therefore optimal in the leading coefficient but not up to an
additive constant.  Corollary~\ref{cor:one-fixed-length}, obtained from the
newer general burst construction, is additive-$O_{q,k}(1)$ optimal for one
fixed-length duplication of either parity.  The odd-length construction in
\cite{BenTolilaSchwartz2022} concerns one error and provides no comparison
for the two-error regime studied below.
\end{remark}

The next application illustrates how a channel-specific palindromic code
transfers without a new decoding argument.  For a binary word $z$, let $r(z)$ be its
number of maximal constant runs, let $r_j(z)$ be the length of its $j$th
run from the left, and let $r^{(1)}(z)$ be its number of singleton runs.
For $x\in\{0,1\}^n$, define the pulled-back run statistics
\[
  r_{\rm alt}^{(1)}(x)\eqdef r^{(1)}(\Amap_n(x)),
  \qquad
  C_{\rm alt}(x)
  \eqdef
  \sum_{j=1}^{r(\Amap_n(x))}j\,r_j(\Amap_n(x)).
\]
Thus the alternating runs of $x$ are precisely the constant runs of
$\Amap_n(x)$.  For $a\in\Z_5$ and $b\in\Z_{2n+1}$, put
\[
  \mathcal C^{\rc}_{a,b}(n)
  \eqdef
  \left\{x\in\{0,1\}^n:
  r_{\rm alt}^{(1)}(x)\equiv a\pmod 5,
  C_{\rm alt}(x)\equiv b\pmod{2n+1}\right\}.
\]

\begin{corollary}[A transferred one-error code]
\label{cor:one-error}
Each $\mathcal C^{\rc}_{a,b}(n)$ corrects one binary reverse-complement
duplication of length two.  Moreover, at least one pair $(a,b)$ satisfies
\[
  |\mathcal C^{\rc}_{a,b}(n)|\geq \frac{2^n}{5(2n+1)}.
\]
Its redundancy is at most
$\log_2(5(2n+1))=\log_2 n+\log_2 10+o(1)$.
\end{corollary}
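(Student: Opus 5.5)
The plan is to reduce to a known binary palindromic code via Corollary~\ref{cor:transfer} with $k=2$, $q=2$ and $\iota(a)=1-a$.  By construction, $x\in\mathcal C^{\rc}_{a,b}(n)$ if and only if $\Amap_n(x)$ lies in
\[
  \mathcal C^{\pal}_{a,b}(n)\eqdef\left\{z\in\{0,1\}^n:\ r^{(1)}(z)\equiv a\pmod 5,\ \sum_{j=1}^{r(z)}j\,r_j(z)\equiv b\pmod{2n+1}\right\}.
\]
Because $\Amap_n$ is an involution, $\mathcal C^{\rc}_{a,b}(n)=\Amap_n(\mathcal C^{\pal}_{a,b}(n))$.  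Corollary~\ref{cor:transfer} then reduces the first claim to showing that each $\mathcal C^{\pal}_{a,b}(n)$ corrects one binary palindromic duplication of length two.

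I expect this is the run-based construction of Lenz, Wachter-Zeh, and Yaakobi~\cite{LenzWachterZehYaakobi2019} for one length-two palindromic duplication, and I would cite it.  If a self-contained argument is needed, I would analyze $T^{\pal}_{i,2}$ on $z=uabw$, which inserts $ba$ after $ab$.  If $a=b$, the effect is to lengthen one run by $2$.  If $a\ne b$, the word $\cdots ab\,ba\cdots$ lengthens the $b$-run containing $z_{i+1}$ to length $2$ or more and inserts a singleton run $a$ just before the following run, merging with that run if it is also $a$.  Each case gives a bounded change to the singleton count $r^{(1)}$ that is distinguishable modulo $5$, meaning a change in $\{-1,0,1,2\}$, since a singleton $b$-run may stop being a singleton.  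This identifies the error type.  The weighted sum $\sum j\,r_j$ then changes by an amount that, within each type, is monotone in the affected run index and lies in a window of length less than $2n+1$.  That locates the error up to equivalent positions, as in the Varshamov--Tenengolts analysis of sticky insertions.

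This step is the main obstacle: I must show that the $r^{(1)}$ shifts separate the cases modulo $5$, and that within each case different affected runs give different residues modulo $2n+1$.  I would handle it by listing the local run patterns around position $i+1$ and checking them one by one.

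For the size bound, the sets $\mathcal C^{\rc}_{a,b}(n)$ over all $(a,b)\in\Z_5\times\Z_{2n+1}$ partition $\{0,1\}^n$.  By pigeonhole, some pair gives a set of size at least $2^n/(5(2n+1))$.  Then
\[
  n-\log_2|\mathcal C|\leq\log_2(5(2n+1))=\log_2 n+\log_2 10+\log_2\!\left(1+\tfrac1{2n}\right),
\]
and the last term is $o(1)$.
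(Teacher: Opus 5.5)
Your proposal is correct and matches the paper's proof: $\mathcal C^{\rc}_{a,b}(n)$ is the $\Amap_n$-preimage of the Lenz--Wachter-Zeh--Yaakobi run-syndrome class, Corollary~\ref{cor:transfer} transfers their one-error PAL2 guarantee, and averaging over the $5(2n+1)$ classes gives the size and redundancy bounds. If you ever write out your optional self-contained sketch, note three corrections: for $a\ne b$, a $b$-run of length $\ell\geq2$ is split as $b^2ab^{\ell-1}$ rather than lengthened; the singleton-count change ranges over all of $\{-2,-1,0,1,2\}$; and the value $0$ arises both from $a=b$ and from $a\ne b$ at $i=n-2$, so $r^{(1)}$ alone does not identify the error type.
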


\begin{proof}
Lenz, Wachter-Zeh, and Yaakobi prove the corresponding statement for one
binary palindromic duplication of length two in their Definitions~8--9,
Construction~2, and Theorem~2, using runs numbered from left to right
starting at one~\cite{LenzWachterZehYaakobi2019}.  The statistics
$r_{\rm alt}^{(1)}(x)$ and $C_{\rm alt}(x)$ are exactly the ordinary run
statistics of $\Amap_n(x)$, so $\mathcal C^{\rc}_{a,b}(n)$ is the inverse
image under $\Amap_n$ of their syndrome class.  Corollary~\ref{cor:transfer}
transfers the correction property.  The $5(2n+1)$ classes partition
$\{0,1\}^n$, and
averaging gives the cardinality bound.
\end{proof}

Corollary~\ref{cor:one-error} transfers a known construction; it is included
to make the operational meaning of the conjugacy explicit.  Generic
fixed-burst codes also correct one fixed-length duplication with
$\log_2 n+O(1)$ redundancy~\cite{SunLuZhangGe2025}.

\section{Two-error sphere geometry at even lengths}
\label{sec:edge-tree}

We call $D^t_{k,\chi}(x)$ the exact-$t$ descendant sphere of $x$ in channel
$\chi$.  We first determine the maximum exact-two descendant sphere for every
even duplication length.  We then return to length two, where a finer
edge-tree normal form provides the ancestry vocabulary used by the
raw-history obstruction in Appendix~\ref{app:construction-obstructions}
and motivates the construction discussion.  The packing and existence
bounds do not depend on this normal form.

\subsection{The exact maximum for every even length}

Throughout this subsection, let $|\Sigma|\geq2$, let $k\geq2$ be even,
and let $n\geq k$.
For $x\in\Sigma^n$, call $(i,j)$ a two-step history when
$0\leq i\leq n-k$ is the start of the first duplication and
$0\leq j\leq n$ is the start of the second duplication in the resulting
length-$(n+k)$ word.  Write
\[
  \Phi_x(i,j)
  \eqdef
  T^{\pal}_{j,k}\!\left(T^{\pal}_{i,k}(x)\right).
\]

\begin{theorem}[Maximum exact-two sphere at even duplication length]
\label{thm:even-k-two-sphere}
Let $|\Sigma|\geq2$, let $k\geq2$ be even, and let $n\geq k$.  Put
$a_+\eqdef\max\{a,0\}$, with the convention
$\binom{0}{2}=\binom{1}{2}=0$, and define
\[
  S^{\max}_{k,2}(n)
  \eqdef
  \max_{x\in\Sigma^n}\bigl|D^2_{k,\pal}(x)\bigr|.
\]
Then
\begin{equation}
  S^{\max}_{k,2}(n)
  =(n-k+1)(n+1)-\binom{(n-2k+2)_+}{2}.
  \label{eq:even-k-two-sphere}
\end{equation}
Equivalently, the right-hand side is
\[
  \binom{n-k+2}{2}
  +\sum_{i=0}^{n-k}\min\{2k-1,n-i\}.
\]
For $n\geq2k-2$, it simplifies to
\begin{equation}
  \frac{n^2+(2k+1)n-4k(k-1)}{2}.
  \label{eq:even-k-two-sphere-stable}
\end{equation}
Every alternating word on any two distinct symbols of $\Sigma$ attains the
maximum.
\end{theorem}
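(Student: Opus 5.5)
The plan is to prove the upper bound by a history count with explicit forced collisions, and the lower bound by showing that an alternating word realizes every remaining history as a distinct descendant. Every element of $D^2_{k,\pal}(x)$ equals $\Phi_x(i,j)$ for some two-step history $(i,j)$. There are $(n-k+1)(n+1)$ histories, since $i\in\{0,\ldots,n-k\}$ and $j\in\{0,\ldots,n\}$. The subtracted term should count commuting pairs. Suppose $0\le i<i'\le n-k$ with $i'-i\ge k$, so the original source blocks $[i,i+k-1]$ and $[i',i'+k-1]$ are disjoint. Duplicating at $i$ first and then at $i'+k$ (the shifted copy of $i'$) gives the same word as duplicating at $i'$ first and then at $i$, because the block at $i$ lies entirely before the insertion point $i'+k$ and so is untouched. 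This gives a collision $\Phi_x(i,i'+k)=\Phi_x(i',i)$ between two distinct histories.

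The correspondence $(i,j)\mapsto(j-k,i)$ on histories with $j\ge i+2k$ is injective and lands in histories with second start below the first, so these collisions pair up disjoint sets of histories. Hence $|D^2_{k,\pal}(x)|\le(n-k+1)(n+1)-\#\{(i,i'):0\le i,\ i+k\le i'\le n-k\}$. The pairs with gap at least $k$ among $n-k+1$ values number $\binom{(n-2k+2)_+}{2}$, which gives the upper half of \eqref{eq:even-k-two-sphere} for every $x$. Next I will check the two alternative closed forms by elementary algebra. For $n\ge2k-2$, expanding gives \eqref{eq:even-k-two-sphere-stable}. For the sum form, I will organize the surviving histories by first start $i$: retain the $n-i+1$ second starts $j<i+2k$ that do not commute past, together with the canonical representatives.

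For the lower bound, take $x$ alternating on symbols $a\ne b$. When $k$ is even, each source block is $abab\cdots ab$ or $baba\cdots ba$. Its reversal starts with the last symbol of the block, so a single duplication creates exactly one length-two run at the junction, and otherwise the word stays alternating. The plan is a decoding argument: from $y=\Phi_x(i,j)$, reconstruct the history up to the commuting equivalence.

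I will split by the position of the second source relative to the first inserted copy. In the far case ($j\le i-k$, or $j\ge i+2k$, or more generally the second source avoids the neighbourhood of the first defect), $y$ has two separated defects whose positions determine $\{i,i'\}$ as an unordered pair, and this pair is exactly the collision class. In the near case ($j$ within a window of length $\min\{2k-1,n-i\}$ around the first insertion), the second source overlaps the defect or the inserted copy. Here I expect the reversed copy to carry the defect to a mirrored position, producing a characteristic pattern: two defects at a distance determined by $j-i$, or a length-three run when the duplicated block itself ends at the defect.

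The main obstacle is this near-window injectivity. I must show that distinct near histories with different $i$, and near versus far histories, never coincide, using only the run-length profile of $y$ and the parity of defect positions. My first attempt will be to define the run-length profile of $\Amap$-free words (lengths of maximal constant runs) as an invariant and tabulate it as a function of $(i,j-i)$. I will then verify that the pair (first defect position, defect offset pattern) is injective. The counts $\sum_i\min\{2k-1,n-i\}+\binom{n-k+2}{2}$ should then appear as near histories plus far classes, matching the upper bound.
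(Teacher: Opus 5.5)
Your upper bound is correct and is the paper's argument: $(n-k+1)(n+1)$ ordered histories, the forced collisions $\Phi_x(a,b+k)=\Phi_x(b,a)$ for $b-a\ge k$, and disjoint pairs, because left members have $j-i\ge2k$ and right members have $j-i\le-k$. There is one minor slip. For fixed $i$ the number of retained second starts $j<i+2k$ is $\min\{i+2k,n+1\}=(i+1)+\min\{2k-1,n-i\}$, not $n-i+1$; summing that split over $i$ gives the displayed sum form.

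The genuine gap is in the attainment half, which rests on a false premise. On an alternating source with $k$ even, a duplication at $i<n-k$ creates \emph{two} equality edges, not one. Besides the left join at $i+k-1$, the copy $v^R$ ends with $x_i$ and the suffix begins with $x_{i+k}=x_i$, so a second equal pair appears at the right join $i+2k-1$; it is absent only when $i=n-k$. This has several consequences.
\begin{itemize}
\item A two-error descendant generally carries four equality edges. Your far-case claim that ``defect positions determine $\{i,i'\}$'' therefore needs an argument about which edges belong to which operation; already $d=j-i=-k$ gives four equally spaced edges.
\item The right-boundary histories, where a join is missing, must be treated separately or made uniform. The paper appends $u_n$ and uses injectivity of $z\mapsto zu_n$.
\item Your near-window predictions are wrong. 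If the second block's last coordinate $b=j+k-1$ is one of the two join edges ($d=0$ or $d=k$), the descendant has only two equality edges in total, with gap vectors $(2k)$ and $(k)$. Meanwhile $d=1$ creates a run of length four, not three.
\item The range $-k<d\le0$, where the second block overlaps the first original block and has no commuting partner, is left to an unspecified ``more generally''.
\end{itemize}

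More fundamentally, the step you flag as the main obstacle is the entire content of the lower bound, and the proposal leaves it as a plan. That step is showing that, on the alternating word, no two histories collide except the forced commuting pairs. You need an explicit invariant and a complete case table. The paper does this as follows:
\begin{itemize}
\item It encodes a binary word by its first symbol and its equality-edge set.
\item It proves an update rule for one PAL-$k$ step. The right join survives iff $|E\cap\{j,\ldots,b\}|$ is even, which is where the evenness of $k$ enters.
\item It computes the gap vector and minimum edge in six regimes of $d$.
\item It checks that within a regime the gap vector fixes $d$ and the minimum fixes $i$. Across regimes the only coincidence is $(k,-d,k)=(k,d'-k,k)$ with $d'=k-d$ and $d\le-k$, which is exactly the commuting pair.
\end{itemize}
Without such a classification you only have $|D^2_{k,\pal}(u)|\le(n-k+1)(n+1)-\binom{(n-2k+2)_+}{2}$ for the alternating word, not equality.
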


The upper bound follows from collisions that are forced for every source.
Attainment requires showing that an alternating source has no other
collisions.  We separate that argument into reconstruction, one-step edge
update, signature computation, and collision classification.

The forced-collision upper bound is alphabet-independent.  For attainment,
fix two distinct symbols of $\Sigma$ and denote them by $0$ and $1$.
Palindromic duplication preserves this two-symbol subalphabet, so every
descendant of the alternating source remains binary and the following
equality-edge representation applies.

For a nonempty binary word $v=v_0\cdots v_{L-1}$, define its equality-edge
set by
\[
  \operatorname{Eq}(v)
  \eqdef\{h:0\leq h<L-1,\ v_h=v_{h+1}\}.
\]

\begin{lemma}[Reconstruction from equality edges]
\label{lem:equality-edge-reconstruction}
Let $L\geq1$, and let $y,y'\in\{0,1\}^L$ have the same first symbol.  Then
$\operatorname{Eq}(y)=\operatorname{Eq}(y')$ implies $y=y'$.
\end{lemma}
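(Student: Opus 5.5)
The plan is a direct induction on the coordinate index, showing $y_h=y'_h$ for every $0\leq h<L$.

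The base case $h=0$ is the hypothesis that $y$ and $y'$ have the same first symbol.

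For the inductive step, suppose $y_h=y'_h$ for some $h<L-1$. Over the binary alphabet, the next symbol is determined by the current one and one bit of edge information:
\[
  y_{h+1}=
  \begin{cases}
    y_h, & h\in\operatorname{Eq}(y),\\
    1-y_h, & h\notin\operatorname{Eq}(y).
  \end{cases}
\]
The same rule holds for $y'$ with $\operatorname{Eq}(y')$ in place of $\operatorname{Eq}(y)$. Since the two edge sets coincide and $y_h=y'_h$, we get $y_{h+1}=y'_{h+1}$.

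Equivalently, one can write this in closed form. For each $h$, the symbol $y_h$ equals $y_0$ complemented once for each index $g<h$ with $g\notin\operatorname{Eq}(y)$. In other words, $y_h=y_0$ exactly when $h-|\operatorname{Eq}(y)\cap\{0,\ldots,h-1\}|$ is even. This expresses $y$ entirely in terms of $y_0$ and $\operatorname{Eq}(y)$.

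The only point that needs care is that binarity is essential. For $q\geq3$, an inequality edge would not determine the next symbol, so the argument uses the fact, noted just before the lemma, that descendants of the alternating source stay inside a two-symbol subalphabet. The case $L=1$ is trivial, since both edge sets are empty and the words agree at their single coordinate.
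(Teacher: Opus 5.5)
Your proposal is correct and is essentially the paper's proof: both start from the common first symbol and use the fact that, over a binary alphabet, membership of $h$ in the equality-edge set decides whether the next symbol repeats the current one or switches to the other symbol. Your closed-form parity expression and your remark that binarity is essential are accurate additions.
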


\begin{proof}
Starting from the common first symbol, the equality-edge set determines each
subsequent symbol recursively: at edge $h$ the next symbol repeats the current
one exactly when $h\in\operatorname{Eq}(y)$, and otherwise it is the other
binary symbol.  Thus the two words agree at every coordinate.
\end{proof}

\begin{lemma}[Equality-edge update under one PAL duplication]
\label{lem:pal-equality-edge-update}
Let $y\in\{0,1\}^L$, let $k\geq2$ be even, and let
$0\leq j\leq L-k-1$.  Put $z=T^{\pal}_{j,k}(y)$,
$E=\operatorname{Eq}(y)$, and $b=j+k-1$.  Then
\begin{equation}
\begin{aligned}
  \operatorname{Eq}(z)
  ={}&(E\cap\{h:h<b\})
  \cup\{b\}
  \cup\{2b-h:h\in E\cap\{j,\ldots,b-1\}\}\\
  &{}\cup\{h+k:h\in E,\ h\geq b+1\}
  \cup
  \begin{cases}
    \{b+k\},&y_j=y_{j+k},\\
    \emptyset,&y_j\neq y_{j+k}.
  \end{cases}
  \label{eq:pal-equality-edge-update}
\end{aligned}
\end{equation}
Moreover,
\begin{equation}
  y_j=y_{j+k}
  \quad\Longleftrightarrow\quad
  |E\cap\{j,\ldots,b\}|\ \text{is even}.
  \label{eq:pal-terminal-parity}
\end{equation}
\end{lemma}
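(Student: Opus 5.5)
The plan is to compute $\operatorname{Eq}(z)$ edge by edge by writing $z$ explicitly. Since $z=T^{\pal}_{j,k}(y)$, we have $z_h=y_h$ for $h\le b$. The inserted block gives
\[
  z_{b+1+s}=y_{b-s},\qquad 0\le s\le k-1,
\]
which is the formula $z_{b+m}=y_{b+1-m}$ for $1\le m\le k$. The suffix gives $z_{h+k}=y_h$ for $h\ge b+1$. The edges of $z$ then split into five disjoint families, indexed by $h<b$, $h=b$, $b<h<b+k$, $h=b+k$ and $h>b+k$. We check each family against the corresponding term of \eqref{eq:pal-equality-edge-update}.

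For $h<b$ both endpoints lie in the unchanged prefix, so $h\in\operatorname{Eq}(z)$ iff $h\in E$. For $h=b$ we have $z_b=y_b$ and $z_{b+1}=y_b$ (the first inserted symbol), so $b$ is always an equality edge. For $b<h<b+k$, write $h=b+m$ with $1\le m\le k-1$. Then
\[
  z_h=y_{b+1-m},\qquad z_{h+1}=y_{b-m}.
\]
So $h$ is an equality edge iff $b-m\in E$. As $m$ ranges over $1,\dots,k-1$, the index $b-m$ ranges over $\{j,\dots,b-1\}$, and $h=2b-(b-m)$; this is the reflected family. For $h>b+k$, both endpoints are shifted suffix symbols, $z_h=y_{h-k}$ and $z_{h+1}=y_{h-k+1}$ with $h-k\ge b+1$, which gives the shifted family. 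The junction edge $h=b+k$ compares $z_{b+k}=y_j$ (the last inserted symbol) with $z_{b+k+1}=y_{b+1}=y_{j+k}$. The hypothesis $j\le L-k-1$ guarantees that $y_{j+k}$ exists. Hence $b+k$ is an equality edge exactly when $y_j=y_{j+k}$.

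For \eqref{eq:pal-terminal-parity}, the binary walk from $y_j$ to $y_{j+k}$ crosses the $k$ edges $j,\dots,j+k-1=b$. Each non-equality edge flips the symbol. So $y_j=y_{j+k}$ iff the number of non-equality edges among these $k$ edges is even. Since $k$ is even, that number has the same parity as $|E\cap\{j,\dots,b\}|$. This is the only place where the evenness of $k$ is used.

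There is no real obstacle here. The only care needed is with index bookkeeping: the reflection $h\mapsto 2b-h$ must map $\{j,\dots,b-1\}$ exactly onto $\{b+1,\dots,b+k-1\}$, and the five families must exhaust $\{0,\dots,L+k-2\}$ without overlap.
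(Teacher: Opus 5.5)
Your proposal is correct and follows the paper's proof: you use the same description of $z$ as unchanged prefix, reversed block and shifted suffix, the same five-way classification of edges, and the same parity count of bit flips across the $k$ edges $j,\ldots,b$. You simply write out the index bookkeeping, $z_{b+m}=y_{b+1-m}$ and the use of $j\le L-k-1$ to guarantee that $y_{j+k}$ exists, more explicitly than the paper does.
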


\begin{proof}
The coordinates of $z$ are those of $y$ up to position $b$, followed by the
reverse of $y_j\cdots y_b$, followed by the suffix beginning with $y_{b+1}$.
Thus equality edges before $b$ keep their coordinates, the left join at $b$
is always an equality edge, and an equality edge $h\in\{j,\ldots,b-1\}$
inside the duplicated factor reappears in the reversed copy at $2b-h$.
Equality edges strictly after the destroyed edge $b$ shift by $k$.  The right
join, at $b+k$, compares $y_j$ with $y_{b+1}=y_{j+k}$.  These five disjoint
classes give \eqref{eq:pal-equality-edge-update}.

Moving from $y_j$ to $y_{j+k}$ crosses $k$ edges, of which
$k-|E\cap\{j,\ldots,b\}|$ change the bit.  The endpoint bits are equal
exactly when this number is even.  Since $k$ is even, this is equivalent to
\eqref{eq:pal-terminal-parity}.
\end{proof}

\begin{lemma}[Six-regime signatures on an alternating source]
\label{lem:alternating-six-regime-signatures}
Let $(u_s)_{s\geq0}$ be the binary alternating sequence, and put
$u=u_0\cdots u_{n-1}$.  Fix a first start $0\leq i\leq n-k$, put
$w=T^{\pal}_{i,k}(u)$, and let $0\leq j\leq n$ be the dynamic start of the
second duplication in the length-$(n+k)$ word $w$.  Define
\[
  z=\Phi_u(i,j),\qquad \widehat z=zu_n,\qquad d=j-i.
\]
The equality-edge positions relative to $i$, namely
$\{h-i:h\in\operatorname{Eq}(\widehat z)\}$, are
\begin{equation}
\begin{cases}
 \{d+k-1,d+2k-1,2k-1,3k-1\},&d<0,\\
 \{k-1,3k-1\},&d=0,\\
 \{k-1,k-1+d,k-1+2d,3k-1\},&0<d<k,\\
 \{k-1,2k-1\},&d=k,\\
 \{k-1,2k-1,d+k-1,2d-1\},&k<d<2k,\\
 \{k-1,2k-1,d+k-1,d+2k-1\},&d\geq2k.
\end{cases}
\label{eq:alternating-unsorted-signatures}
\end{equation}
Consequently, if $e_1<\cdots<e_M$ are the equality-edge coordinates and
$(e_2-e_1,\ldots,e_M-e_{M-1})$ is their consecutive-gap vector, then
\begin{center}
\begin{tabular}{c@{\qquad}c@{\qquad}c}
\toprule
range of $d$ & consecutive gaps in $\operatorname{Eq}(\widehat z)$
& $\min\operatorname{Eq}(\widehat z)$\\
\midrule
$d<0$ & $(k,-d,k)$ & $i+d+k-1$\\
$d=0$ & $(2k)$ & $i+k-1$\\
$0<d<k$ & $(d,d,2(k-d))$ & $i+k-1$\\
$d=k$ & $(k)$ & $i+k-1$\\
$k<d<2k$ & $(k,d-k,d-k)$ & $i+k-1$\\
$d\geq2k$ & $(k,d-k,k)$ & $i+k-1$\\
\bottomrule
\end{tabular}
\end{center}
\end{lemma}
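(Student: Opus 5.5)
The plan is to compute $\operatorname{Eq}(\widehat z)$ by applying Lemma~\ref{lem:pal-equality-edge-update} twice, starting from the empty edge set of an alternating word.

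The role of the appended symbol $u_n$ comes first.  Put $\widehat u=uu_n$; this is the alternating word of length $n+1$, so $\operatorname{Eq}(\widehat u)=\varnothing$.  Both duplications act on coordinates strictly before the final symbol, so the suffix is carried along unchanged.  Hence $\widehat z=T^{\pal}_{j,k}(T^{\pal}_{i,k}(\widehat u))$.  The appended symbol also makes every dynamic start admissible in Lemma~\ref{lem:pal-equality-edge-update}.  The first step has $L=n+1$ and $i\leq n-k=L-k-1$.  The second step has $L=n+k+1$ and $j\leq n=L-k-1$.  Without $u_n$, the choice $j=n$ (duplicating the last $k$ symbols) would lie outside the lemma's range, because the right join would not exist.

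\emph{First step.}  Apply the lemma with $E=\varnothing$ and $b=i+k-1$.  The left join contributes $b$.  The parity criterion \eqref{eq:pal-terminal-parity} sees a count of $0$, which is even, so the right join $b+k$ is also present.  Thus $\operatorname{Eq}(T^{\pal}_{i,k}(\widehat u))=\{i+k-1,\,i+2k-1\}$.

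\emph{Second step.}  Apply the lemma with $E'=\{i+k-1,i+2k-1\}$ and $b'=j+k-1=i+d+k-1$.  For each regime of $d$, I will sort each element of $E'$ into one of three classes:
\begin{itemize}
  \item elements $<b'$ are kept, and are additionally reflected to $2b'-h$ when $h\geq j$;
  \item an element equal to $b'$ is destroyed;
  \item elements $\geq b'+1$ are shifted by $k$.
\end{itemize}
I then add the left join $b'$, and add the right join $b'+k$ exactly when $|E'\cap\{j,\ldots,b'\}|$ is even.  Here is a sample of the bookkeeping.
\begin{itemize}
  \item $d<0$: both elements exceed $b'$ and shift to $i+2k-1$ and $i+3k-1$.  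The parity count is $0$, so $b'+k$ appears, and the result is $\{d+k-1,d+2k-1,2k-1,3k-1\}$ relative to $i$.
  \item $d=0$: the edge $i+k-1=b'$ is destroyed, and the count is $1$, so $b'+k$ is absent.  The result is $\{k-1,3k-1\}$.
  \item $0<d<k$: the edge $i+k-1$ lies in $[j,b'-1]$, so it is kept and reflected to $i+k-1+2d$.  The count is odd.
  \item $d=k$: the edge $i+k-1=j-1$ is kept but not reflected, $i+2k-1=b'$ is destroyed, and the count is odd.
\end{itemize}
The two remaining regimes, $k<d<2k$ and $d\geq2k$, are handled the same way.  In the first, $i+2k-1$ falls inside the copied window and reflects to $i+2d-1$.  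In the second, both edges precede $j$.  In both, the count is $0$, so the right join is added.  The outcome in every regime is \eqref{eq:alternating-unsorted-signatures}.

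\emph{Gap table.}  Sorting each set gives the consecutive gaps and the minimum.  The only nontrivial comparisons are $d+k-1<2k-1$ when $d<0$, and $2k-1<d+k-1<2d-1$ when $k<d<2k$; both are immediate from the range of $d$.  I also need to check that the listed elements are distinct in each regime, so that the cardinalities $2$ and $4$ are as stated.

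I expect the main obstacle to be the boundary bookkeeping in the second application.  Specifically, I must decide correctly, at the regime boundaries $d=0$, $d=k$ and $d=2k$, whether an existing edge equals $b'$, equals $j-1$, or lies in $[j,b'-1]$.  I must also use the correct reflection index $2b'-h$, not an off-by-one variant.  To guard against errors there, I will verify each regime independently on a small explicit example, such as $k=2$ and $k=4$ with an alternating source.
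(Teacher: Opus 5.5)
Your plan follows the paper's proof. You append $u_n$ so that every start is admissible and obtain $\operatorname{Eq}(wu_n)=\{i+k-1,i+2k-1\}$; you derive this from Lemma~\ref{lem:pal-equality-edge-update} applied to the empty edge set of $uu_n$, whereas the paper reads it off directly. You then apply that lemma a second time with $b'=i+k-1+d$ across the six regimes. Your range checks are correct, and so is your bookkeeping for $d<0$, $d=0$, $0<d<k$, $d=k$ and $d\geq2k$. The regime $k<d<2k$, however, contains a false step.

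\textbf{The error in the regime $k<d<2k$.} There you assert that the parity count is $0$, so the terminal join $b'+k$ is added. But $j=i+d\leq i+2k-1\leq b'-1$, so the edge $i+2k-1$ lies in $\{j,\ldots,b'\}$; you note this yourself when you reflect it to $i+2d-1$. Hence $|E'\cap\{j,\ldots,b'\}|=1$. By \eqref{eq:pal-terminal-parity}, the terminal join $i+d+2k-1$ is therefore \emph{absent}.

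Taken literally, your bookkeeping produces the five-element set $\{k-1,2k-1,d+k-1,2d-1,d+2k-1\}$. This contradicts both \eqref{eq:alternating-unsorted-signatures} and the gap vector $(k,d-k,d-k)$. So your closing claim that every regime yields \eqref{eq:alternating-unsorted-signatures} does not follow from what you wrote. Concretely, take $k=2$, $n=10$, $i=0$, $j=3$. Then $\widehat z=011000010101010$ and $\operatorname{Eq}(\widehat z)=\{1,3,4,5\}$, with no equality edge at $d+2k-1=6$.

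The correct parity counts over the six regimes are $0,1,1,1,1,0$, not $0,1,1,1,0,0$. With that correction your argument is complete. One small additional point: when sorting the $d<0$ set, the comparison you need is $d+2k-1<2k-1$, not merely $d+k-1<2k-1$.
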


\begin{proof}
Extend the intermediate word by the fixed boundary symbol and put
\[
  \widetilde w\eqdef wu_n,
  \qquad
  E\eqdef\operatorname{Eq}(\widetilde w)
  =\{\alpha,\beta\},
  \qquad
  \alpha=i+k-1,\quad \beta=i+2k-1.
\]
The displayed equality for $E$ follows because the alternating pattern is
broken only at the two joins of the first reversed copy; it remains valid when
$i=n-k$ because the second join then uses the appended symbol $u_n$.

The start $j$ is measured in the dynamic word $w$, and its range gives
$j+k-1\leq n+k-1$.  Hence the appended symbol is never duplicated and
\[
  \widehat z=T^{\pal}_{j,k}(\widetilde w).
\]
In particular, when $j=n$, the right join compares
$\widetilde w_n$ with the defined boundary coordinate
$\widetilde w_{n+k}=u_n$.  The map $z\mapsto zu_n$ is injective, so this
boundary convention cannot identify two distinct endpoints.

Set $b=j+k-1=\alpha+d$.  We now apply
Lemma~\ref{lem:pal-equality-edge-update}.  For $d<0$, both $\alpha$ and
$\beta$ shift, and the terminal join is present.  For $d=0$, only $\beta$
shifts, while the terminal join is absent.  For $0<d<k$, $\alpha$ is
preserved and reflected, $\beta$ shifts, and the terminal join is absent.
For $d=k$, only the preserved edge $\alpha$ and the new join $b=\beta$
remain.  For $k<d<2k$, both old edges are preserved, $\beta$ is reflected,
and the terminal join is absent.  Finally, for $d\geq2k$, both old edges are
preserved and the terminal join is present.  Equivalently, the six values of
$|E\cap\{j,\ldots,b\}|$ are respectively $0,1,1,1,1,0$, so the assertions
about the terminal join also follow directly from
\eqref{eq:pal-terminal-parity}.  Substitution of $\alpha,\beta,b$ in these
six cases gives \eqref{eq:alternating-unsorted-signatures}; sorting each set
gives the gap vectors and minima in the table.
\end{proof}

For example, take $k=4$, $n=10$, the alternating source
$u=0101010101$, and the history $(i,j)=(1,3)$, so that $d=2$.
Directly applying the two duplications gives
\[
  w=01010010110101,
  \qquad
  z=010100110010110101.
\]
Here $u_n=0$ and
$\operatorname{Eq}(z0)=\{4,6,8,12\}$.  Its consecutive gaps are
$(2,2,4)$ and its minimum edge is $4$, exactly as prescribed by the row
$0<d<k$.  This calculation illustrates how the gap vector records $d$
while the minimum edge recovers the absolute start.

\begin{lemma}[Collision classification on an alternating source]
\label{lem:alternating-collision-classification}
For two valid histories $(i,j)$ and $(i',j')$ on the alternating source $u$,
\[
  \Phi_u(i,j)=\Phi_u(i',j')
\]
if and only if the histories are identical or, after interchanging them, there
are integers $r,d$ with $d\leq-k$ such that
\begin{equation}
  (i,j)=(r,r+d),
  \qquad
  (i',j')=(r+d,r+k).
  \label{eq:alternating-commuting-pair-form}
\end{equation}
\end{lemma}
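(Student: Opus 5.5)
The plan is to use the signature lemma (Lemma~\ref{lem:alternating-six-regime-signatures}) as a complete invariant. First, $\widehat z=\Phi_u(i,j)u_n$ begins with $u_0$ for every history. By Lemma~\ref{lem:equality-edge-reconstruction}, $\Phi_u(i,j)=\Phi_u(i',j')$ therefore holds if and only if $\operatorname{Eq}(\widehat z)=\operatorname{Eq}(\widehat z')$. The map $z\mapsto zu_n$ is injective, so the collision question reduces to comparing the six displayed edge sets, shifted by $i$ and $i'$ respectively. I will first check the ``if'' direction directly. For $(i,j)=(r,r+d)$ with $d\le -k$, the row $d<0$ gives the absolute edges $\{r+d+k-1,\,r+d+2k-1,\,r+2k-1,\,r+3k-1\}$. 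For $(i',j')=(r+d,r+k)$, the difference is $d'=k-d\ge 2k$, so the row $d'\ge2k$ gives $\{r+d+k-1,\,r+d+2k-1,\,r+2k-1,\,r+3k-1\}$, the same set. Validity of both histories follows from $d\le -k$: the first copy of the second history ends at $r+d+k-1\le r-1$, so the source block at $r$ in the original word sits at $r+k$ afterwards. This is exactly the commuting pair of disjoint duplications.

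For the ``only if'' direction, I will compare the invariants in the table: the number of edges, the gap vector, and the minimum edge. Two edges occur only in rows $d=0$ (gap $2k$) and $d=k$ (gap $k$). These gaps are distinct, and within each row the minimum $i+k-1$ determines $i$, hence the history. For four edges I compare gap vectors. The rows $0<d<k$ and $k<d<2k$ have two equal leading gaps or two equal trailing gaps, with values in $(0,k)$. Row $0<d<k$ has last gap $2(k-d)$, which equals $d$ only when $d=2k/3$; in that case the gap vector is constant $(d,d,d)$ with $d<k$, whereas rows $d<0$, $k<d<2k$, and $d\ge 2k$ all have first gap $k$. So the first gap separates row $0<d<k$ from the others. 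Among the rows whose first gap is $k$, row $k<d<2k$ has last gap $d-k<k$, while rows $d<0$ and $d\ge2k$ have last gap $k$, so those are separated too. Within any single row the gap vector determines $d$, through $-d$, $d$, $d-k$, or $d-k$ respectively, and the minimum determines $i$. Hence a collision between different histories can occur only between a row $d<0$ history and a row $d'\ge2k$ history.

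For that cross pair, I match gaps: $(k,-d,k)=(k,d'-k,k)$ gives $d'=k-d$, and $d'\ge 2k$ forces $d\le -k$. Matching minima gives $i+d+k-1=i'+k-1$, so $i'=i+d$. Writing $r=i$, this yields $(i',j')=(r+d,\,r+d+d')=(r+d,\,r+k)$, which is \eqref{eq:alternating-commuting-pair-form}. The main obstacle is the case bookkeeping in the four-edge comparison, especially the coincidences where $0<d<k$ produces equal gaps such as $d=2k/3$. I will handle these cleanly by using the first gap ($=k$ or $<k$) and then the last gap as sequential discriminators, as above. I also need to confirm that the boundary cases $j=n$ and $i=n-k$ are already covered by the signature lemma's appended-symbol convention.
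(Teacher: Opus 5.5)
Your proposal is correct and follows essentially the same route as the paper. You reduce endpoint equality to equality of equality-edge sets via the appended symbol and Lemma~\ref{lem:equality-edge-reconstruction}. You then separate the rows by edge count and by the first and last gaps, recover $d$ from the gap vector and $i$ from the minimum, and match the single cross-row coincidence $(k,-d,k)=(k,d'-k,k)$. The differences are cosmetic: you check the converse by computing both absolute edge sets instead of citing determination by minimum and gaps, and your aside about $d=2k/3$ is unnecessary, since the first gap alone already isolates the row $0<d<k$.
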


\begin{proof}
Appending the same symbol $u_n$ preserves equality of endpoints.  All
appended endpoints have the same first symbol and length, so
Lemma~\ref{lem:equality-edge-reconstruction} shows that two endpoints are
equal exactly when their equality-edge sets are equal.

We therefore compare the signatures in
Lemma~\ref{lem:alternating-six-regime-signatures}.  The cases $d=0$ and
$d=k$ have two equality edges, with respective gap vectors $(2k)$ and $(k)$;
they cannot collide with each other or with a four-edge case.  Within each
four-edge row, the gap vector determines $d$: use its middle coordinate in
the negative row, its first coordinate in the row $0<d<k$, and its second
coordinate in either row with $d>k$.

It remains to compare different four-edge rows.  The row $0<d<k$ begins with
a gap smaller than $k$, whereas each of the other three begins with $k$.
The row $k<d<2k$ ends with a gap smaller than $k$, whereas both the negative
row and the row $d\geq2k$ end with $k$.  Thus the only possible cross-row
equality is between
\[
  (k,-d,k),\qquad d<0,
  \quad\text{and}\quad
  (k,d'-k,k),\qquad d'\geq2k.
\]
These triples are equal exactly when
$d'=k-d$, and the condition $d'\geq2k$ is then equivalent to $d\leq-k$.

If two signatures come from the same row, equality of their gap vectors gives
the same $d$, and equality of their minima gives the same $i$; the histories
are identical.  In the sole cross-row case, equality of the minima gives
\[
  i+d+k-1=i'+k-1,
  \qquad\text{hence}\qquad i'=i+d.
\]
Together with $d'=k-d$, this is precisely
\eqref{eq:alternating-commuting-pair-form}.  Conversely, the two histories in
\eqref{eq:alternating-commuting-pair-form} have the same gap vector and the
same minimum edge by the signature table.  Since an ordered finite set is
determined by its minimum and consecutive gaps, they have the same
equality-edge set and hence the same endpoint.
\end{proof}

\begin{proof}[Proof of Theorem~\ref{thm:even-k-two-sphere}]
There are $H=(n-k+1)(n+1)$ ordered two-step histories.  Suppose that
$0\leq a<b\leq n-k$ and $b-a\geq k$.  The two length-$k$ source factors
starting at $a$ and $b$ are disjoint, and their duplications commute:
\begin{equation}
  \Phi_x(a,b+k)=\Phi_x(b,a).
  \label{eq:commuting-histories}
\end{equation}
Indeed, if $x=uvzws$, where $|u|=a$, $|v|=|w|=k$,
$|z|=b-a-k$, and $|s|=n-b-k$, then both sides are
$uvv^Rzww^Rs$.  These forced pairs of histories are pairwise disjoint.  In
the left member of such a pair, $j-i\geq2k$ and the original starts are
$(i,j-k)=(a,b)$; in the right member, $j-i\leq-k$ and the original starts
are $(j,i)=(a,b)$.  Their number is
\[
  P
  =\#\{(a,b):0\leq a<b\leq n-k,\ b-a\geq k\}
  =\binom{(n-2k+2)_+}{2}.
\]
Consequently, every exact-two descendant sphere has size at most $H-P$.

For attainability, choose two distinct symbols of $\Sigma$, denote them by
$0,1$, and take the alternating source $u=u_0\cdots u_{n-1}$ used above.
Lemma~\ref{lem:alternating-collision-classification} says that its history
fibers are singletons except for the pairs
\eqref{eq:alternating-commuting-pair-form}.  Writing $a=r+d$ and $b=r$
identifies these pairs bijectively with the $P$ choices in
\eqref{eq:commuting-histories}.  Thus there are exactly $P$ doubleton fibers
and no larger fibers, so
$|D^2_{k,\pal}(u)|=H-P$.  This proves
\eqref{eq:even-k-two-sphere} and the equality claim.

Splitting the displayed sum in the theorem where $n-i$ falls below $2k-1$
gives its equivalent form.  When $n\geq2k-2$, expanding $H-P$ gives
\eqref{eq:even-k-two-sphere-stable}.
\end{proof}

The following corollary shows that the first-error sphere of the same
alternating source has the maximum possible size $n-k+1$.  Since the zero-,
one-, and two-error outputs have different lengths, it also records the
corresponding at-most-two and reverse-complement statements.

\begin{corollary}[At-most-two balls and reverse-complement transfer]
\label{cor:rc-sphere}
For every even $k\geq2$ and $n\geq k$,
\[
  \max_{x\in\Sigma^n}\bigl|D^{\leq2}_{k,\pal}(x)\bigr|
  =1+(n-k+1)+(n-k+1)(n+1)
   -\binom{(n-2k+2)_+}{2}.
\]
The same exact-two and at-most-two maxima hold for reverse-complement
duplication.  In the binary channel they are attained by the constant
sources $0^n$ and $1^n$.
\end{corollary}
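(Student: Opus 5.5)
The plan is to treat the at-most-two ball as a disjoint union of three layers and maximize each layer separately, then check that one source attains all three maxima at once.

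By Lemma~\ref{lem:exact-at-most}'s length argument, the layers $D^0$, $D^1$ and $D^2$ consist of words of lengths $n$, $n+k$ and $n+2k$. They are therefore pairwise disjoint, and
\[
  \bigl|D^{\leq2}_{k,\pal}(x)\bigr|=1+\bigl|D^1_{k,\pal}(x)\bigr|+\bigl|D^2_{k,\pal}(x)\bigr|.
\]
The layers are bounded as follows.
\begin{itemize}
  \item The zero layer always has size one.
  \item The one layer has at most $n-k+1$ elements, since there are only $n-k+1$ legal starts.
  \item The two layer is at most $(n-k+1)(n+1)-\binom{(n-2k+2)_+}{2}$ by Theorem~\ref{thm:even-k-two-sphere}.
\end{itemize}
This gives the upper bound on the ball. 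For the matching lower bound it suffices to show that an alternating word $u$ on two distinct symbols attains all three layer maxima simultaneously.

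The two-error layer for $u$ is exactly the attainment statement of Theorem~\ref{thm:even-k-two-sphere}. For the one-error layer, I would reuse the equality-edge encoding. By the proof of Lemma~\ref{lem:alternating-six-regime-signatures}, the word $T^{\pal}_{i,k}(u)u_n$ has equality-edge set $\{i+k-1,\,i+2k-1\}$. Distinct $i$ therefore give distinct edge sets. All these words have the same length and first symbol, so Lemma~\ref{lem:equality-edge-reconstruction} shows the $n-k+1$ one-step descendants are distinct. Summing the three layers gives the displayed maximum for PAL.

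For the RC statements, apply Corollary~\ref{cor:transfer} layer by layer. Since $\Amap_m$ is a bijection on each length, it gives $|D^s_{k,\rc}(x)|=|D^s_{k,\pal}(\Amap_n(x))|$ for $s=0,1,2$. As $x$ ranges over $\Sigma^n$, so does $\Amap_n(x)$, so both the exact-two and at-most-two maxima coincide with the PAL values. The RC maximizers are the preimages $\Amap_n(u)$ of the alternating words.

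In the binary channel with $\overline a=1-a$, this preimage is explicit. $\Amap_n$ complements the odd coordinates, so it sends $0101\cdots$ to $0^n$ and $1010\cdots$ to $1^n$, and these constant words are the RC maximizers. The only step needing real care is distinctness of the one-step descendants of $u$, and the equality-edge computation above handles it directly.
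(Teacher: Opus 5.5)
Your proposal is correct and follows essentially the same route as the paper. You use the three length-separated layers, the trivial $n-k+1$ bound plus Theorem~\ref{thm:even-k-two-sphere}, and the alternating source attaining all layers at once; where you use the full edge set $\{i+k-1,i+2k-1\}$, the paper uses only its minimum $i+k-1$ to recover $i$, and both then finish with Corollary~\ref{cor:transfer} and the observation that $\Amap_n$ maps the two alternating words to $0^n$ and $1^n$.
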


\begin{proof}
No one-error sphere contains more than the $n-k+1$ possible operation
starts.  On an alternating source, the minimum equality edge of the extended
one-error descendant is the join at coordinate $i+k-1$; it therefore
identifies the start $i$, including at the right boundary.
Thus all these descendants are distinct.  This source simultaneously
attains the two-error maximum, proving the palindromic claim.  The
reverse-complement claim follows from Corollary~\ref{cor:transfer}.  In the
binary case,
$\Amap_n(0^n)$ and $\Amap_n(1^n)$ are alternating.
\end{proof}

For illustration, the stable formula in
Theorem~\ref{thm:even-k-two-sphere} gives
\begin{align*}
 k=4:&\qquad
 \max_x|D^2_{4,\pal}(x)|=\frac{n^2+9n-48}{2},
 && n\geq6,\\
 k=6:&\qquad
 \max_x|D^2_{6,\pal}(x)|=\frac{n^2+13n-120}{2},
 && n\geq10.
\end{align*}
These are special cases of the theorem for general even $k$, and the formulas
hold verbatim for the corresponding reverse-complement channels.  In the
same generality, the fixed-error converse in
Corollary~\ref{cor:fixed-number-packing} further gives, for binary codes
correcting two length-$k$ palindromic or reverse-complement duplications,
\[
  r_2(C_n)\geq2\log_2 n-(2k+3)-o(1).
\]

\subsection{The length-two edge-tree refinement}

We now specialize to the binary palindromic duplication channel with
duplication length two.  We abbreviate binary palindromic duplication of
length two as PAL2 and binary reverse-complement duplication of length two
as RC2.  Throughout this subsection, assume $n\geq2$.  The exact one-error
PAL2 sphere was determined
in~\cite{LenzWachterZehYaakobi2019}; here the edge labels retain additional
ancestry information about two unrestricted sequential errors.

For a binary word $z$ and $0\leq h\leq |z|-2$, an indexed adjacency
occurrence of $z$ is the object
\[
  e_h(z)\eqdef\bigl(z,h;(z_h,z_{h+1})\bigr).
\]
Thus occurrences at different starts remain distinct even when they carry
the same ordered pair of bits.

Fix a source $u\in\{0,1\}^n$.  We use the formal adjacency-label set
\[
  \cU_n\eqdef
  \{P_i:0\leq i\leq n-2\}
  \sqcup
  \{L_i,M_i:0\leq i\leq n-2\}
  \sqcup
  \{R_i:0\leq i\leq n-3\}.
\]
The label $P_i$ is assigned to the original occurrence
$e_i(u)$.  If this occurrence survives a duplication, it retains its
label $P_i$ even when its current start shifts.

Let $u^{(i)}=T^{\pal}_{i,2}(u)$, and write
$a=u_i$ and $b=u_{i+1}$.  Locally, $T^{\pal}_{i,2}$ replaces $ab$ by $abba$.
The two new internal adjacency occurrences receive the labels
\[
  L_i\longleftrightarrow
  e_{i+1}(u^{(i)})
  =\bigl(u^{(i)},i+1;(b,b)\bigr),
  \qquad
  M_i\longleftrightarrow
  e_{i+2}(u^{(i)})
  =\bigl(u^{(i)},i+2;(b,a)\bigr).
\]
When $0\leq i\leq n-3$, put $c=u_{i+2}$ and assign
\[
  R_i\longleftrightarrow
  e_{i+3}(u^{(i)})
  =\bigl(u^{(i)},i+3;(a,c)\bigr).
\]
These labels record the ancestry of adjacency occurrences.

\begin{lemma}[Two-step edge-tree normal form]
\label{lem:normal-form}
Every ordered history of two PAL2 duplications from a binary word of length
$n$ has one of the following canonical structural shapes:
\begin{enumerate}
  \item a multiset $\{P_i,P_j\}$ of two of the $n-1$ original adjacency
  labels, with repetition allowed;
  \item a parent--child pair $\{P_i,L_i\}$, $\{P_i,M_i\}$, or
  $\{P_i,R_i\}$, subject to the ranges above.
\end{enumerate}
The endpoint of the history is determined by its shape.  Consequently, the
number of structural shapes is
\begin{equation}
  S_{2,2}^{\max}(n)= \binom n2+2(n-1)+(n-2)
  =\frac{n^2+5n-8}{2}.
  \label{eq:two-step-shape-count}
\end{equation}
\end{lemma}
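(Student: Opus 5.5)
The plan is to define an explicit map from ordered histories to shapes and then check three things: the map is well defined, its image is exactly the listed family, and the endpoint factors through it. A two-step PAL2 history is a pair $(i,j)$ with $0\leq i\leq n-2$ and $0\leq j\leq n$. The second duplication copies the adjacency occurrence $e_j(u^{(i)})$ of the length-$(n+2)$ word $u^{(i)}$. The first step is to list all $n+1$ occurrences of $u^{(i)}$ with their ancestry labels. Write the local change as $abc\mapsto abbac$. Then:
\begin{itemize}
  \item $e_h(u^{(i)})$ carries $P_h$ for $h\leq i$;
  \item $e_{i+1},e_{i+2}$ carry $L_i,M_i$;
  \item $e_{i+3}$ carries $R_i$ when $i\leq n-3$;
  \item $e_h$ carries $P_{h-2}$ for $h\geq i+4$, or for $h\geq i+3$ when $i=n-2$.
\end{itemize}
The only original occurrence that does not survive is $P_{i+1}$ (the pair $bc$ is broken), and $R_i=(a,c)$ takes its place. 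This gives a bijection between the choices of $j$ and the labels alive after the first step. We assign to $(i,j)$ the shape $\{P_i,\lambda\}$, where $\lambda$ is the label of $e_j(u^{(i)})$.

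The second step shows that the endpoint depends only on the shape. If $\lambda\in\{L_i,M_i,R_i\}$, the shape determines $i$ and the child occurrence, hence $j$, so the history is recovered and there is nothing to prove. Suppose instead $\lambda=P_m$.
\begin{itemize}
  \item If $m=i$, the history is again unique.
  \item If $m=i+1$, this order is impossible because $P_{i+1}$ was destroyed. The multiset $\{P_i,P_{i+1}\}$ therefore arises only from the order "$i+1$ first, then $P_i$". Duplicating at $i+1$ leaves $P_i$ at position $i$, so this history is unique.
  \item If $m=i-1$, symmetrically, the reverse order is the impossible one.
  \item If $|m-i|\geq2$, the two length-two source factors are disjoint. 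Both orders exist, and they commute exactly as in \eqref{eq:commuting-histories} with $k=2$. Both yield $uvv^Rzww^Rs$, so they give the same endpoint.
\end{itemize}
Hence every history maps to a shape in items 1 and 2, and the endpoint is a function of the shape. Conversely, every listed shape is realized, by the constructions just described.

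The third step is the count. Multisets of two labels among the $n-1$ labels $P_i$ number $\binom{n-1}{2}+(n-1)=\binom n2$. The parent–child shapes add $2(n-1)+(n-2)$. Summing gives \eqref{eq:two-step-shape-count}. As a consistency check, this agrees with $(n-1)(n+1)-\binom{n-2}{2}$, the value of \eqref{eq:even-k-two-sphere} at $k=2$. That value counts histories minus commuting pairs, and those commuting pairs are exactly the doubleton fibers above with $|m-i|\geq2$. This matches the notation $S^{\max}_{2,2}(n)$ via Theorem~\ref{thm:even-k-two-sphere}.

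The main obstacle I expect is the bookkeeping at the boundaries and in the adjacent cases $m=i\pm1$. One must verify that $R_i$ is absent when $i=n-2$ and that the occurrence indices then shift correctly. One must also check that no label is double counted, which means confirming that the surviving $P$-occurrences together with $L_i,M_i,R_i$ fill all $n+1$ slots of $u^{(i)}$ without overlap. I would verify these by writing $u^{(i)}$ explicitly around positions $i,\dots,i+3$ and checking each case directly.
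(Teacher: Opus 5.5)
Your proposal is correct and follows essentially the same route as the paper: you label the second occurrence in $u^{(i)}$ as $P_h$, $L_i$, $M_i$, $R_i$ or $P_{h-2}$, note that only $P_{i+1}$ is destroyed, and show that the endpoint depends only on the shape; your explicit fiber check (singleton fibers except the commuting pairs with $|m-i|\geq2$, handled by \eqref{eq:commuting-histories} at $k=2$) makes the paper's ``canonical right-to-left realization'' argument more precise. One cosmetic point is that your clause ``$h\geq i+3$ when $i=n-2$'' is vacuous, because then $i+3=n+1$ exceeds the largest start $n$ in $u^{(i)}$.
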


\begin{proof}
After the first operation $T^{\pal}_{i,2}$, let
$h\in\{0,\ldots,n\}$ be the start of the second operation.  If $h\leq i$,
the second occurrence has the surviving original label $P_h$; the starts
$h=i+1,i+2,i+3$ have labels $L_i,M_i,R_i$, respectively, whenever they
exist; and if $h\geq i+4$, the second occurrence has the surviving original
label $P_{h-2}$.  These cases exhaust all current adjacencies.

The original label $P_{i+1}$ is destroyed when $P_i$ is used first, but the
shape $\{P_i,P_{i+1}\}$ is realized by applying the right original
occurrence first.  More generally, a multiset of two original labels has a
canonical realization: process distinct labels from right to left, or reuse
the same surviving occurrence in the repeated-label case.  Each operation
inserts its reversed pair immediately after the labeled original
occurrence, so this realization, and hence its endpoint, depends only on
the multiset.  A parent--child label fixes the first original occurrence and
the second current occurrence, so it also fixes the endpoint.  Thus every
ordered history has one of the asserted shapes and each shape determines a
single descendant.

There are $\binom n2$ size-two multisets drawn from $n-1$ original labels,
$2(n-1)$ children of types $L,M$, and $n-2$ children of type $R$.
\end{proof}

For $k=2$, Theorem~\ref{thm:even-k-two-sphere} specializes to the same
value in~\eqref{eq:two-step-shape-count}.  Thus
Lemma~\ref{lem:normal-form} has exactly the sharp number of structural
shapes: different shapes remain distinct on every alternating source, while
other sources may merge them.  The edge-tree description retains ancestry
information that is not visible in the sphere cardinality alone.

\begin{remark}[Maximum spheres and packing]
The maximum sphere grows as $\Theta(n^2)$, but descendant spheres are highly
nonuniform.  Thus the maximum-sphere formula alone does not imply a
$2\log_2 n-O(1)$ redundancy lower bound.  The channel-specific converse in
Corollary~\ref{cor:fixed-number-packing} instead constructs quadratic
subfamilies for all but an exponentially small fraction of binary sources.
\end{remark}

\section{Inverse lists and an elementary even-length existence bound}
\label{sec:existence}

To obtain two-error codes by coloring, we bound how many sources can share a
descendant.  Fix an even $k\geq2$.  An inverse PAL-$k$ step searches for a factor
$vv^R$ and deletes the final $v^R$, leaving the source factor $v$.
Enumerating these inverse steps gives a parent-list bound that remains valid
when the two forward duplications overlap or are nested.

\begin{lemma}[Inverse-parent bound]
\label{lem:inverse-parent}
Let $n\geq k$.  A $q$-ary word $y$ of length $n+k$ has at most $n-k+1$
distinct length-$n$ PAL-$k$ parents at distance one.  A $q$-ary word $y$ of
length $n+2k$ has at most $(n+1)(n-k+1)$ distinct length-$n$ PAL-$k$
ancestors at distance two.
\end{lemma}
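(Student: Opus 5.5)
The plan is to bound parents by counting the possible positions of the deleted inverse step, and then to obtain the distance-two bound by composing two one-step bounds.

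For the one-step claim, let $y=T^{\pal}_{i,k}(x)$ with $x\in\Sigma^n$ and $0\leq i\leq n-k$. Then $y=uv\,v^Rw$ with $|u|=i$ and $|v|=k$. The parent is recovered by deleting the $k$ symbols of $y$ at coordinates $i+k,\ldots,i+2k-1$. Consequently the parent is a function of $i$ alone, namely
\[
  x=y_0\cdots y_{i+k-1}\,y_{i+2k}\cdots y_{n+k-1}.
\]
The valid starts form the set $\{0,\ldots,n-k\}$, so there are at most $n-k+1$ parents. Some of these deletions may not be legal inverse steps, because they require $y_{i+k}\cdots y_{i+2k-1}=(y_i\cdots y_{i+k-1})^R$. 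Such deletions only reduce the count, which is why the statement says ``at most''. No alphabet or parity hypothesis is needed here.

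For the distance-two claim, I would argue as follows. If $x$ is an ancestor at distance two, there is an intermediate word $x'$ of length $n+k$ with $x'=T^{\pal}_{i,k}(x)$ and $y=T^{\pal}_{j,k}(x')$. Here $i$ ranges over $0\leq i\leq n-k$, and $j$ is the dynamic start, which ranges over $0\leq j\leq n$. Applying the one-step argument to $y$ at length $n+k$ shows that $x'$ is determined by $j$. It then gives at most $(n+k)-k+1=n+1$ choices of $x'$. For each such $x'$, applying the argument again at length $n$ shows that $x$ is determined by $i$, which gives at most $n-k+1$ choices. Equivalently, the map $(i,j)\mapsto x$ is a function on the set of $(n-k+1)(n+1)$ formal pairs, and every ancestor lies in its image. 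This yields the bound $(n+1)(n-k+1)$.

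Overlap and nesting cause no difficulty. The deletion at the second stage is determined by the dynamic coordinate $j$ in $x'$, whatever symbols it touches. The deletion at the first stage is then determined by $i$ in the recovered $x'$.

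The only delicate point is the index bookkeeping. The inverse step must be parameterized by the forward start, so that the ranges match those in the definition of $T^{\pal}_{i,k}$, which requires $i\leq|x|-k$. I must also check that the intermediate length $n+k$ produces exactly $n+1$ admissible second starts. I would not attempt to remove the overcount from coinciding parents, since the statement only asks for an upper bound.
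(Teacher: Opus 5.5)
Your proposal is correct and matches the paper's proof. The paper also counts the $n-k+1$ length-$2k$ windows, which are your forward starts $i$, whose final $k$ symbols a last inverse step can delete. For two errors it likewise undoes the chronologically last duplication first ($n+1$ choices in a word of length $n+2k$) and then the earlier one ($n-k+1$ choices), bounding the number of distinct ancestors by the number of ordered inverse traces, exactly as your map $(i,j)\mapsto x$ does.
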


\begin{proof}
A word of length $n+k$ has $n-k+1$ length-$2k$ windows.  Every valid last
inverse step must choose a window $vv^R$ and delete its final $k$ symbols,
so there are at most $n-k+1$ one-error parents.

For an output of length $n+2k$, first undo the chronologically last
duplication.  There are at most $n+1$ length-$2k$ windows.  The intermediate
word has length $n+k$, and hence at most $n-k+1$ choices for the second inverse
step.  Every valid two-error history has such a reverse-chronological
description, including histories in which the second duplication acts on or
next to symbols inserted by the first.  Thus the number of ordered inverse
traces, and therefore the number of distinct parents, is at most
$(n+1)(n-k+1)$.
\end{proof}

We now return the forward-sphere and inverse-parent bounds to the coding
problem.  Work with the exact-two PAL-$k$ confusability graph of
Section~\ref{sec:model}.  The number of descendants of a source times the
number of other parents of a descendant bounds the vertex degree.
Lemma~\ref{lem:exact-at-most} then turns an independent color class of this
graph into a code correcting at most two duplications.

\begin{theorem}[Elementary greedy existence bound at even length]
\label{thm:greedy-code}
Let $q\geq2$, let $k\geq2$ be even, and let $n\geq k$.  Put
$N_1=n-k+1$, $H_2=(n+1)N_1$, and
$S_2=S^{\max}_{k,2}(n)$.  There exists a $q$-ary code correcting at most two
PAL-$k$ duplications with cardinality at least
\begin{equation}
  \frac{q^n}{1+S_2(H_2-1)}.
  \label{eq:greedy-size}
\end{equation}
Consequently, its redundancy is at most
\begin{equation}
  \log_q\!\left(1+S_2(H_2-1)\right)
  =4\log_q n+O_{q,k}(1).
  \label{eq:greedy-redundancy}
\end{equation}
The same statement holds for at most two reverse-complement duplications of
length $k$.
\end{theorem}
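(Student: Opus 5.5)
The plan is a greedy (Gilbert--Varshamov type) argument on the exact-two PAL-$k$ confusability graph $\mathcal G$ on $\Sigma^n$. Two distinct words are adjacent in $\mathcal G$ when their exact-two descendant spheres meet. Any independent set of $\mathcal G$ corrects exactly two duplications, and by Lemma~\ref{lem:exact-at-most} it therefore corrects at most two. A graph of maximum degree $\Delta$ contains an independent set of size at least $|V|/(1+\Delta)$, obtained either greedily or by a proper $(\Delta+1)$-coloring and pigeonhole. It therefore suffices to show $\Delta\leq S_2(H_2-1)$.

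To bound the degree, fix $x\in\Sigma^n$. Every neighbour $x'$ shares some descendant $y\in D^2_{k,\pal}(x)$, so $x'$ is a length-$n$ distance-two PAL-$k$ ancestor of $y$ different from $x$. The number of choices of $y$ is $|D^2_{k,\pal}(x)|\leq S_2$, by the definition of $S_2=S^{\max}_{k,2}(n)$ in Theorem~\ref{thm:even-k-two-sphere}; only its definition as a maximum is needed here, not its exact value. For each $y$, Lemma~\ref{lem:inverse-parent} gives at most $(n+1)(n-k+1)=H_2$ distance-two ancestors. One of these is $x$ itself, so at most $H_2-1$ of them are other words. Summing over $y$ gives $\deg(x)\leq S_2(H_2-1)$, and hence a code of size at least $q^n/(1+S_2(H_2-1))$. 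That is \eqref{eq:greedy-size}.

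For the redundancy, $r_q(C)\leq\log_q(1+S_2(H_2-1))$ follows immediately. By \eqref{eq:even-k-two-sphere}, $S_2\leq(n-k+1)(n+1)=O(n^2)$, and $H_2=O(n^2)$, so the argument of the logarithm is at most $c_k n^4$ for a constant $c_k$ depending only on $k$. This gives $4\log_q n+O_{q,k}(1)$. The matching lower-order form needs only an upper bound, since the redundancy claim is an upper bound.

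For the reverse-complement statement, I apply Corollary~\ref{cor:transfer}: $\Amap_n$ maps the PAL code to an RC code of the same size correcting at most two RC duplications of length $k$, because $k$ is even. The only delicate point is the degree count, where we must make sure that the inverse-parent bound counts ancestors through arbitrary, possibly overlapping or nested, histories. Lemma~\ref{lem:inverse-parent} already handles this via reverse-chronological traces, so I expect no genuine obstacle.
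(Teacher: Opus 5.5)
Your proposal is correct and follows the paper's proof essentially verbatim: the exact-two confusability graph, Lemma~\ref{lem:exact-at-most}, the degree bound $S_2(H_2-1)$ from the sphere maximum and Lemma~\ref{lem:inverse-parent} (discounting $x$ itself), a greedy $(\Delta+1)$-coloring, and Corollary~\ref{cor:transfer} for RC. One small nuance: the displayed equality in \eqref{eq:greedy-redundancy} is two-sided, so it also needs $S_2=\Omega(n^2)$, which the paper takes from $S_2=\Theta(n^2)$ via the exact formula of Theorem~\ref{thm:even-k-two-sphere}; your upper bound alone already suffices for the redundancy claim itself.
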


\begin{proof}
Work in the exact-two confusability graph.  By
Lemma~\ref{lem:exact-at-most}, every independent set in this graph also
corrects at most two duplications.  A vertex has at most $S_2$ two-error
descendants, and each has at most $H_2-1$ other parents by
Lemma~\ref{lem:inverse-parent}.  Write $\Delta_n$ for the maximum degree.
Then
\[
  \Delta_n\leq S_2(H_2-1).
\]
A greedy $(\Delta_n+1)$-coloring has a color class of size at least
$q^n/(\Delta_n+1)$.  Such a class is an independent set and therefore a
code correcting at most two errors.  Equation~\eqref{eq:greedy-redundancy}
follows from $S_2=\Theta(n^2)$ for fixed $k$, and the reverse-complement
claim follows from Corollary~\ref{cor:transfer}.
\end{proof}

Theorem~\ref{thm:greedy-code} is an existence statement; it does not by
itself provide a structured encoder or decoder.  Its leading coefficient
agrees with the general graph bound for the stronger channel of two exact
bursts~\cite{YeSunYuGeElishco2026}.  Recent work gives complementary
fundamental-limit bounds for the multiple-burst
channel~\cite{WangKongYaakobiDuman2026}.  Liu and Duman give broader symmetry
results for two mixed deletion--insertion bursts
~\cite[Theorem~1]{LiuDuman2026}.  We next compare with the syndrome-parameterized
burst-code benchmark of Ye et al.~\cite{YeSunYuGeElishco2026}.

\section{Benchmarks and the dedicated construction problem}
\label{sec:target}

We compare the redundancy and algorithmic guarantees of two-error duplication
codes with those of arbitrary-burst codes, then formulate the remaining
binary construction problems.  Table~\ref{tab:related-work} places these
comparisons in their respective channel, parity, alphabet, and error-count
regimes.

\begin{table}[!t]
\centering
\small
\caption{Neighboring coding regimes used in the comparison.  Redundancy is
measured in $q$-ary symbols unless bits are stated explicitly; rows with
different parameter regimes are not directly ordered by strength.}
\label{tab:related-work}
\begin{tabular}{@{}>{\raggedright\arraybackslash}p{0.20\textwidth}
                    >{\raggedright\arraybackslash}p{0.35\textwidth}
                    >{\raggedright\arraybackslash}p{0.31\textwidth}@{}}
\toprule
Channel and regime & Known result in that regime & Relation to the present work \\
\midrule
Tandem, fixed $k,t$
& Optimal leading redundancy term $t\log_q n$ via tandem-specific
  transforms~\cite{KovacevicTan2018,LenzJungerWachterZeh2018};
  sticky-insertion methods at $k=1$~\cite{MahdavifarVardy2017}
& The endpoint argument establishes the same leading converse for PAL and RC. \\
Unbounded errors: PAL, $q\geq2$; RC, even $q$, fixed-point-free complement
& Exact capacities; both capacities are zero for $k\geq2$
  ~\cite{YohananovSchwartz2025}
& The present problem instead fixes a finite error budget $t$. \\
RC, one odd-length error; even $q$, fixed-point-free complement
& For an even alphabet with a fixed-point-free complement, a binary
  arbitrary-burst code lifts to an RC code.  Its bit redundancy is
  $\log_2 n+o(1)$ for $k=1$ and
  $\log_2 n+(k-1)\log_2\log_2 n+O_k(1)$ for odd $k\geq3$
  ~\cite{BenTolilaSchwartz2022,SchoenyWachterZehGabrysYaakobi2017}
& The present even-$k$ conjugacy is instead a word-space bijection for every
  $q\geq2$ and every involutive complement, including fixed points. \\
RC, fixed $t$, $k=1$, even $q\geq4$, fixed-point-free complement
& Explicit redundancy
  $(2t-1)\log_q n+O(\log_q\log_q n)$, with
  $O(n\,\operatorname{poly}(\log_2 n))$ encoding and decoding
  ~\cite{SunGe2026}
& For $t\geq2$, the present converse coefficient $t$ leaves a
  leading-coefficient construction gap.  The case $t=1$ is covered by the
  next row. \\
PAL/RC, one fixed-length error
& Codes for the stronger arbitrary-burst channel give
  $\log_q n+O_{q,k}(1)$ redundancy~\cite{SunLuZhangGe2025}
& The present converse shows optimality up to an additive constant for $t=1$.
  \\
Binary PAL/RC, two errors of fixed even length $k\geq2$
& A binary two-burst syndrome ensemble gives
  $5\log_2 n+14k\log_2\log_2 n+O_k(1)$ redundancy and nonuniform
  polynomial-time codeword decoding once a suitable tuple is supplied
  ~\cite{YeSunYuGeElishco2026}
& The present PAL/RC-specific bounds have coefficient $4$ for existence and
  coefficient $2$ for the converse; neither gives an efficient message
  encoder. \\
\bottomrule
\end{tabular}
\end{table}

The next lemma makes precise the channel containment used for the
syndrome-parameterized benchmark.  An arbitrary length-$k$ insertion places
an unrestricted word
of $\Sigma^k$ at a gap of the current word.  Two insertion bursts are
nonoverlapping when the two length-$k$ intervals that they occupy in the
final word are disjoint; the intervals may be adjacent.

\begin{lemma}[Sequential insertions as disjoint final bursts]
\label{lem:sequential-burst-containment}
Let $x\in\Sigma^n$.  If $y\in\Sigma^{n+2k}$ is obtained from $x$ by two
sequential arbitrary insertions of length $k$, then $y$ contains two
disjoint length-$k$ factors whose simultaneous deletion returns $x$.
Consequently, this containment holds for every two-step PAL-$k$ or RC-$k$
history, even when the second duplicated factor overlaps or is nested in
material created by the first duplication.
\end{lemma}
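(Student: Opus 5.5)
We track the positions of inserted symbols through the two steps and show that the final inserted material can always be taken to be two disjoint length-$k$ intervals of $y$.

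Let the first insertion place a word $a\in\Sigma^k$ at a gap of $x$, producing $w\in\Sigma^{n+k}$. In $w$ the inserted symbols occupy an interval $I=[p,p+k-1]$, and deleting $I$ returns $x$. The second insertion places $b\in\Sigma^k$ at some gap $g$ of $w$, producing $y$, in which $b$ occupies $J=[g,g+k-1]$. Deleting $J$ from $y$ returns $w$.

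We then split into cases according to where $g$ lies relative to $I$.
\begin{itemize}
\item \emph{The gap is not strictly inside $I$} (so $g\le p$ or $g\ge p+k$). The symbols of $I$ are only shifted as a block: by $k$ if $g\le p$, and not at all otherwise. Call the resulting interval $I'$. It is disjoint from $J$, possibly adjacent to it. Deleting $J$ and then $I'$, or both at once, returns $x$.
\item \emph{The gap is strictly inside $I$} ($p<g<p+k$). The nested case is handled by a re-choice of intervals. The union of the positions of $I$ and $J$ in $y$ is the contiguous interval $[p,p+2k-1]$ of length $2k$. Deleting all $2k$ of these positions removes exactly the inserted material, so $x$ remains. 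We therefore take the two factors to be $[p,p+k-1]$ and $[p+k,p+2k-1]$. They are disjoint and adjacent, and their simultaneous deletion returns $x$. They need not coincide with the chronological bursts, but the statement only asks for some pair of disjoint factors.
\end{itemize}

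In both cases the deletion is legitimate because simultaneous deletion of a set of positions depends only on which positions are removed, not on the order. The complement of the chosen positions, read in order, is exactly the sequence of original symbols of $x$. This holds because insertions never reorder surviving symbols.

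For the consequence, a PAL-$k$ or RC-$k$ step $T^\chi_{i,k}$ is an arbitrary insertion of the particular word $\iota_\chi(v^R)$ at the gap after position $i+k-1$. Any two-step history, including ones whose second source block overlaps or lies inside the first inserted copy, is therefore a pair of sequential arbitrary insertions, and the first part applies.

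The only delicate point is the nested case, where the literal final images of the two bursts are interleaved rather than contiguous. Replacing them by the two halves of the merged length-$2k$ interval resolves it.
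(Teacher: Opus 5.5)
Your proof is correct and follows essentially the same route as the paper: a case split on whether the second insertion gap lies strictly inside the first inserted block, with the nested case resolved by splitting the contiguous length-$2k$ window of inserted material into two adjacent halves. The only cosmetic difference is that you merge the paper's two non-nested subcases (gap before or after the first block) into one, and you phrase the argument with position intervals rather than explicit factorizations such as $y=p\,a_{\mathrm L}ba_{\mathrm R}\,s$.
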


\begin{proof}
Write $x=ps$, and write the word after the first insertion as $pas$, where
$a\in\Sigma^k$.  Let $b\in\Sigma^k$ be the word inserted at the second
step, and let $j$ be its insertion gap in $pas$.

If $j\leq |p|$, factor $p=p_1p_2$ with $|p_1|=j$.  Then
$y=p_1bp_2as$, and the displayed factors $b$ and $a$ are disjoint.  If
$j\geq |p|+k$, factor $s=s_1s_2$ with
$|s_1|=j-|p|-k$.  Then $y=pas_1bs_2$, and again $a$ and $b$ are
disjoint factors.  These cases include insertion at either boundary of
$a$.

It remains to consider $|p|<j<|p|+k$.  Factor
$a=a_{\mathrm L}a_{\mathrm R}$ with
$|a_{\mathrm L}|=j-|p|$.  Then
\[
  y=p\,a_{\mathrm L}ba_{\mathrm R}\,s.
\]
The middle factor has length $2k$; write it as $cd$, where
$|c|=|d|=k$.  The adjacent factors $c,d$ are disjoint, and deleting both
leaves $ps=x$.  The argument uses only the two insertion gaps, so the
inserted words may depend on symbols created at the preceding step.  This
includes the overlap and nesting patterns of sequential PAL and RC
duplications.
\end{proof}

\begin{corollary}[A benchmark from a syndrome ensemble for two insertion bursts]
\label{cor:explicit-two-burst-benchmark}
For every fixed $k\geq2$ and every $n\geq\max\{9,2k+1\}$, a
syndrome-parameterized ensemble contains a binary code correcting at most two
unrestricted sequential PAL-$k$ duplications, and likewise RC-$k$
duplications, with redundancy
\begin{equation}
  5\log_2 n+14k\log_2 \log_2 n+O_k(1).
  \label{eq:explicit-two-burst-benchmark}
\end{equation}
For a supplied suitable syndrome tuple, the code has a polynomial-time
codeword decoder.  This is a nonuniform guarantee; efficient parameter
selection and message encoding are outside its scope.
\end{corollary}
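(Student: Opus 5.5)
The plan is to derive Corollary~\ref{cor:explicit-two-burst-benchmark} from the cited two-burst syndrome ensemble of Ye et al.~\cite[Theorem~IV.1]{YeSunYuGeElishco2026}, using Lemma~\ref{lem:sequential-burst-containment} as the only duplication-specific ingredient. The argument reduces the duplication channel to a burst-deletion channel and then quotes the ensemble's guarantees.

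First, I will recall the precise form of that ensemble. For fixed $k\geq2$ and binary length $n\geq\max\{9,2k+1\}$, its syndrome classes are indexed by parameter tuples and correct two length-$k$ deletion bursts. By pigeonhole over the tuples, some class has redundancy $5\log_2 n+14k\log_2\log_2 n+O_k(1)$. For a supplied tuple, the ensemble has a polynomial-time decoder that recovers the codeword from any word obtained by deleting two length-$k$ bursts.

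Next, I will pass from deletions to insertions using the exact-two deletion--insertion equivalence recalled in~\cite[Sec.~II, p.~3]{WangKongYaakobiDuman2026}. Two words of length $n$ have a common supersequence of length $n+2k$ obtained by two disjoint length-$k$ insertions if and only if they have a common subsequence obtained by two length-$k$ burst deletions. Consequently, a code correcting two deletion bursts also corrects two nonoverlapping insertion bursts. This is the step I expect to need the most care. I must check that the cited equivalence is stated for \emph{nonoverlapping but possibly adjacent} final bursts, matching the notion in Lemma~\ref{lem:sequential-burst-containment}. If the citation only covers separated bursts, I would try first to reprove the equivalence directly by the standard swap argument on the alignment of two length-$(n+2k)$ words.

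For decoding, given a received $y$ of length $n+2k$, the decoder enumerates the $O(n^2)$ ways to delete two disjoint length-$k$ factors. It runs the burst-deletion decoder on each result, or alternatively checks syndrome membership of the candidate. It outputs the unique codeword found. Uniqueness follows from correction, and the procedure runs in polynomial time.

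Finally, I will apply Lemma~\ref{lem:sequential-burst-containment}. Every two-step PAL-$k$ or RC-$k$ history, including overlapping and nested ones, yields a final word containing two disjoint length-$k$ factors whose deletion returns the codeword. Thus each exact-two descendant sphere is contained in the two-insertion-burst ball, and disjointness transfers. Lemma~\ref{lem:exact-at-most} upgrades exact-two correction to at-most-two correction. The same binary code serves both PAL and RC, because both are subchannels of arbitrary insertion, so no use of $\Amap_n$ is needed. I will close by noting the nonuniformity: the tuple is only shown to exist, and no encoder is supplied.
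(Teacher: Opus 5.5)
Your route is the paper's: instantiate \cite[Theorem~IV.1]{YeSunYuGeElishco2026} with $q=2$, $b=k$, transfer to two insertion bursts through the Wang et al.\ deletion--insertion equivalence, apply Lemma~\ref{lem:sequential-burst-containment}, and decode by enumerate-and-test. The adjacency question you flag is settled in the paper by a definition check: their $(2,k)$ insertion ball uses factors occupying disjoint intervals of the received word, which is exactly the lemma's convention. For the correction claim you use a different but valid shortcut. You get exact-two burst correction, push it to exact-two PAL/RC correction by containment, and then invoke Lemma~\ref{lem:exact-at-most}. The paper instead first proves at-most-two \emph{burst} correction, getting exact-one burst correction by appending a common length-$k$ word to a putative one-burst collision.

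The gap is in the decoder, in two places.

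\textbf{Missing one-error branch.} The code corrects at most two errors, so its codeword decoder must also accept received words of length $n+k$ (and trivially $n$). You describe only the length-$(n+2k)$ branch. The natural one-error branch deletes each of the $O(n)$ length-$k$ factors and tests membership. Its uniqueness needs exact-one burst-insertion correction, because the candidates are arbitrary burst parents, not necessarily PAL/RC parents. Your shortcut through Lemma~\ref{lem:exact-at-most} does not supply this. You can fix it in any of three ways:
\begin{itemize}
\item add the paper's appending argument;
\item reduce to your two-error branch by appending an arbitrary length-$k$ word to $y$ (or applying any legal duplication to it);
\item restrict the one-error search to inverse steps at windows $vv^R$ (or $v\overline{v^R}$ for RC).
\end{itemize}

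\textbf{The enumeration's subroutine.} Running ``the burst-deletion decoder on each result'' does not typecheck. Deleting two factors from $y$ leaves a length-$n$ word, whereas the Ye et al.\ decoder takes length-$(n-2k)$ inputs. What the enumeration actually needs is polynomial-time \emph{membership testing} for a fixed tuple. That means the regularity condition and the syndrome components $\eta,f,h^{(0)},h^{(1)}$ must be polynomial-time computable. The paper takes this from \cite[Lemmas~II.2, II.4, II.6, and IV.2]{YeSunYuGeElishco2026}; the existence of a deletion decoder alone does not provide it.
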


\begin{proof}
\emph{Correction and redundancy.}
Set $q=2$ and $b=k$ in Theorem~IV.1 of Ye, Sun, Yu, Ge, and Elishco.  Its
hypotheses $b>1$, $n>2b$, and $n\geq9$ follow from the assumptions above.
They define a syndrome-parameterized ensemble for two nonoverlapping bursts
of exactly $k$ deletions and prove that it contains a code with the redundancy
in \eqref{eq:explicit-two-burst-benchmark}; once a suitable defining tuple is
supplied, their decoder is polynomial-time
~\cite[Theorem~IV.1]{YeSunYuGeElishco2026}.  The
$(2,k)$ deletion and insertion balls of Wang et al. are defined,
respectively, by deleting two disjoint exact length-$k$ factors and by
inserting two length-$k$ factors that occupy disjoint intervals in the
received word.  Their deletion--insertion equivalence therefore applies
exactly to the final-factor convention of
Lemma~\ref{lem:sequential-burst-containment} and transfers the correction
property to two arbitrary insertion bursts~\cite[Sec.~II, p.~3]{WangKongYaakobiDuman2026}.
The exact-two insertion-correction property also implies exact-one
insertion correction.  Indeed, if two distinct codewords had a common
one-burst descendant, appending the same length-$k$ word to that descendant
would produce a common exact-two-burst descendant; the appended burst is
disjoint from the first one.  Hence the code corrects at most two arbitrary
length-$k$ insertion bursts, with output length separating the zero-, one-,
and two-error layers.  Lemma~\ref{lem:sequential-burst-containment} then
gives correction of at most two PAL-$k$ and RC-$k$ duplications directly,
with no parity restriction.

\emph{Codeword decoding with supplied parameters.}
The received length first reveals the number of errors.  For one error,
enumerate the $O(n)$
length-$k$ factors; for two errors, enumerate the $O(n^2)$ pairs of
nonoverlapping length-$k$ factors.  Delete the selected factor or factors.
Once a syndrome tuple is fixed, membership in the code of Theorem~IV.1 is
tested by its regularity condition and the explicitly defined checks
$\eta,f,h^{(0)},h^{(1)}$; their component computations are polynomial-time
for fixed $k$~\cite[Lemmas~II.2, II.4, II.6, and IV.2]{YeSunYuGeElishco2026}.
Thus every candidate check, and hence the full $O(n^2)$ search, takes
polynomial time.
The insertion-correction property guarantees that the set of distinct
codewords returned on a valid channel output is a singleton, although several
deletion witnesses may yield that same codeword.  For fixed $k$, the
four-residue syndrome tuple has $O(\log n+k\log\log n)$ bits, so it is
polynomial-size nonuniform advice.  The same deletion
search decodes PAL-$k$ and RC-$k$ outputs directly.  When $k$ is even,
Corollary~\ref{cor:decoder-transfer} alternatively pairs their decoders with
two linear-time coordinate passes.
\end{proof}

The hierarchy below is restricted to even $k$ because the
exact sphere and greedy existence bounds proved here use that hypothesis.

For every fixed even $k$, define
\[
  r_k^{\min}(n)=\min\bigl\{r_2(C): C\subseteq\{0,1\}^n
  \text{ corrects at most two PAL-$k$ duplications}\bigr\}.
\]
The current binary hierarchy is:
\begin{center}
\small
\setlength{\tabcolsep}{5pt}
\begin{tabular}{@{}lll@{}}
\toprule
status & redundancy & leading coefficient\\
\midrule
universal converse
  & $r_2(C_n)\geq2\log_2 n-(2k+3)-o(1)$ & $2$\\
greedy existence
  & $r_2(C_n)\leq4\log_2 n+O_k(1)$ for some $C_n$ & $4$\\
syndrome-parameterized burst benchmark
  & $r_2(C_n)\leq5\log_2 n+14k\log_2 \log_2 n+O_k(1)$ & $5$\\
\bottomrule
\end{tabular}
\end{center}
The first two rows imply
\[
  2\leq\liminf_{n\to\infty}\frac{r_k^{\min}(n)}{\log_2 n}
  \leq\limsup_{n\to\infty}\frac{r_k^{\min}(n)}{\log_2 n}\leq4.
\]
These comparisons lead to two distinct construction questions.
\begin{openproblem}[Existential coefficient]
\label{prob:binary-two-error-existence}
For fixed even $k\geq2$, determine the asymptotic behavior of
$r_k^{\min}(n)/\log_2 n$.  In particular, does
$r_k^{\min}(n)=2\log_2 n+o(\log n)$, matching the converse?
\end{openproblem}

\begin{openproblem}[Explicit efficient construction]
\label{prob:binary-two-error-code}
Despite the nonconstructive coefficient-$4$ existence bound above, construct,
for a fixed even $k\geq2$ and uniformly in $n$, an explicit family of binary
codes correcting at most two unrestricted sequential palindromic duplications
of length $k$.  The construction should select all defining parameters in
polynomial time without nonuniform advice, provide polynomial-time message
encoding and codeword decoding, and have redundancy whose leading $\log_2 n$
coefficient is below $5$.  Ideally, attain
$2\log_2 n+O_k(\log \log n)$ or $2\log_2 n+O_k(1)$ redundancy; the
construction would then transfer through $\Amap_n$ to
reverse-complement duplication.
\end{openproblem}

\begin{remark}[Possible duplication-specific route]
\label{rem:duplication-specific-route}
One possible route toward Open Problem~\ref{prob:binary-two-error-code} is to
exploit the palindromic-duplication sphere structure directly in a code,
rather than correct every pair of arbitrary nonoverlapping
length-$k$ insertion bursts.  For $k=2$, the edge-tree normal form suggests
labeling source--descendant equivalence classes rather than raw operation
histories, or imposing a genuinely joint constraint on the two interleaved
rows.
\end{remark}

For general even $k$, interleaving the coordinates by their residues modulo
$k$ gives $k$ columns.  Appendix~\ref{app:cyclic-factorization} records the
exact insertion in each column and the cyclic perfect matchings that describe
the source--insertion coupling, providing additional structure for
interleaved constructions.

Appendix~\ref{app:construction-obstructions} records precise limitations of
several natural separated-row and raw-history approaches; those limitations
do not preclude a joint construction.

\section{Conclusion}
\label{sec:conclusion}

We established a common finite-error converse for palindromic and
reverse-complement duplication.  The endpoint-multiplicity method gives
finite-length bounds on code size and asymptotics for fixed and sublinear
numbers of errors without a parity restriction.  The ambient channel permits a
later source block to reuse symbols inserted earlier; the proof counts an
injective canonical subfamily based on original source occurrences.  For multiple
PAL duplications, this finite-length upper bound advances the question on code
size posed by Lenz, Wachter-Zeh, and Yaakobi; for
fixed $t$ it gives $r_q(C)\geq t\log_q n-O_{q,k,t}(1)$.  At every even
duplication length, the alternating-complement involution further gives a bijective
conjugacy of the full sequential channels, including the corresponding
transfers of encoders and of codeword and message decoders.

For every even $k$, we also determined the exact maximum two-error sphere.
Combining that formula with the inverse-parent bound and greedy coloring gives
PAL and RC codes with redundancy $4\log_q n+O_{q,k}(1)$.  In the binary
two-error problem, the converse gives $2\log_2 n-O_k(1)$, leaving a factor-two
gap in the best existence bounds.  A cited code for the stronger two-burst
channel does not improve that gap: its leading coefficient is five.  Its
separate value is nonuniform polynomial-time codeword decoding once suitable
syndrome parameters have been supplied.  Open
Problems~\ref{prob:binary-two-error-existence} and
\ref{prob:binary-two-error-code} ask for a sharper existential bound and for
a uniform explicit family with polynomial-time parameter selection, message
encoding, and codeword decoding and with leading coefficient below five.
Remark~\ref{rem:duplication-specific-route} records one possible route toward
the latter target: exploit the reversed-copy relation instead of correcting
the strictly stronger arbitrary-burst channel.

\appendix

\section{Cyclic factorization of the even-length interleaving action}
\label{app:cyclic-factorization}

The conjugacy of Theorem~\ref{thm:conjugacy} identifies the PAL and RC
channels at the word level, but it does not display how one duplication
couples the $k$ interleaved residue classes.  The exact law below supplies
that finer description.  The packing and sphere results of the paper do not
depend on this refinement.

Fix an even duplication length $k=2m$.  For a word $z$ and
$0\leq r<k$, write
\[
  \operatorname{col}_{k,r}(z)
  \eqdef (z_{r+jk})_{j\geq0,\ r+jk<|z|}
\]
for its residue-$r$ column.  Row indices inside every column start at zero.
We write $\langle d\rangle_k\in\{0,\ldots,k-1\}$ for the least
nonnegative residue of $d$ modulo $k$.
For integers $b$ and $r$, set
\[
  \pi_b(r)\eqdef\langle 2b-1-r\rangle_k.
\]

\begin{lemma}[Exact even-length interleaving law]
\label{lem:even-k-row-update}
Let $x\in\Sigma^n$, let $0\leq i\leq n-k$, and write
\[
  i=ak+b,
  \qquad 0\leq b<k.
\]
Set $y=T^{\pal}_{i,k}(x)$ and, for $0\leq r<k$, define
\begin{align*}
  h_b(r)&\eqdef a+\mathbf 1_{\{r<b\}},\\
  s_r&\eqdef i+\langle r-b\rangle_k
      =kh_b(r)+r,\\
  t_r&\eqdef i+k+\langle r-b\rangle_k
      =k\bigl(h_b(r)+1\bigr)+r.
\end{align*}
Then $s_r$ is the unique source-block coordinate in residue class $r$,
$t_r$ is the unique inserted-block coordinate in residue class $r$, and
\begin{equation}
  y_{t_r}=x_{s_{\pi_b(r)}}.
  \label{eq:even-k-source-target-coordinate}
\end{equation}
Equivalently, if $X_r=\operatorname{col}_{k,r}(x)$ and
$Y_r=\operatorname{col}_{k,r}(y)$, and $X_r[h]$ denotes the symbol at row
$h$ of column $X_r$, then
\begin{equation}
  Y_r=
  \InsAfter_{h_b(r)}
  \left(
    X_r;
    X_{\pi_b(r)}[h_b(\pi_b(r))]
  \right).
  \label{eq:even-k-column-update}
\end{equation}
\end{lemma}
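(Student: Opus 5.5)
This is a direct coordinate computation, and the plan has four steps.

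First, verify the two descriptions of $s_r$ and $t_r$. The source block occupies the coordinates $i,\ldots,i+k-1$, which is $k$ consecutive integers, so each residue class mod $k$ is hit exactly once. The coordinate in class $r$ is $i+\langle r-b\rangle_k$. Writing $i=ak+b$:
\begin{itemize}
\item if $r\geq b$, then $\langle r-b\rangle_k=r-b$, so $s_r=ak+r$;
\item if $r<b$, then $\langle r-b\rangle_k=r-b+k$, so $s_r=(a+1)k+r$.
\end{itemize}
Together these give $s_r=kh_b(r)+r$. The inserted block occupies $i+k,\ldots,i+2k-1$ in $y$, so the same argument shifted by $k$ gives $t_r=s_r+k=k(h_b(r)+1)+r$, again unique in its class.

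Second, prove the coordinate identity \eqref{eq:even-k-source-target-coordinate}. Write $t_r=i+k+s$ with offset $s=\langle r-b\rangle_k\in\{0,\ldots,k-1\}$. By \eqref{eq:pal-duplication} the inserted block is $v^R$, so $y_{i+k+s}=x_{i+k-1-s}$. Put $c=i+k-1-s$. This lies in the source block and has residue $\langle b+k-1-(r-b)\rangle_k=\langle 2b-1-r\rangle_k=\pi_b(r)$, since $k\equiv0$. By the uniqueness from the first step, $c=s_{\pi_b(r)}$.

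Third, derive the column form \eqref{eq:even-k-column-update}. Because $k$ is even (indeed only $k$-divisibility matters here), the map $T^{\pal}_{i,k}$ fixes coordinates $<i+k$ and shifts coordinates $\geq i+k$ of $x$ by exactly $k$. That shift preserves residues and raises the row index by one. So in column $r$:
\begin{itemize}
\item rows $0,\ldots,h_b(r)$ of $X_r$ are unchanged, because row $h_b(r)$ is $s_r<i+k$ and row $h_b(r)+1$ of $x$ is $s_r+k\geq i+k$;
\item the new row $h_b(r)+1$ holds $y_{t_r}$;
\item all later rows of $X_r$ move down by one.
\end{itemize}
This is exactly $\InsAfter_{h_b(r)}(X_r;\cdot)$. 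Finally, substitute $y_{t_r}=x_{s_{\pi_b(r)}}=X_{\pi_b(r)}[h_b(\pi_b(r))]$, using $s_{r'}=kh_b(r')+r'$.

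Fourth, handle the boundary and well-definedness points. $\InsAfter_h$ requires $h<|X_r|$, and this holds because $s_r<n$ is a genuine coordinate of $x$. The case where the suffix of column $r$ is empty must still be covered; the only delicate point is that the inserted row may be the last row of the column, and the definition of $\InsAfter$ allows this.

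The main obstacle is the bookkeeping in the first step, where the residue wraparound $r<b$ must be tracked consistently in both $h_b$ and $\pi_b$. Everything else is direct substitution.
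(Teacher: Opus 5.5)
Your proposal is correct and follows essentially the same coordinate computation as the paper. Both arguments locate $s_r$ and $t_r=s_r+k$ through the two wraparound cases. Both then identify the copied coordinate $i+k-1-\langle r-b\rangle_k$ as the source-block coordinate of residue $\pi_b(r)$: you compute its residue and invoke uniqueness, while the paper checks $\langle\pi_b(r)-b\rangle_k=k-1-\langle r-b\rangle_k$. Finally, both read off the column insertion from the fact that suffix coordinates shift by exactly $k$.

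Your remarks that evenness of $k$ is not actually used, and that $h_b(r)$ is always a valid row index for $\InsAfter$ even when the insertion lands at the end of a column, add care but do not change the route.
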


\begin{proof}
Put $d_r=\langle r-b\rangle_k$.  The source block occupies coordinates
$i,i+1,\ldots,i+k-1$, so its unique coordinate congruent to $r$ modulo
$k$ is $i+d_r=s_r$.  If $r\geq b$, this coordinate is $ak+r$; if
$r<b$, it is $(a+1)k+r$.  This proves
$s_r=kh_b(r)+r$.  The coordinate of the inserted block with the same
residue is $i+k+d_r=t_r=s_r+k$.

The symbol placed at $t_r$ comes from source coordinate
$i+k-1-d_r$.  Since $0\leq d_r<k$,
\[
  \bigl\langle\pi_b(r)-b\bigr\rangle_k
  =\langle b-1-r\rangle_k
  =k-1-d_r.
\]
Consequently $i+k-1-d_r=s_{\pi_b(r)}$, proving
\eqref{eq:even-k-source-target-coordinate}.  Finally, insertion of a block
of length $k$ preserves the residue class of every original suffix
coordinate and shifts it down by one row in its column.  Thus column $r$
receives one symbol immediately after row $h_b(r)$, with the value
shown in \eqref{eq:even-k-column-update}.
\end{proof}

\begin{remark}[Raw RC column update]
The matching and row cuts are unchanged for a raw RC duplication.  Namely,
if $y^{\rc}=T^{\rc}_{i,k}(x)$ and
$Y_r^{\rc}=\operatorname{col}_{k,r}(y^{\rc})$, then
\[
  Y_r^{\rc}=
  \InsAfter_{h_b(r)}
  \left(
    X_r;
    \iota\!\left(X_{\pi_b(r)}[h_b(\pi_b(r))]\right)
  \right).
\]
Thus the cyclic factorization describes the same source--target column
coupling in both channels; only the inserted symbol value is complemented
in the RC channel.
\end{remark}

Let
\[
  V_0=\{0,2,\ldots,k-2\},
  \qquad
  V_1=\{1,3,\ldots,k-1\}.
\]

\begin{proposition}[Cyclic one-factorization of the column coupling]
\label{prop:even-k-one-factorization}
For every integer $b$, the map $\pi_b:\Z_k\to\Z_k$ is a
fixed-point-free involution that exchanges $V_0$ and $V_1$.  Moreover,
$\pi_{b+m}=\pi_b$.  Hence, for $c\in\Z_m$, the set
\[
  F_c\eqdef
  \bigl\{\{e,\pi_b(e)\}:e\in V_0\bigr\},
  \qquad b\equiv c\pmod m,
\]
is well defined, and the $m$ matchings $(F_c)_{c\in\Z_m}$ form a
$1$-factorization of $K_{m,m}$ with bipartition $(V_0,V_1)$.
\end{proposition}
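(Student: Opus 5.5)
The plan is to verify each claim directly from the formula $\pi_b(r)=\langle 2b-1-r\rangle_k$, working modulo $k=2m$.

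\emph{Involution.} Compute
\[
  \pi_b(\pi_b(r))\equiv 2b-1-(2b-1-r)=r \pmod k .
\]
Since both sides lie in $\{0,\ldots,k-1\}$, they are equal.

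\emph{No fixed points, and parity swap.} A fixed point would satisfy $2r\equiv 2b-1\pmod k$. Because $k$ is even, reducing modulo $2$ is well defined and turns this into $0\equiv1$, which is impossible. The same parity reduction gives $\pi_b(r)\equiv 1+r\pmod 2$, since $-r\equiv r$. Hence $\pi_b$ maps $V_0$ into $V_1$ and $V_1$ into $V_0$. Being a bijection, it exchanges the two classes.

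\emph{Periodicity and well-definedness.} Replacing $b$ by $b+m$ adds $2m=k$ to $2b-1-r$, so $\pi_{b+m}=\pi_b$. Therefore $\pi_b$ depends only on $b\bmod m$, and $F_c$ does not depend on which representative $b\equiv c\pmod m$ is chosen. Restricted to $V_0$, the map $\pi_b$ is a bijection onto $V_1$. So $F_c$ consists of $m$ disjoint edges between $V_0$ and $V_1$, i.e.\ a perfect matching of $K_{m,m}$.

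\emph{Factorization.} There are $m$ matchings with $m$ edges each, giving $m^2$ edges in total, which equals $|E(K_{m,m})|$. So it suffices to show that each edge $\{e,f\}$ with $e\in V_0$, $f\in V_1$ lies in exactly one $F_c$. The edge lies in $F_c$ exactly when $f\equiv 2b-1-e\pmod k$, i.e.\ $2b\equiv e+f+1\pmod{2m}$. Here $e+f+1$ is even, so this is equivalent to $b\equiv (e+f+1)/2\pmod m$. That congruence has exactly one solution class $c\in\Z_m$, so the matchings are pairwise edge-disjoint and cover every edge. This halving step is the only subtle point, and even it is routine: everything hinges on dividing the congruence by $2$ when passing from modulus $k$ to modulus $m$.
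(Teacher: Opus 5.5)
Your proposal is correct and follows essentially the same route as the paper's proof: direct computation of $\pi_b\circ\pi_b$, parity reduction to exclude fixed points and to show that $V_0$ and $V_1$ are exchanged, the shift $b\mapsto b+m$ for well-definedness, and halving the congruence $2b\equiv e+f+1\pmod{2m}$ so that each edge lies in exactly one $F_c$. Your extra count ($m$ matchings with $m$ edges each, $m^2$ edges in total) is redundant once you have shown that each edge lies in exactly one factor, but it does no harm.
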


\begin{proof}
For every $r\in\Z_k$,
\[
  \pi_b(\pi_b(r))
  \equiv 2b-1-(2b-1-r)
  \equiv r\pmod k,
\]
so $\pi_b$ is an involution.  A fixed point would satisfy
$2r\equiv2b-1\pmod{2m}$, which is impossible.  Also
$\pi_b(r)\equiv1-r\pmod2$, so the involution exchanges even and odd
residues and induces a perfect matching.

The identity
$\pi_{b+m}(r)\equiv\pi_b(r)\pmod{2m}$ shows that $F_c$ is well defined.
Fix $e\in V_0$ and $o\in V_1$.  The edge $\{e,o\}$ lies in $F_c$
precisely when
\[
  2b\equiv e+o+1\pmod{2m}.
\]
The right-hand side is even, and division by two gives the unique solution
\[
  b\equiv\frac{e+o+1}{2}\pmod m.
\]
Thus every edge of $K_{m,m}$ belongs to exactly one $F_c$.
\end{proof}

For $k=2m$, the possible source--target column couplings are therefore the
factors in the standard cyclic $1$-factorization of $K_{m,m}$.  The residue
$b\bmod m$ is the matching phase, while the profile $(h_b(r))_{r\in\Z_k}$ is
the row cut and distinguishes the two lifts $b$ and $b+m$ modulo $k$.  These
are two pieces of information that a future dedicated decoder may exploit.

After writing $e=2u$ and $o=2v+1$, the edge condition becomes
$u+v\equiv b-1\pmod m$.  Thus the factorization is the classical cyclic
one, while Proposition~\ref{prop:even-k-one-factorization} identifies its
factors with the source--insertion couplings of the channel.

\begin{corollary}[Identification of the matching phase]
\label{cor:even-k-phase-identification}
Suppose one source--insertion column pair is identified: a symbol inserted
in column $r$ was copied from column $s$.  Then
\[
  b\equiv\frac{r+s+1}{2}\pmod m.
\]
Consequently, only the two full phases $b$ and $b+m$ modulo $k$ remain.
\end{corollary}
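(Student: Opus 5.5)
The plan is to read off the matching phase directly from the edge condition in the proof of Proposition~\ref{prop:even-k-one-factorization}. By Lemma~\ref{lem:even-k-row-update}, a symbol inserted in column $r$ (at coordinate $t_r$) was copied from the source coordinate $s_{\pi_b(r)}$, which lies in column $\pi_b(r)$. The hypothesis that the copy came from column $s$ therefore means $s=\pi_b(r)$, that is,
\[
  s\equiv 2b-1-r\pmod k .
\]

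\emph{Step 1.} Rearranging gives $2b\equiv r+s+1\pmod{2m}$. By Proposition~\ref{prop:even-k-one-factorization}, $\pi_b$ exchanges parities, so $r$ and $s$ have opposite parity. Hence $r+s+1$ is even, and $(r+s+1)/2$ is an integer. Equivalently, $\{r,s\}$ is an edge of $K_{m,m}$, and this is exactly the edge condition used in the proof of the proposition.

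\emph{Step 2.} Next divide the congruence $2b\equiv r+s+1\pmod{2m}$ by two. The modulus becomes $m$, giving $b\equiv (r+s+1)/2\pmod m$, exactly as in the displayed unique solution in that proof. This also identifies the matching $F_c$ with $c\equiv b\pmod m$.

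\emph{Step 3.} For the consequence, note that $b\in\{0,\ldots,k-1\}$ with $k=2m$. Fixing $b\bmod m$ leaves exactly the two residues $b$ and $b+m$ modulo $k$, which are the two full phases.

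There is no genuine obstacle here; the argument is essentially bookkeeping. The only point requiring care is the halving in Step~2. Halving is legitimate only because both sides are even and the modulus $2m$ is halved along with them, and this parity fact comes from Step~1.
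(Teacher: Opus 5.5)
Your proposal is correct and follows essentially the same route as the paper: it obtains $s=\pi_b(r)$ from Lemma~\ref{lem:even-k-row-update}, rewrites this as $2b\equiv r+s+1\pmod{2m}$, and halves the congruence using the opposite parity of $r$ and $s$. One small remark is that this parity fact is already forced by the congruence itself, since $2b$ and $2m$ are both even, so citing Proposition~\ref{prop:even-k-one-factorization} for it is harmless but not needed.
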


\begin{proof}
Lemma~\ref{lem:even-k-row-update} gives $s=\pi_b(r)$, so
$2b\equiv r+s+1\pmod{2m}$.  Since $r$ and $s$ have opposite parity,
division by two determines $b$ uniquely modulo $m$.
\end{proof}

The corollary identifies the matching phase but not the complete start
coordinate.  The lifts $b$ and $b+m$ generally have different row-cut
profiles $(h_b(r))_{r=0}^{k-1}$; recovering the duplication start therefore
also requires the row location and the choice of lift.

\section{Obstructions for separated-row and raw-history constructions}
\label{app:construction-obstructions}

The open construction problem in Section~\ref{sec:target}
suggests separating a PAL2 word into its even and odd rows or assigning
additive weights to individual error histories.  This appendix tests those
two approaches.  The row analysis first gives pointwise list bounds and then
exhibits a linear clique even under constant regularity; the final result
shows that a raw $B_2$ sum on history labels cannot be an endpoint syndrome.
These are architecture-specific obstructions, not redundancy lower bounds
for unrestricted PAL2 or RC2 codes.  In particular, they leave open joint
constraints on the two rows and endpoint invariants defined on equivalence
classes of histories.

\subsection{Row dynamics and trace-conditioned reconstruction}

We first determine which information about the even-row history suffices to
reconstruct the odd row.  For a binary PAL2 source
$u=u_0\cdots u_{n-1}\in\{0,1\}^n$, define its even and odd rows by
\[
  E=E(u)\eqdef (u_{2j})_{0\leq j<\lceil n/2\rceil},
  \qquad
  O=O(u)\eqdef (u_{2j+1})_{0\leq j<\lfloor n/2\rfloor}.
\]
Recall that $\InsAfter_r(v;z)$ inserts $z$ immediately after coordinate
$r$ of $v$.

\begin{lemma}[Exact row update]
\label{lem:row-update}
Let $0\leq i\leq n-2$, write $i=2j+\delta$ with
$\delta\in\{0,1\}$, and let $u'=T^{\pal}_{i,2}(u)$.  If $E',O'$ are the even
and odd rows of $u'$, then
\[
  E'=\InsAfter_{j+\delta}(E;O_j),
  \qquad
  O'=\InsAfter_j(O;E_{j+\delta}).
\]
The formulas are applied to the updated rows after every subsequent error.
\end{lemma}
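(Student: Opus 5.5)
This is a direct coordinate computation, which I would organize exactly as in the proof of Lemma~\ref{lem:even-k-row-update} specialized to $k=2$, $m=1$, but written out explicitly in the two rows.

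\textbf{Setup.} Write $u=p\,ab\,s$ with $|p|=i$, $a=u_i$, $b=u_{i+1}$, so that $u'=p\,ab\,ba\,s$. Coordinates $0,\ldots,i+1$ of $u'$ agree with those of $u$. Coordinates $i+2$ and $i+3$ of $u'$ carry $b=u_{i+1}$ and $a=u_i$. Every coordinate $h\geq i+2$ of $u$ moves to $h+2$, which has the same parity and lies one row lower.

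\textbf{Case $\delta=0$, so $i=2j$.} Here $a=E_j$ and $b=O_j$. The inserted coordinate $i+2=2(j+1)$ is even and carries $O_j$; it sits immediately after the even coordinate $2j$, which is row $j$ of $E$. This gives $E'=\InsAfter_j(E;O_j)$. The inserted coordinate $i+3=2(j+1)+1$ is odd and carries $E_j$; it sits immediately after the odd coordinate $2j+1$, which is row $j$ of $O$. This gives $O'=\InsAfter_j(O;E_j)$.

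\textbf{Case $\delta=1$, so $i=2j+1$.} Here $a=O_j$ and $b=E_{j+1}$. The inserted coordinate $i+2=2j+3$ is odd and carries $E_{j+1}$; it follows odd row $j$. This gives $O'=\InsAfter_j(O;E_{j+1})$. The inserted coordinate $i+3=2(j+2)$ is even and carries $O_j$; it follows the even coordinate $2(j+1)$, which is row $j+1$ of $E$. This gives $E'=\InsAfter_{j+1}(E;O_j)$.

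\textbf{Ranges.} We must check that each index used in $\InsAfter$ is a valid row of the current rows. In both cases the coordinates $i$ and $i+1$ exist, so $E_{j+\delta}$ and $O_j$ exist. The insertion points are rows of the source-block coordinates themselves, hence they are valid and strictly less than the row length.

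\textbf{Iteration.} Each later error is an ordinary PAL2 step applied to the current word. The same computation therefore applies to the current rows, which justifies the final sentence of the statement. No step is a real obstacle; the only care needed is the parity bookkeeping in the case $\delta=1$, where the even-row cut shifts to $j+1$.
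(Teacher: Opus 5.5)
Your proposal is correct and follows the paper's proof, which reads off the even and odd coordinates of the local replacement $ab\mapsto abba$ in the two parity cases $\delta\in\{0,1\}$; your case analysis, including the shifted even-row cut $j+1$ when $\delta=1$, the range check, and the iteration remark, is that computation written out in full. It is also exactly the $k=2$ instance of Lemma~\ref{lem:even-k-row-update}, where $h_\delta(0)=j+\delta$, $h_\delta(1)=j$, and $\pi_\delta$ swaps the two columns.
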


\begin{proof}
Read the even and odd coordinates of the local replacement
$ab\mapsto abba$.  The two choices of $\delta$ determine which of $a,b$ lies
in each row and give the displayed insertion coordinates.
\end{proof}

\begin{proposition}[The original even row is insufficient]
\label{prop:even-insufficient}
There is no decoder that, for every one-error PAL2 instance, recovers the
source from the pair consisting of its original even row and the full
received word.
\end{proposition}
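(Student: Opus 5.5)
The statement is a non-existence claim, so the plan is to exhibit a collision: two distinct one-error PAL2 instances $(u,i)$ and $(u',i')$ with $E(u)=E(u')$ and $T^{\pal}_{i,2}(u)=T^{\pal}_{i',2}(u')$. Any decoder fed the pair (original even row, received word) sees identical inputs in both instances. It must therefore output the same source, and so it fails on one of them. The whole proof reduces to finding such a pair, preferably with small $n$ so that it can be checked directly. The construction can then be padded to every $n\geq n_0$ by appending a common suffix far from the duplication, which leaves the collision intact. If the statement is read as requiring failure only for some $n$, the small example already suffices.

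To guide the search I would use Lemma~\ref{lem:row-update}. A duplication at $i=2j+\delta$ inserts $O_j$ into the even row after position $j+\delta$, and inserts $E_{j+\delta}$ into the odd row after position $j$. For the even rows to stay equal ($E=E'$) while the sources differ, the odd rows must differ, $O\neq O'$. The received rows must nevertheless coincide. So I want two single insertions into different odd rows $O,O'$ that produce the same $O''$. At the same time, the even-row insertions of $O_j$ and $O'_{j'}$ into the common $E$ must give the same $E''$. The natural choice is $O'$ equal to $O$ with one adjacent transposition or shifted run. Single insertions of one bit can then erase that difference, because inserting a bit at either end of a run gives the same word.

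Concretely, I would first try $n=4$ or $n=5$ and pick two sources differing only in odd coordinates, for instance $u=0\,0\,0\,1$ and $u'=0\,1\,0\,0$, with suitable $i,i'$. For each pair I would compute $T^{\pal}_{i,2}$ via the local rule $ab\mapsto abba$ and check equality. A cleaner route is to look for two parents of one received word $y$: choose $y$ containing two distinct factors of the form $vv^R$ of length four. Deleting the final $v^R$ in each gives two parents. I would then check whether the two parents share their even-indexed symbols. For example, take $y=0\,1\,1\,0\,0\,1\,1\,0$-type words with overlapping palindromic windows. Deleting $v^R$ at different windows shifts later symbols by two, which preserves parity. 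The even rows can then agree if the deleted windows remove the same even-position symbols, while the odd positions differ.

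The main obstacle is this search: ensuring the two parents are genuinely distinct but have identical even rows. Because deletions of length two preserve parity of the surviving coordinates, I expect a word like $y=1001\,1001$ or a short word with windows at starts $0$ and $2$ to work. I would verify it by direct computation and present it as an explicit table of $u$, $u'$, $i$, $i'$, their common even row, and the common descendant. The final step is the one-line indistinguishability argument above.
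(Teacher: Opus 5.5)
Your reduction is correct and is the same as the paper's. A single pair of distinct sources with equal even rows and a common one-error descendant gives any decoder identical inputs, so it must fail on one of them. The gap is that, for a non-existence statement like this, the explicit collision \emph{is} the proof, and you never produce one. Moreover, every concrete lead you give fails:
\begin{itemize}
\item For $u=0001$ and $u'=0100$ the descendant sets are $\{000001,000110\}$ and $\{011000,010010,010000\}$, which are disjoint.
\item The word $y=10011001$ has exactly three PAL2 parents, $101001$, $100101$ and $100110$. Their even rows $110$, $100$, $101$ are pairwise distinct.
\item Inverse windows at starts $0$ and $2$ force $y$ to begin with $abbaab$. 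The two parents' even rows then agree only when $a=b$, and in that case the two parents coincide.
\end{itemize}
In fact, a short case check with Lemma~\ref{lem:row-update} shows that no such collision exists for any $n\le 5$. So the search you propose at $n=4,5$ cannot succeed.

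What is missing is a verified witness, such as the paper's:
\[
001000\xrightarrow{\,i=1\,}00110000,\qquad 001100\xrightarrow{\,i=4\,}00110000 .
\]
Both sources have even row $010$, while their odd rows are $000$ and $010$. The received word has even row $0100$ and odd row $0100$. The two inverse windows start at $1$ and $4$, so the forward starts have opposite parities $\delta$. The two deleted even-row symbols (indices $2$ and $3$) lie in one run of $0100$. The two deleted odd-row symbols (indices $1$ and $3$) do not lie in one run of $0100$. This is the condition your search should target, rather than ``removing the same even-position symbols.'' With such an example inserted, your indistinguishability argument completes the proof. The optional padding by a common suffix is also fine.
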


\begin{proof}
The two direct calculations
\[
  001000\xrightarrow{\,i=1\,}00110000,
  \qquad
  001100\xrightarrow{\,i=4\,}00110000
\]
have the same original even row $E=010$, while their odd rows are $000$ and
$010$, respectively.  Thus the same side information and received word have
two distinct sources.  By conjugacy, the same obstruction holds for RC2.
\end{proof}

This ambiguity concerns the stated side information: a joint syndrome of
$(E,O)$ may separate the two sources.

\begin{lemma}[Conditional trace uniqueness]
\label{lem:trace-unique}
Fix any number $t$ of PAL2 errors.  Suppose the complete ordered even-row
trace is known: every intermediate even row and every insertion coordinate
is specified.  Together with the final odd row, these data admit at most one
compatible initial odd row.
\end{lemma}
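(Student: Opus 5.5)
We argue by backward induction over the error steps, using Lemma~\ref{lem:row-update} to reverse one step at a time. Write the trace as $E^{(0)},E^{(1)},\ldots,E^{(t)}$ with insertion coordinates known at every step. Let $O^{(0)},\ldots,O^{(t)}$ be the unknown odd rows, with $O^{(t)}$ given. It suffices to show that, given the even-row data of step $s$ together with $O^{(s)}$, the row $O^{(s-1)}$ is uniquely determined. Iterating this from $s=t$ down to $s=1$ then leaves at most one compatible initial odd row; if no candidate is consistent at some step, there are none, which is still ``at most one''.

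For a single step, Lemma~\ref{lem:row-update} says that $O^{(s)}=\InsAfter_{j}(O^{(s-1)};E^{(s-1)}_{j+\delta})$, where $i_s=2j+\delta$. The first task is therefore to recover $(j,\delta)$, and hence the start $i_s$, from the known even-row data. The statement says the insertion coordinate is specified; if this means the even-row insertion position $j+\delta$, then $(j,\delta)$ is not yet determined. In that case we use the inserted even symbol: by Lemma~\ref{lem:row-update} it equals $O^{(s-1)}_j$, and we compare it with the known row.

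The cleanest route is to treat the specified data as including the pair $(j,\delta)$, which is what ``insertion coordinate'' means for the duplication start $i_s$. Then the odd-row insertion position $j$ is known, and deleting coordinate $j+1$ from $O^{(s)}$ returns $O^{(s-1)}$ uniquely, because $\InsAfter_j(\cdot;z)$ is injective once the position is fixed. This gives the single-step uniqueness directly.

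If only the even-row position $p=j+\delta$ is given, there are two candidates: $(j,\delta)=(p,0)$ or $(p-1,1)$. Each candidate forces a deletion position in $O^{(s)}$, at index $p+1$ or $p$ respectively. It also imposes a consistency check: the inserted odd symbol must equal $E^{(s-1)}_{j+\delta}=E^{(s-1)}_p$. Both candidates delete from the same area, namely the two adjacent entries $O^{(s)}_p$ and $O^{(s)}_{p+1}$. The two resulting rows differ only in which of these two entries is removed. They are therefore equal unless $O^{(s)}_p\neq O^{(s)}_{p+1}$. In that case exactly one of the two entries equals the required symbol $E^{(s-1)}_p$, so the check picks out a unique deletion.

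Hence in every case $O^{(s-1)}$ is unique. This ambiguity analysis of the two lifts is the only delicate point. The rest is routine induction.
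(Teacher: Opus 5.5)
Your proposal is correct and, in its final case, follows the paper's argument: reverse one step via Lemma~\ref{lem:row-update}, note that the two possible values of $\delta$ for a given even-row position force deletion of the required symbol $E^{(s-1)}_p$ at one of two adjacent odd-row coordinates, observe that either only one of them carries that symbol or both do and the two deletions coincide, and then induct backward over the trace. The paper reads ``insertion coordinate'' as the even-row position only, which is also how the lemma is used in Theorem~\ref{thm:rll-list-upper}, so that is the case that matters; your reading with $(j,\delta)$ known is a trivial special case.
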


\begin{proof}
Reverse one step.  Suppose the new even-row symbol was inserted at coordinate
$r\geq1$.  Let $a$ be the symbol at coordinate $r-1$ of the even row before
that insertion.  Its companion inserted into the odd row is $a$.  Depending
on whether the forward start had $\delta=0$ or $\delta=1$, the reverse step deletes
$a$ at odd-row coordinate $r$ or $r-1$ (the latter only when legal).  If only
one coordinate contains $a$, the deletion is forced.  If both do, the two
adjacent symbols are equal, and deleting either produces the same predecessor
word.  Boundary and inserted-value checks may reject a candidate but cannot
create another.  Induction over the reversed trace proves uniqueness.
\end{proof}

\subsection{Pointwise list bounds and linear fixed-row ambiguity}

A known even row can leave several sources compatible with one received
word.  A run-length bound controls this list, although linear ambiguity can
persist.  Suppose $|E|=m$ and every run of $E$ has length at most $L$.  After two
PAL2 errors, the received even row has length $m+2$.

\begin{lemma}[Deletion multiplicity]
\label{lem:deletion-multiplicity}
If deleting a coordinate of a word $w$ produces a fixed word $E$, then all
coordinates whose deletion produces $E$ form one contiguous run of equal
symbols in $w$.  Their number is at most $L+1$.
\end{lemma}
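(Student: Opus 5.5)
The plan has two parts: first show that the valid deletion positions form one run of equal symbols, then bound the length of that run using the hypothesis on the runs of $E$.

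For the first part, write $w=w_0\cdots w_m$, so that $|w|=m+1$ and $|E|=m$. Let $D\subseteq\{0,\ldots,m\}$ be the set of positions whose deletion yields $E$, and take two positions $p<p'$ in $D$. Deleting $p$ gives $w_0\cdots w_{p-1}w_{p+1}\cdots w_m$, and deleting $p'$ gives $w_0\cdots w_{p'-1}w_{p'+1}\cdots w_m$. These two words are both equal to $E$. Compare them at each coordinate $c$ with $p\le c<p'$: the first word has $w_{c+1}$ there and the second has $w_c$. Hence $w_c=w_{c+1}$ for every $p\le c<p'$, so $w_p=w_{p+1}=\cdots=w_{p'}$. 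The same coordinate comparison shows that every intermediate position $p''$ with $p<p''<p'$ also lies in $D$, because deleting any symbol of a constant run gives the same word. Therefore $D$ is a contiguous interval of coordinates on which $w$ is constant, and it lies inside a single maximal run of $w$.

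For the second part, note that if $D$ is nonempty, then deleting one symbol from the maximal run of $w$ containing $D$ turns it into a run of $E$ that is shorter by exactly one. Indeed, deleting one symbol inside a run neither merges nor splits runs, since its neighbours within the run are equal to it. That run of $E$ has length at most $L$ by hypothesis. So the run of $w$ has length at most $L+1$, and consequently $|D|\le L+1$.

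The only delicate point is the degenerate case where the run of $w$ has length one: deleting it would merge its two neighbouring runs. In that case $|D|\le 1\le L+1$ holds trivially, so it needs no further argument. Apart from that, the argument is just the coordinate comparison in the first part, which I expect to be routine.
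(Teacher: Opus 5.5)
Your proof is correct and follows the same route as the paper's: you show that two deletions yielding the same word force all symbols between the deleted coordinates to be equal, and you then bound the run length by noting that deleting one of its symbols leaves a run of length $R-1$ in $E$. Your coordinate-by-coordinate comparison and your separate treatment of a singleton run spell out details that the paper's two-line proof leaves implicit.
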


\begin{proof}
Two different deletions give the same word exactly when all symbols between
the deleted coordinates are equal.  Deleting one symbol from such a run of
length $R$ leaves a run of length at least $R-1$ in $E$, so $R\leq L+1$.
\end{proof}

\begin{theorem}[A two-error list bound]
\label{thm:rll-list-upper}
For a fixed original even row $E$ and a fixed received word $y$, the number
of compatible initial odd rows under two PAL2 errors is at most
\[
  (m+2)(L+1).
\]
\end{theorem}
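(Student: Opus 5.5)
We would bound the number of compatible initial odd rows by bounding the number of compatible complete even-row traces. Lemma~\ref{lem:trace-unique} then leaves at most one initial odd row per trace. The received word $y$ fixes its final even row $E''$, which has length $m+2$, and its final odd row. A two-error even-row trace consists of the intermediate row $E'$ of length $m+1$ together with the two insertion coordinates. We count these data in reverse.

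First we reverse the second error. By Lemma~\ref{lem:row-update}, the second operation inserts one symbol into the even row, so $E'$ is obtained from $E''$ by deleting one coordinate. There are at most $m+2$ choices for this coordinate, and hence at most $m+2$ pairs consisting of an intermediate row $E'$ and a second insertion coordinate.

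Next we reverse the first error. The original row $E$ is fixed and must arise from $E'$ by deleting one coordinate. By Lemma~\ref{lem:deletion-multiplicity}, applied with $w=E'$ and run bound $L$ on $E$, the coordinates whose deletion yields $E$ form a single run of length at most $L+1$. So for each $E'$ there are at most $L+1$ first insertion coordinates.

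Multiplying, there are at most $(m+2)(L+1)$ complete even-row traces, each consisting of intermediate rows and insertion coordinates. Each trace, together with the final odd row read from $y$, determines at most one initial odd row by Lemma~\ref{lem:trace-unique}. Hence at most $(m+2)(L+1)$ initial odd rows are compatible.

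The main point needing care is the trace notion itself. Lemma~\ref{lem:trace-unique} requires the insertion coordinates, not only the rows. Distinct insertion coordinates giving the same row are therefore counted separately, and that is exactly what the two bounds above count. We must also confirm that every two-error PAL2 history, including nested ones that reuse inserted symbols, produces such a trace. This holds because Lemma~\ref{lem:row-update} applies at every step, so the count covers all histories.
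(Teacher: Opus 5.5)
Your proposal is correct and follows essentially the same argument as the paper. You undo the last insertion via one of at most $m+2$ deletion coordinates in the received even row, and Lemma~\ref{lem:deletion-multiplicity} then leaves at most $L+1$ coordinates returning $E$. Lemma~\ref{lem:trace-unique} allows at most one initial odd row per resulting ordered trace; because each deletion coordinate fixes the corresponding insertion coordinate, you are counting ordered traces, which is what that lemma needs.
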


\begin{proof}
In reverse, choose one of at most $m+2$ first deletion coordinates in the
received even row.  In the intermediate row,
Lemma~\ref{lem:deletion-multiplicity} leaves at most $L+1$ second deletion
coordinates that return $E$.  This already counts both deletion orders.  For
each ordered even-row trace, Lemma~\ref{lem:trace-unique} permits at most one
initial odd row.
\end{proof}

The dependence on $m$ cannot be removed, even under the strongest
nonconstant run-length constraint.

For equal-length words $a$ and $b$, write
\[
  \weave(a,b)=a_0b_0a_1b_1\cdots
\]
for their coordinate-wise interleaving.

\begin{theorem}[Linear ambiguity at maximum run length one]
\label{thm:rll-obstruction}
For infinitely many source lengths $n$, there are a fixed even row $E$ with
maximum run length one, a fixed received word $y$, and at least
$(n+2)/4$ distinct initial odd rows whose woven sources all reach $y$ after
two PAL2 errors.
\end{theorem}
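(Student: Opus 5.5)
The plan is to give an explicit periodic family and to verify it with the row dynamics of Lemma~\ref{lem:row-update}, not with whole words. I fix the even row to be alternating, $E=(0101\cdots)$ of length $m$, so its maximum run length is one. I will look for a received word $y$ and a collection of odd rows $O^{(s)}$, indexed by a parameter $s$ running over about $n/4$ values, together with two-step histories $h_s=((i_1^{(s)},2),(i_2^{(s)},2))$ satisfying
\[
  T^{\pal}_{h_s}\bigl(\weave(E,O^{(s)})\bigr)=y
\]
for every $s$.

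\textbf{Step 1: find the building block.} I would start from the one-error collision in Proposition~\ref{prop:even-insufficient}, namely $001000\to00110000\leftarrow001100$. That collision rests on a local ambiguity: the factor $0110$ in the output can be read either as the duplication of $01$ at an odd start or as an already present $11$ that was duplicated elsewhere. The aim is to make this ambiguity ``travel.'' Take a source in which a single odd-row defect, say one odd position carrying the opposite bit from its neighbours in a background pattern, sits at position $s$. Choose the first duplication at start $2s+\delta$ so that, by Lemma~\ref{lem:row-update}, it inserts $O_s$ into the even row and $E_{s+\delta}$ into the odd row. The second duplication is chosen so that the combined effect on both rows no longer depends on $s$. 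Concretely, I would pick $i_2^{(s)}$ so that the second insertion repairs the even row back to a fixed word of length $m+2$, and so that the odd-row insertions ``absorb'' the defect into a fixed run. I would fix the exact periodic background (period $4$ in $u$, which is where the count $(n+2)/4$ should come from) by hand on small $n$, say $n=6,10,14$, before generalizing.

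\textbf{Step 2: verify the general identity.} I would apply Lemma~\ref{lem:row-update} twice, tracking the even and odd rows separately. The even-row computation is the easy part: the source even row is always $E$, and each insertion must create a fixed final even row. For the odd row, I would show that after both insertions the word is the same for every $s$. This amounts to showing that inserting the appropriate bits next to the defect yields the same word as inserting them next to a different copy. The reason is that deleting any symbol of a run gives the same result, which is the mechanism of Lemma~\ref{lem:deletion-multiplicity} used in the forward direction. The two rows then weave into a common $y$. I would also check that every start is legal, in particular at the right boundary.

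\textbf{Step 3: distinctness and counting.} The odd rows $O^{(s)}$ differ in the location of the defect, so they are pairwise distinct. Counting the admissible values of $s$ for $n\equiv 2\pmod 4$ (or whichever residue class the construction requires) should give at least $(n+2)/4$ of them, and this yields infinitely many $n$.

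I expect the main obstacle to be Step 1: choosing the periodic background and the second start so that the final word is genuinely independent of $s$, while the even row stays alternating in the source. Lemma~\ref{lem:trace-unique} shows that for a fixed even-row trace the odd row is unique. So the different sources must use different even-row traces that nevertheless end at the same even row, and I would design the family around this, with the even-row insertion positions shifting with $s$ inside a long alternating stretch, where different insertions can coincide.
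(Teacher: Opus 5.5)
\textbf{Gap: the family is never constructed.} Step~1, which you yourself flag as the main obstacle, is deferred to ``fix the exact periodic background by hand on small $n$''. So Steps~2 and~3 verify and count objects that have not been defined. For this theorem, the explicit family and its two-step verification are the whole proof.

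Your hints also point the wrong way. Each PAL2 step inserts exactly one symbol into each row (Lemma~\ref{lem:row-update}). Hence every source odd row must be a two-symbol deletion of the one odd row $\widehat O$ of $y$. The even-row traces must also be pairwise distinct (Lemma~\ref{lem:trace-unique}) yet end at one fixed row. For linearly many traces, that row must be alternating.

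With $E$ alternating, ``period $4$ in $u$'' means a constant or an alternating odd background.
\begin{itemize}
\item In the constant case, the words $0^a10^b$ with $a+b$ fixed have a common supersequence with two extra symbols for at most four values of $a$. So the family stays bounded.
\item In the alternating case, $\widehat O$ must itself be alternating. A flipped bit leaves a run $000$, and repairing it needs two equal bits inserted one symbol apart. Working through Lemma~\ref{lem:row-update} with the even row alternating before and after, two coupled steps either turn an $a$ into $aaa$ or insert one $a$ and one $\bar a$. Neither produces that repair.
\end{itemize}

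Your Step~2 appeal to ``deleting any symbol of a run gives the same result'' also belongs to the wrong row. It explains how distinct even-row traces merge, as your last paragraph correctly says. Distinct sources instead need distinct two-deletions of $\widehat O$.

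\textbf{The paper's construction.} This is what fills Step~1.
\begin{itemize}
\item Take $E_r=(01)^{2r-1}0$ and final rows $\widehat E_r=(01)^{2r}0$ and $Q_r=(0100)^r0$, so $y_r=\weave(\widehat E_r,Q_r)=(00110010)^r00$.
\item For $0\le s<r$, let $Q^A_{r,s}$ delete the block $01$ of $Q_r$ at coordinates $4s,4s+1$, and let $Q^B_{r,s}$ delete the block $00$ at coordinates $4s+2,4s+3$.
\item Two applications of Lemma~\ref{lem:row-update} show that the starts $(8s+1,8s+3)$ take $\weave(E_r,Q^A_{r,s})$ to $y_r$, and the starts $(8s+4,8s+3)$ take $\weave(E_r,Q^B_{r,s})$ to $y_r$.
\item In the even row, each history inserts an adjacent pair into the alternating row. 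This is exactly where distinct traces merge.
\item The $A$-rows contain $r-1$ ones and the $B$-rows contain $r$ ones. Within each kind, the deleted location differs. Hence all $2r$ sources are distinct.
\item The source length is $n=8r-2$, so $2r=(n+2)/4$.
\end{itemize}
The count therefore comes from two interleaved subfamilies on a period-$8$ background in $u$, not from a single defect type on a period-$4$ background.
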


\begin{proof}
For $r\geq1$, define
\[
  E_r=(01)^{2r-1}0,
  \qquad \widehat E_r=(01)^{2r}0,
  \qquad Q_r=(0100)^r0.
\]
Put
\[
  y_r=\weave(\widehat E_r,Q_r)=(00110010)^r00.
\]
For $0\leq s<r$, let $Q^A_{r,s}$ be obtained from $Q_r$ by deleting the
block $01$ at coordinates $4s,4s+1$, and let $Q^B_{r,s}$ be obtained by
deleting the block $00$ at coordinates $4s+2,4s+3$.  Define
\[
  u^A_{r,s}=\weave(E_r,Q^A_{r,s}),
  \qquad
  u^B_{r,s}=\weave(E_r,Q^B_{r,s}).
\]
Applying Lemma~\ref{lem:row-update} twice gives
\begin{align*}
  u^A_{r,s}&\xrightarrow{\,8s+1\,}\cdot
             \xrightarrow{\,8s+3\,}y_r,\\
  u^B_{r,s}&\xrightarrow{\,8s+4\,}\cdot
             \xrightarrow{\,8s+3\,}y_r.
\end{align*}
Indeed, in each case the two row insertions restore $\widehat E_r$ and
$Q_r$.  The $A$-family deletes one of the $r$ ones of $Q_r$; the $B$-family
retains all of those ones and shortens a different zero gap.  Hence all $2r$
sources are distinct.  Their common length is $n=8r-2$, while $E_r$ has
maximum run length one.  Therefore $2r=(n+2)/4$ distinct odd rows share the
same $E_r$ and $y_r$.
\end{proof}

\begin{corollary}[Conditional cost in a separated-row architecture]
\label{cor:rll-architecture}
Consider a separated-row construction whose constraints other than a final
odd-row side syndrome admit all $2r$ sources in the family of
Theorem~\ref{thm:rll-obstruction}.  If every value class of that side
syndrome must correct two PAL2 errors within this otherwise admissible
fixed-$E_r$ family, then the syndrome needs at least $2r=(n+2)/4$ values,
or $\log_2 n-O(1)$ bits.  In particular, a range of size
$\operatorname{poly}(L)$ cannot complete this architecture when $L=1$.
\end{corollary}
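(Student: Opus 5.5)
The corollary has two parts.  The first is a counting lower bound: a side syndrome that separates the $2r$ sources of Theorem~\ref{thm:rll-obstruction} must take at least $2r$ values.  The second is an arithmetic translation of that count into bits and into the $\operatorname{poly}(L)$ remark.

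\emph{Pigeonhole.}  Fix $r$ and the common data $E_r$ and $y_r$ from Theorem~\ref{thm:rll-obstruction}.  By hypothesis, every constraint of the architecture other than the odd-row side syndrome admits all $2r$ sources $u^A_{r,s},u^B_{r,s}$.  Suppose the syndrome took fewer than $2r$ values on this family.  Then two distinct admissible sources would share a syndrome value, so both would lie in a single value class.  Both reach the same received word $y_r$ after two PAL2 errors, as the explicit histories in that theorem show.  Their exact-two descendant spheres therefore intersect, and this value class does not correct two PAL2 errors, contradicting the requirement.  Hence the syndrome restricted to the family is injective and has at least $2r$ values.

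\emph{Arithmetic.}  Since $n=8r-2$, we have $2r=(n+2)/4$.  Thus the syndrome needs at least $\log_2((n+2)/4)=\log_2 n-O(1)$ bits.

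\emph{The $\operatorname{poly}(L)$ remark.}  When $L=1$, any range of size $\operatorname{poly}(L)$ is $O(1)$.  Such a range falls below $(n+2)/4$ for all large $n$ in the infinite family, so it cannot complete the architecture.

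The only delicate point is matching the hypotheses exactly.  The side syndrome is a function of the final odd row, equivalently the initial odd row, since $E_r$ is fixed across the family.  So the argument must use only that the members share $E_r$ and $y_r$, and not how the decoder would exploit other information.  Once this is stated carefully, the rest is immediate.
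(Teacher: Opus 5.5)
Your proposal is correct and follows the paper's proof: all $2r$ admissible sources share the exact-two descendant $y_r$, so no two can lie in one syndrome value class, and the bit count and the $\operatorname{poly}(L)$ remark follow from $n=8r-2$. One small quibble about your last paragraph: ``final'' in the statement means the last component of the architecture, a side syndrome evaluated on each source's odd row. It does not mean a function of the received word's odd row, which is the same word $Q_r$ for every member of the family. Your pigeonhole step is unaffected, since it uses only that the syndrome is a function on the distinct sources.
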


\begin{proof}
All $2r$ otherwise admissible sources have the same exact-two descendant
$y_r$.  No two can therefore belong to the same error-correcting syndrome
class.
\end{proof}

\subsection{Regularity localization}

Weak regularity can localize the candidate traces for one received word.
We first isolate the regularity-free alignment result used in the argument,
then derive the localization and trace-list bounds.  The clique construction
in the following subsection shows why these pointwise bounds do not by
themselves yield a small universal side syndrome.

For an integer $\ell\geq1$, a binary word is called
\emph{weakly $\ell$-regular} if
\begin{enumerate}
  \item every constant run has length at most $\ell$; and
  \item every alternating substring, of the form $0101\cdots$ or
  $1010\cdots$, has length at most $\ell$.
\end{enumerate}
Only these two properties are used below.  In particular, weak
$\ell$-regularity is distinct from the $d$-regularity condition used in the
deletion-code literature.

For a word $x$ and a word $z$ shorter by two symbols, let
$\mathcal P(x,z)$ be the set of unordered pairs $\{p,q\}$, where
$0\leq p<q<|x|$, such that deleting coordinates $p$ and $q$ from $x$
produces $z$.

We use the following regularity-free result of Song and
Cai~\cite[Lemma~11 of the preprint version]{SongCai2023}.

\begin{lemma}[Two-deletion alignment]
\label{lem:two-deletion-alignment}
Let $x$ be a binary word and $z$ a word shorter by two symbols.  Let
$\{p,q\},\{p',q'\}\in\mathcal P(x,z)$, with $p<q$, $p'<q'$, and, after
interchanging the two pairs if necessary, $p\leq p'$.  Then one of the
following holds:
\begin{enumerate}
  \item $p,p'$ lie in one run of $x$, and $q,q'$ lie in one run of $x$;
  or
  \item there is an alternating substring $x_a\cdots x_b$ of length at
  least three such that $p$ and $a$ lie in one run, $q=a+1$,
  $p'=b-1$, and $q'$ and $b$ lie in one run.
\end{enumerate}
\end{lemma}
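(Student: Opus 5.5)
The plan is to prove Lemma~\ref{lem:two-deletion-alignment} by comparing two deletion embeddings of $z$ into $x$ coordinate by coordinate.

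For a pair $\{p,q\}$ with $p<q$, the word $z$ is $x$ with coordinates $p,q$ removed. Hence
\[
z_s=x_s \ (s<p),\qquad z_s=x_{s+1}\ (p\le s<q-1),\qquad z_s=x_{s+2}\ (s\ge q-1).
\]
Write the same three regimes for $\{p',q'\}$ and equate the two descriptions of $z$ on each interval determined by the four breakpoints. Every such equation is either trivial or a shift identity. A shift by $1$ forces a constant stretch of $x$, and a shift by $2$ forces a stretch of period two, that is, constant or alternating.

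I will split into cases according to the relative order of $p,p'$ and $q,q'$.

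\emph{Case $q\le p'$.} On $[p,q-1)$ the first description uses shift $1$ and the second uses shift $0$, giving $x_s=x_{s+1}$. On $[q-1,p')$ the shifts are $2$ and $0$, giving $x_s=x_{s+2}$. On $[p',q'-1)$ the shifts are $2$ and $1$, giving $x_{s+1}=x_{s+2}$. The middle stretch $x_{q-1}\cdots x_{p'+1}$ therefore has period two. If it is constant, $p$ and $p'$ merge into one run, and likewise $q$ and $q'$; this is alternative~1. If it is alternating, let $a$ and $b$ be its endpoints, so that $a$ lies in the run of $p$ and $b$ in the run of $q'$. A genuinely alternating stretch cannot contain the constant-forced pieces, so $q=a+1$ and $p'=b-1$ are pinned, which gives alternative~2.

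\emph{Case $p\le p'<q$.} The identities from the regions $[p,p')$ and $[q,q')$ (or $[q',q)$) are shift-$1$ identities. They place $p,p'$ in one run and $q,q'$ in one run directly, giving alternative~1.

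The main obstacle is the first case. There one must check carefully the boundary indices where the regimes change, and confirm that a nonconstant period-two stretch has length at least three and its endpoints exactly as stated. The small cases where intervals are empty also need checking, for instance $q=p'$, or $q'=p'+1$. I would try to verify these boundary conditions by writing out the index equalities explicitly at the junctions $s=q-2,q-1$ and $s=p'-1,p'$.

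Alternatively, since this is a cited result (Song and Cai), one may simply invoke it rather than reprove it.
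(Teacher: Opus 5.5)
Your proof is correct. The paper gives no proof of this lemma: it imports it from Song and Cai (Lemma~11 of their preprint) and uses it as a black box in Corollary~\ref{cor:regularity-localization}. Your comparison of the two index maps $s\mapsto s,\,s+1,\,s+2$ is a complete self-contained replacement.

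For $p\le p'<q$, only two shift-one regions survive. They give $x_p=\cdots=x_{p'}$, and either $x_q=\cdots=x_{q'}$ or $x_{q'}=\cdots=x_q$. The second identity comes from $z$-indices between $q-1$ and $q'-1$, not from the intervals $[q,q')$ or $[q',q)$ as you wrote.

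For $q\le p'$, the three regions give the following:
\begin{itemize}
  \item $x_p=\cdots=x_{q-1}$;
  \item $x_s=x_{s+2}$ for $q-1\le s\le p'-1$;
  \item $x_{p'+1}=\cdots=x_{q'}$.
\end{itemize}
Binarity enters only here: the period-two stretch is constant or alternating according to whether $x_{q-1}=x_q$.

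In the alternating case, set $a=q-1$ and $b=p'+1$ explicitly. Then $q=a+1$ and $p'=b-1$ hold by definition, and the length is $p'-q+3\geq3$ because $q\le p'$. Your ``pinning'' remark is unnecessary. Do not replace the stretch by the maximal alternating substring containing it. For $x=1010$ with the pairs $\{1,2\}$ and $\{2,3\}$, the maximal substring begins at coordinate $0$, and $q=a+1$ then fails.

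What your route buys over the citation is a direct check of the exact form the paper relies on: the normalization $p\le p'$, and an alternating substring that need not be maximal.
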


Now let $E_y$ be obtained from $E$ by two insertions.  Equivalently,
$\mathcal P(E_y,E)$ is nonempty.

\begin{lemma}[Regularity after two insertions]
\label{lem:regularity-transfer}
If $E$ is weakly $\ell$-regular, then every constant run of $E_y$ has length
at most $\ell+2$, and every alternating substring of $E_y$ has length at
most $3\ell+2$.
\end{lemma}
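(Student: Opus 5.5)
Since $\mathcal P(E_y,E)$ is nonempty, we fix one pair $\{p,q\}$ with $p<q$ whose deletion from $E_y$ returns $E$. Equivalently, $E_y$ is obtained from $E$ by inserting two symbols. Both bounds then follow from one elementary observation: a factor of $E_y$ that avoids the inserted coordinates $p,q$ is a factor of $E$. More precisely, deleting $p$ and $q$ maps every factor $F=(E_y)_c\cdots(E_y)_d$ to a word $F'$ that is a factor of $E$. The length of $F'$ is $|F|$ minus the number of deleted coordinates lying in $[c,d]$, which is at most two.

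\emph{Constant runs.} Let $F$ be a constant run of $E_y$. Deleting coordinates from a constant word leaves a constant word, so $F'$ is a constant factor of $E$. Hence $|F'|\le\ell$, and therefore $|F|\le|F'|+2\le\ell+2$. This settles the first claim.

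\emph{Alternating substrings.} Let $F$ be an alternating substring of $E_y$. The coordinates $p,q$ that fall inside $F$ cut it into at most three maximal pieces avoiding $p,q$. Each piece is alternating, and after the deletions it sits contiguously in $E$, so each has length at most $\ell$. Adding back the at most two deleted coordinates gives $|F|\le 3\ell+2$.

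The step needing care is that the bound must be argued piecewise. After a deletion inside an alternating word, the two neighbours of the deleted symbol are equal, so $F'$ itself need not be alternating. We therefore never bound $F'$ as a whole; we only use that each piece between deleted coordinates is a contiguous alternating factor of $E$. The pieces separated by deleted coordinates are contiguous in $F$ with the deleted symbols between them, which is exactly where the additive $+2$ comes from. Degenerate cases, where $p$ or $q$ is an endpoint of $F$ or lies outside $F$, give only fewer or shorter pieces. Lemma~\ref{lem:two-deletion-alignment} is not needed here; it becomes relevant only for the localization results that follow.
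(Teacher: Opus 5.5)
Your proof is correct and is essentially the paper's argument: you delete the two inserted coordinates, which you make precise by fixing one pair $\{p,q\}\in\mathcal P(E_y,E)$ where the paper leaves this implicit. The survivors of a constant run then form one constant factor of $E$, the survivors of an alternating substring form at most three alternating factors of $E$, and adding back the two deleted positions gives the bounds. Your remark that $F'$ need not itself be alternating, so the bound must be taken piece by piece, is exactly why the paper's second bound is $3\ell+2$ rather than $\ell+2$.
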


\begin{proof}
Delete from a constant run of $E_y$ all inserted symbols that it contains.
If an original symbol remains, the surviving symbols form a contiguous
constant substring of $E$; otherwise the run has length at most two.  In
either case, the run has length at most $\ell+2$.

Likewise, deleting the inserted symbols from an alternating substring of
$E_y$ partitions the surviving original symbols into at most three
contiguous alternating substrings of $E$.  Each has length at most $\ell$, and
the two deleted symbols contribute at most two more positions.  The original
alternating substring therefore has length at most $3\ell+2$.
\end{proof}

\begin{corollary}[Localization under weak regularity]
\label{cor:regularity-localization}
Suppose that $m\geq1$, $E\in\{0,1\}^m$ is weakly $\ell$-regular, and
$E_y\in\{0,1\}^{m+2}$ is obtained from $E$ by two insertions.  The pair
$(E,E_y)$ determines one of the following alternatives:
\begin{enumerate}
  \item two distinct runs $\mathcal R_1,\mathcal R_2$ of $E_y$ such that
  every pair of deletions taking $E_y$ to $E$ removes one symbol from each
  run; or
  \item an interval $J\subseteq\{0,\ldots,m+1\}$ containing both deletion
  coordinates of every pair taking $E_y$ to $E$, with
  \[
    |J|\leq 5\ell+6.
  \]
\end{enumerate}
\end{corollary}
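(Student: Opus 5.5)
The plan is to apply Lemma~\ref{lem:two-deletion-alignment} to $x=E_y$ and $z=E$, using one fixed reference pair $\{p_0,q_0\}\in\mathcal P(E_y,E)$. That set is nonempty by hypothesis and is computable from $(E,E_y)$. Comparing every other pair $\{p,q\}\in\mathcal P(E_y,E)$ with $\{p_0,q_0\}$, the lemma offers two outcomes: a run-aligned case, where $p,p_0$ share a run and $q,q_0$ share a run, and an alternating case, where the pairs straddle an alternating substring $x_a\cdots x_b$. I will split according to whether the alternating case ever occurs. All of this depends only on $(E,E_y)$, so the resulting alternative is determined by the pair.

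\emph{Case A: every pair is run-aligned with the reference.} Let $\mathcal R_1$ be the run of $E_y$ containing $p_0$ and $\mathcal R_2$ the run containing $q_0$. If $\mathcal R_1\ne\mathcal R_2$, every pair removes one symbol from each run; indeed $p<q$ together with $p\in\mathcal R_1$, $q\in\mathcal R_2$ forces this orientation. That is alternative~1. If $\mathcal R_1=\mathcal R_2$, all deletion coordinates lie in a single run of $E_y$. By Lemma~\ref{lem:regularity-transfer} that run has length at most $\ell+2\le 5\ell+6$, so the run itself serves as $J$ in alternative~2.

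\emph{Case B: some pair $\{p',q'\}$ is in the alternating relation with the reference.} Here I will take $J$ to be the interval consisting of the run containing the left endpoint, the alternating substring $x_a\cdots x_b$, and the run containing $b$ (or $q'$). By Lemma~\ref{lem:regularity-transfer}, the alternating substring has length at most $3\ell+2$, and each of the two flanking runs has length at most $\ell+2$. The overlaps at $a$ and $b$ only help, so $|J|\le(3\ell+2)+2(\ell+2)=5\ell+6$, matching the stated bound.

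The main obstacle is showing that a \emph{single} interval of this size covers every pair simultaneously, and not just the two pairs being compared. I would handle it by comparing each pair with both $\{p_0,q_0\}$ and $\{p',q'\}$. A pair that is run-aligned with one of them has its coordinates in the runs of $p_0,q_0$ or of $p',q'$, and these runs are $\mathcal R(a)$, $\{a+1\}$, $\{b-1\}$ and $\mathcal R(b)$, all inside $J$. A pair that is alternating-related to one of them must use an alternating substring beginning in the run of its left point and ending in the run of its right point. Because $p_0$'s run contains $a$, maximality of alternating substrings should force this substring to be the same maximal alternating stretch. I expect the delicate part to be this uniqueness argument, which needs care in choosing $x_a\cdots x_b$ maximal, together with the degenerate cases $q=a+1$ and $p'=b-1$. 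If the direct argument fails, the fallback is to choose the reference pair as the lexicographically smallest element of $\mathcal P(E_y,E)$, which fixes the orientation $p\le p'$ in every comparison.
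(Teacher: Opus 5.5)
There is a genuine gap in Case~B, though the overall plan is sound and is the paper's plan: a canonical reference pair $\{p_0,q_0\}\in\mathcal P(E_y,E)$, Lemma~\ref{lem:two-deletion-alignment} applied with $x=E_y$, and a choice between runs of the reference and an alternating window. Your Case~A is correct. Splitting on whether an alternating comparison actually occurs is a valid variant of the paper's split, which instead tests whether $\mathcal R(p_0)$ and $\mathcal R(q_0)$ are adjacent with one of them a singleton.

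\emph{The gap.} In Case~B the covering step is only asserted, and the interval you define from the window $x_a\cdots x_b$ of one particular comparison is wrong in general. Take $E=0101$, which is weakly $4$-regular, and $E_y=010101$. The pairs $\{0,1\},\{1,2\},\{2,3\},\{3,4\},\{4,5\}$ all lie in $\mathcal P(E_y,E)$. With reference $\{0,1\}$ and $\{p',q'\}=\{2,3\}$, all runs are singletons, so the lemma forces the window $[0,3]$ and your $J=[0,3]$. Yet $\{4,5\}$ is alternating-related to the reference through $[0,5]$ and lies outside $J$. Also comparing $\{4,5\}$ with $\{p',q'\}$ gives the window $[2,5]$, which does not help, because nothing controls how far such windows extend.

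\emph{The missing idea.} Every window produced by a comparison \emph{with the reference} contains one fixed edge.
\begin{itemize}
\item If the reference is the left member, then $p_0\in\mathcal R(a)$, $q_0=a+1$ and $x_a\neq x_{a+1}$ make $a$ the last coordinate of $\mathcal R(p_0)$, and $\mathcal R(q_0)=\{q_0\}$.
\item If the reference is the right member, symmetrically $\mathcal R(p_0)=\{b-1\}$ and $b$ is the first coordinate of $\mathcal R(q_0)$.
\end{itemize}
In both cases the window contains the boundary edge between $\mathcal R(p_0)$ and $\mathcal R(q_0)$. Hence it lies in the unique maximal alternating substring $[\alpha,\beta]$ through that edge.

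Put $J=\mathcal R(\alpha)\cup[\alpha,\beta]\cup\mathcal R(\beta)$. Comparison with the reference alone then suffices:
\begin{itemize}
\item a run-aligned pair lies in $\mathcal R(p_0)\cup\mathcal R(q_0)\subseteq J$;
\item an alternating-related pair has one coordinate in $[\alpha,\beta]$, and the other in the run of a window endpoint. That run is either a singleton inside $[\alpha,\beta]$ or one of the adjoined endpoint runs.
\end{itemize}
Lemma~\ref{lem:regularity-transfer} still gives $|J|\leq(\ell+2)+(3\ell+2)+(\ell+2)=5\ell+6$.

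This completion is exactly the paper's proof. The paper takes the lexicographically first pair as reference, which is your fallback. That choice makes the reference canonical and fixes the orientation $p_0\leq p''$. With the two-sided maximal window, the orientation is a convenience rather than a necessity.
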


\begin{proof}
Lemma~\ref{lem:regularity-transfer} shows that every run of $E_y$ has length
at most $\ell+2$ and every alternating substring has length at most
$3\ell+2$.

Choose the lexicographically first pair $\{p,q\}\in\mathcal P(E_y,E)$;
this choice is determined by $(E,E_y)$.  If $p$ and $q$ lie in one run
$\mathcal R$, the second alternative of
Lemma~\ref{lem:two-deletion-alignment} is impossible, because it places the
two coordinates of each deletion pair in different runs.  The first
alternative therefore confines every pair in $\mathcal P(E_y,E)$ to
$\mathcal R$.  Taking $J=\mathcal R$ gives the interval alternative.

Suppose next that $p$ and $q$ lie in distinct runs
$\mathcal R_1,\mathcal R_2$.  In the second alignment alternative, the two
coordinates of either deletion pair lie in adjacent runs, at least one of
which is a singleton.  Indeed, since the alternating substring has length at
least three, $q=a+1$ is flanked within it by equal symbols; similarly,
$p'=b-1$ is flanked within it by equal symbols.  Hence, if
$\mathcal R_1,\mathcal R_2$ are nonadjacent, or if they are adjacent and both
have length at least two, only the first alignment alternative can occur.
Every compatible pair then removes one symbol from each of these two runs,
which gives the first alternative of the corollary.

It remains that $\mathcal R_1,\mathcal R_2$ are adjacent and at least one
is a singleton.  Extend across their boundary to the maximal alternating
substring and adjoin the full runs containing its two endpoints.  The
resulting interval $J$ is determined by the chosen pair and hence by
$(E,E_y)$.  Under the first alignment alternative, a compatible pair lies
in $\mathcal R_1\cup\mathcal R_2\subseteq J$.  Under the second, the
lexicographic choice gives $p\leq p'$, so $q=a+1$ and $a=q-1$ is the last
coordinate of $\mathcal R_1$.  Thus $[a,b]$ crosses the chosen boundary and
lies in its maximal alternating substring.  The coordinate $p'=b-1$ lies
inside this substring, while $q'$ lies in the run containing $b$, which is
either a singleton inside the substring or one of its endpoint runs.
Hence every compatible coordinate lies in $J$.  This interval
consists of one alternating substring together with at most two boundary
runs, so
\[
  |J|\leq (\ell+2)+(3\ell+2)+(\ell+2)=5\ell+6.
\]
This proves the second alternative.
\end{proof}

Lemma~\ref{lem:two-deletion-alignment} has no regularity hypothesis.  Ye
et al.~\cite[Lemma~IV.3]{YeSunYuGeElishco2026} combine the corresponding
localization argument with $d$-regularity of the pre-deletion word.  The
corollary above instead applies the alignment lemma to the pre-deletion word
$E_y$ and obtains its required run and alternating-substring bounds from
weak $\ell$-regularity of the shorter word $E$ via
Lemma~\ref{lem:regularity-transfer}.

\begin{corollary}[Trace and pointwise source-list bounds]
\label{cor:regularity-trace-list}
The number of ordered two-insertion traces mapping $E$ to $E_y$ is at most
\[
  (5\ell+6)(5\ell+5).
\]
For fixed $E,E_y$, and final odd row $O_y$, the number of compatible initial
odd rows is bounded by the same quantity.
\end{corollary}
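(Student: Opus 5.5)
The plan is to count ordered two-insertion traces through Corollary~\ref{cor:regularity-localization} and then convert trace counts into source counts with Lemma~\ref{lem:trace-unique}.

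An ordered two-insertion trace from $E$ to $E_y$ is read in reverse: first the coordinate $p_1$ deleted from $E_y$, then the coordinate $p_2$ deleted from the resulting length-$(m+1)$ word. Every such trace corresponds to a pair $\{p,q\}\in\mathcal P(E_y,E)$ of original coordinates in $E_y$. Since the intermediate word is determined by $p_1$, the trace is determined by the ordered pair of $E_y$-coordinates deleted first and second. Each unordered pair therefore contributes at most two ordered traces, and the count reduces to bounding the possible ordered pairs of distinct $E_y$-coordinates.

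Next apply Corollary~\ref{cor:regularity-localization}, which gives two alternatives.

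In the interval alternative, both coordinates of every compatible pair lie in an interval $J$ with $|J|\leq5\ell+6$. The first deleted coordinate then has at most $5\ell+6$ choices. The second is a distinct element of $J$, so it has at most $5\ell+5$ choices. This gives $(5\ell+6)(5\ell+5)$ ordered traces.

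In the two-run alternative, each pair takes one coordinate from $\mathcal R_1$ and one from $\mathcal R_2$. Each run has length at most $\ell+2$ by Lemma~\ref{lem:regularity-transfer}. Counting both orders gives at most $2(\ell+2)^2$ ordered traces. This is at most $(5\ell+6)(5\ell+5)$ because $2(\ell+2)^2=2\ell^2+8\ell+8\leq25\ell^2+55\ell+30$.

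The one delicate point is the identification of traces with coordinate pairs. Two different ordered pairs may yield the same trace, and a given pair may be realized in either order; both effects only lower the count, so the bound stands. I would state this explicitly, noting that an insertion coordinate in the intermediate row is simply the image of an $E_y$-coordinate.

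For the second claim, fix $E$, $E_y$, and the final odd row $O_y$. Any compatible initial odd row $O$ comes from a source $\weave(E,O)$ whose two PAL2 errors produce, by Lemma~\ref{lem:row-update}, an even-row trace from $E$ to $E_y$. That trace specifies the intermediate even row and both insertion coordinates. By Lemma~\ref{lem:trace-unique}, each complete ordered even-row trace together with $O_y$ admits at most one initial odd row. Mapping each compatible $O$ to one realizing trace is therefore injective, so the number of compatible odd rows is at most the trace count $(5\ell+6)(5\ell+5)$.
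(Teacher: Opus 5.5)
Your proposal is correct and follows the paper's proof essentially step for step. You bound ordered traces by ordered pairs of distinct coordinates confined by Corollary~\ref{cor:regularity-localization}, which gives $|J|(|J|-1)\leq(5\ell+6)(5\ell+5)$ in the interval case and $2(\ell+2)^2$ in the two-run case, and then use Lemma~\ref{lem:trace-unique} to allow at most one initial odd row per trace; your explicit injection of traces into ordered pairs of $E_y$-coordinates is just a slightly more careful wording of the paper's ``at most two reverse deletion orders per pair.''
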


\begin{proof}
Each pair in $\mathcal P(E_y,E)$ has at most two reverse deletion orders,
and hence gives at most two ordered insertion traces.  In the two-run
alternative there are therefore at most
$2|\mathcal R_1||\mathcal R_2|\leq2(\ell+2)^2$ ordered traces.  In the
interval alternative there are at most
$2\binom{|J|}{2}=|J|(|J|-1)$.  The displayed bound dominates both
quantities.

Each ordered even-row trace specifies the intermediate even rows and
insertion coordinates.  For the fixed final odd row $O_y$,
Lemma~\ref{lem:trace-unique} therefore permits at most one initial odd row
per trace.
\end{proof}

For $\ell=O(\log n)$, the corollary gives an $O(\ell^2)$ list for each
received word, but it does not bound the number of syndrome values needed
over the union of those lists.

\subsection{A regular fixed-row clique}

A clique with a common even row forces distinct side-syndrome values across
its vertices, even when different pairs collide at different received words.
Recall that $T^{\pal}_{i,2}$ replaces $ab$ locally by $abba$, and that
$\weave(A,B)$ denotes the coordinate-wise interleaving of two equal-length
rows $A$ and $B$.

\begin{theorem}[A fixed-even-row regular clique]
\label{thm:regular-fixed-row-clique}
For every $r\geq1$, there are $r+1$ binary sources of full length $n=24r$
that share one even row and form a clique in the confusability graph for
exactly two PAL2 duplications.  The common even row and every odd row in the
family have constant-run and alternating-substring lengths bounded by an
absolute constant.
\end{theorem}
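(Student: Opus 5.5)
The plan is to build the clique from a local one-error collision gadget that is replicated across $r$ blocks of length $24$. The gadget comes from the calculation in Proposition~\ref{prop:even-insufficient}: the words $G_0=001000$ and $G_1=001100$ have the same even row $010$, different odd rows, and both reach $H=00110000$ after one PAL2 duplication (at local starts $1$ and $4$, respectively). I will surround this gadget with fixed padding to form a block $B_\sigma=P\,G_\sigma\,S$ of length $24$, for $\sigma\in\{0,1\}$. The prefix $P$ will have even length, so the gadget occupies the same parity positions in every block and the even rows of $B_0$ and $B_1$ coincide. For $s\in\{0,\ldots,r-1\}$, let $u_s$ be the concatenation of $r$ blocks in which block $s$ is $B_1$ and every other block is $B_0$. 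Let $u_r=B_0^r$. These are $r+1$ distinct sources of length $24r$. Since $24$ is even, all of them share the even row of $B_0^r$.

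Confusability of a pair $s<s'$ comes next. If $s'<r$, apply to $u_s$ the duplication that turns block $s$ from $G_1$ into $H$ and the one that turns block $s'$ from $G_0$ into $H$. Apply to $u_{s'}$ the symmetric pair of duplications. To keep the coordinates simple, I will process the rightmost block first, as in Lemma~\ref{lem:normal-form}. Both endpoints then equal the word in which blocks $s,s'$ carry $PHS$ and all other blocks carry $B_0$, so the two sources have a common exact-two descendant. If $s'=r$, first convert block $s$ of each source to $PHS$. Then apply one further duplication at a common original position, for instance a fixed start inside the first block that is not block $s$, or inside block $s$'s padding when needed. Lemma~\ref{lem:exact-at-most} even allows a common one-error descendant to be padded this way. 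So every pair is confusable, and the family is a clique of size $r+1$.

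It remains to choose the padding $P,S$ (total length $18$) and to verify the regularity claims. I will pick $P$ and $S$ so that the even and odd rows of $B_0$ and $B_1$, and of their concatenations, avoid constant runs and alternating substrings longer than an absolute constant. A natural first try is a row pattern built from repetitions of $0011$ or $001011$, which has run length at most $2$ and alternating length at most $3$, with the gadget's rows spliced in. This is then checked across the block boundaries $B_\sigma B_\tau$. I will verify this by a direct finite inspection of the four block pairs. That check suffices because the rows of any $u_s$ are concatenations of the rows of these blocks.

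The main obstacle is this finite design step. The padding must make the rows weakly $O(1)$-regular. It must also not disturb the local collision: the duplications act only on the pair at the gadget's local start, and the equalities $T^{\pal}(G_0)=T^{\pal}(G_1)=H$ hold as factors, so surrounding context does not affect them. The point that actually needs care is that the padding must not accidentally make any two of the $u_s$ coincide. If the first padding choice violates the alternating-length bound at a block boundary, I will lengthen $S$ with a separator such as $0011$ in both rows, staying within the budget of $24$.
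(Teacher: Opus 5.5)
Your proposal is correct and is essentially the paper's proof. Both use a base source plus $r$ single-site variants of a replicated one-error collision gadget, with crosswise gadget traces (right site first) between two variants, and a shared gadget step followed by a common second duplication against the base; the paper simply fixes the data explicitly as $\weave\bigl((0011)^{3r},(111000)^{2r}\bigr)$ with the gadget $100000/100100$ and common start $0$. The padding step you flag as the main obstacle is routine: your gadget's even row $010$ does not occur in $(0011)^\infty$, but block rows $001011001011$ and $110011100011$ with the gadget at row indices $7$--$9$ work. Distinctness of the sources is automatic from the fixed block positions, so no care is needed there.
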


\begin{proof}
\emph{Regular source family.}
Let the row length be $m=12r$ and define
\[
  E=(0011)^{3r},
  \qquad
  O_*=(111000)^{2r}.
\]
For $0\leq s<r$, put $j_s=12s+4$, and obtain $O_s$ from $O_*$ by changing
the symbol at coordinate $j_s$ from zero to one.  Define
\[
  U_*=\weave(E,O_*),
  \qquad
  U_s=\weave(E,O_s).
\]
The maximum constant-run and alternating-substring lengths are both two in
$E$; they are three and two, respectively, in $O_*$; and they are at most
three and five, respectively, in every $O_s$.

\emph{Local collision identity.}
The construction is based on the local identity
\begin{equation}
  T^{\pal}_{0,2}(100000)=10010000=T^{\pal}_{4,2}(100100).
  \label{eq:regular-gadget}
\end{equation}
Around coordinate $j_s$, the corresponding six-symbol woven factor is
$100000$ in $U_*$ and $100100$ in $U_s$.  Its global starts in
\eqref{eq:regular-gadget} are
\[
  a_s=2j_s-2=24s+6,
  \qquad
  b_s=a_s+4=24s+10.
\]
The gadgets are pairwise disjoint and are disjoint from the edge at start
zero.

\emph{Pairwise confusability.}
For the pair $U_*,U_s$, apply the appropriate gadget operation first and
then the common operation at start zero.  The respective ordered traces are
\[
  [a_s,0]
  \qquad\text{and}\qquad
  [b_s,0].
\]
Identity~\eqref{eq:regular-gadget} makes the intermediate words equal at the
only initially different site, and the second operation is common.  Thus the
two sources have a common exact-two descendant.

Now fix $0\leq s<s'<r$.  In $U_s$, first expand the right-hand gadget at
site $s'$ and then the left-hand gadget at site $s$; in $U_{s'}$, do the
corresponding two operations with the opposite local variants.  The traces are
\[
  [a_{s'},b_s]
  \qquad\text{and}\qquad
  [b_{s'},a_s],
\]
respectively.  Operating on the right-hand site first keeps the left-hand
start unchanged.  In both sources, each of the two sites becomes
$10010000$, so the final descendants coincide.  Every pair among
$\{U_*,U_0,\ldots,U_{r-1}\}$ is therefore confusable after exactly two PAL2
duplications.  The family has size $r+1=n/24+1$ and satisfies the asserted
constant regularity bounds.
\end{proof}

\begin{corollary}[Cost of a universal side syndrome in a fixed row]
\label{cor:regular-side-syndrome}
Fix an architecture that, for every binary even row $E$ of length $m$,
assigns every odd row $O\in\{0,1\}^m$ a side-syndrome value
$h_E(O)$ in an alphabet $\mathcal S_E$, with the requirement that each
value class in this fixed-$E$ fiber correct exactly two PAL2
duplications.  For infinitely many full lengths $n$, some fixed row $E$
requires
\[
  |\mathcal S_E|\geq\frac{n}{24}+1
\]
syndrome values.  Consequently, such a universal side syndrome requires
$\log_2 n-O(1)$ bits in the worst case, even when both rows obey constant
run-length and alternating-substring bounds.
\end{corollary}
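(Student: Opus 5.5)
The argument runs through Theorem~\ref{thm:regular-fixed-row-clique}. We take the clique construction there with $r$ ranging over all positive integers, so the full length $n=24r$ takes infinitely many values. For each such $n$ we fix the common even row $E=(0011)^{3r}$ of that construction, of length $m=12r=n/2$. The $r+1$ sources $U_*,U_0,\ldots,U_{r-1}$ share this even row, and their odd rows $O_*,O_0,\ldots,O_{r-1}$ are pairwise distinct, since each $O_s$ differs from $O_*$ in the single coordinate $j_s$ and these coordinates are distinct. All of them therefore lie in the fixed-$E$ fiber $\{\weave(E,O):O\in\{0,1\}^m\}$.

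Next we argue by pigeonhole. Suppose that two sources $U,U'$ of the family had odd rows receiving the same side-syndrome value $h_E(O)=h_E(O')$. Then both would belong to one value class, and by hypothesis that class corrects exactly two PAL2 duplications. However, the theorem shows that $U$ and $U'$ are adjacent in the exact-two confusability graph, meaning they have a common exact-two descendant. This contradicts the correction property, because distinct members of a correcting class must have disjoint exact-two spheres. Hence $h_E$ is injective on these $r+1$ odd rows, which gives $|\mathcal S_E|\geq r+1=n/24+1$.

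The bit count follows by taking logarithms: $\log_2|\mathcal S_E|\geq\log_2(n/24+1)=\log_2 n-O(1)$. The regularity clause is inherited directly from the theorem. The even row has run and alternating lengths at most $2$, and every odd row in the family has them bounded by $3$ and $5$. The constraint therefore persists even when the architecture further restricts attention to constantly regular rows, since the clique lies entirely inside that restricted set.

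The only point that needs care is interpretive rather than technical. The clique consists of pairwise confusabilities possibly witnessed at \emph{different} received words, and a class correcting two errors excludes every such pair separately. This is precisely the pairwise-disjointness definition of a correcting code, so no common received word is needed. I expect no real obstacle beyond stating this clearly.
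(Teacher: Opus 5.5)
Your proposal is correct and follows the paper's own argument. You apply the architecture to the common even row of Theorem~\ref{thm:regular-fixed-row-clique} with $n=24r$, note that the $r+1$ clique members must receive pairwise distinct syndrome values because each value class must have pairwise disjoint exact-two spheres, and take binary logarithms; your extra checks (distinct odd rows, inherited regularity bounds, and that pairwise collisions at different received words suffice) are details the paper leaves implicit.
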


\begin{proof}
Apply the architecture to the fixed even row in
Theorem~\ref{thm:regular-fixed-row-clique}.  The $n/24+1$ odd rows form a
clique, so no two may receive the same syndrome value.  Taking binary
logarithms gives the stated bit requirement.
\end{proof}

\begin{remark}[Scope of the separated-row obstruction]
For binary words of even length,
\[
  \Amap_n\bigl(\weave(E,O)\bigr)=\weave(E,\overline O).
\]
Consequently, Theorem~\ref{thm:regular-fixed-row-clique} and
Corollary~\ref{cor:regular-side-syndrome} transfer verbatim to exact-two
RC2: the even row remains fixed, while complementation of the odd row
preserves its constant-run and alternating-substring bounds.  These results
are not redundancy lower bounds for unrestricted PAL2 or RC2.  They apply
when the full fixed-$E$ fiber is colored by a universal side syndrome on
$O$; a joint constraint on $(E,O)$ may discard the local gadgets.
This is a conditional side-syndrome cost inside a fixed-$E$ fiber: $E$ may be
regarded as given for free, and any redundancy needed to constrain or recover
$E$ is outside the stated cost.
\end{remark}

\subsection{Raw history labels are not endpoint syndromes}

We test whether a sum of two raw history labels can be represented as a
difference of word syndromes.  Recall the $4n-5$ formal adjacency labels
\[
  \cU_n=
  \{P_i:0\leq i\leq n-2\}
  \sqcup
  \{L_i,M_i:0\leq i\leq n-2\}
  \sqcup
  \{R_i:0\leq i\leq n-3\}.
\]
Let $\Gamma$ be a common abelian group.  A $B_2$ weighting
$w:\cU_n\to\Gamma$ means that the sums $w(\lambda)+w(\mu)$ are
distinct for distinct unordered pairs $\{\lambda,\mu\}$, with repetition
allowed.  Consider length-specific word syndromes
$\sigma_m:\{0,1\}^m\to\Gamma$ whose endpoint difference is required to equal
the corresponding raw history sum.

\begin{proposition}[Raw shape sums cannot be endpoint differences]
\label{prop:raw-b2}
For $n\geq2$, there are no maps $\sigma_n:\{0,1\}^n\to\Gamma$,
$\sigma_{n+4}:\{0,1\}^{n+4}\to\Gamma$, and a $B_2$ weighting
$w:\cU_n\to\Gamma$ such that every two-step history from $u$ to $y$ with
raw structural labels $\lambda,\mu$ satisfies
\[
  \sigma_{n+4}(y)-\sigma_n(u)=w(\lambda)+w(\mu)
  \qquad\text{in }\Gamma.
\]
\end{proposition}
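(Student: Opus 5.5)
The plan is to argue by contradiction: exhibit one source $u$ and one endpoint $y$ that are reached by two histories with different unordered raw label pairs. If the maps $\sigma_n,\sigma_{n+4}$ existed, both histories would give the same value $\sigma_{n+4}(y)-\sigma_n(u)$. The $B_2$ property requires distinct unordered pairs $\{\lambda,\mu\}\ne\{\lambda',\mu'\}$ to have distinct sums $w(\lambda)+w(\mu)\ne w(\lambda')+w(\mu')$. These two requirements are incompatible. So the whole proof reduces to finding, for every $n\geq2$, a single collision between histories whose raw structural labels differ as unordered pairs.

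The first step is to fix what ``raw structural labels'' means, following the proof of Lemma~\ref{lem:normal-form}. A history $(i,h)$ with first operation $T^{\pal}_{i,2}$ gets label $P_i$ for its first operation. Its second operation is labelled by the adjacency occurrence it uses: $P_h$ if $h\leq i$, then $L_i$, $M_i$, $R_i$ for $h=i+1,i+2,i+3$, and $P_{h-2}$ if $h\geq i+4$.

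The second step is to find a collision that is uniform in $n$. The simplest candidate is the constant source $u=0^n$. Every PAL2 duplication of a constant word yields $0^{n+2}$, and every two-step history yields $0^{n+4}$. In particular, for $n\geq2$ the histories $(0,0)$, with raw pair $\{P_0,P_0\}$, and $(0,1)$, with raw pair $\{P_0,L_0\}$, both end at $y=0^{n+4}$. Since $P_0\ne L_0$ as formal labels, these unordered pairs are distinct. (The history $(0,2)$, with pair $\{P_0,M_0\}$, gives the same endpoint and could serve equally well.)

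The third step applies the hypothesis to both histories. This gives
\[
  w(P_0)+w(P_0)=\sigma_{n+4}(0^{n+4})-\sigma_n(0^n)=w(P_0)+w(L_0),
\]
which contradicts the $B_2$ property. Cancelling in the group even gives $w(P_0)=w(L_0)$ directly. Either way, no such maps exist.

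There is no real obstacle. The only care needed is to confirm that the ranges in $\cU_n$ make $L_0$ exist for every $n\geq2$, which holds because $0\leq0\leq n-2$. One should also note that the statement quantifies over every history, including ones whose shapes coincide under the normal form. This is exactly why a degenerate source suffices.
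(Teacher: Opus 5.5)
Your proposal is correct and matches the paper's proof: the paper uses the same source $0^n$, the same two histories at dynamic starts $0$ and $1$ after a first operation at start $0$, with raw shapes $\{P_0,P_0\}$ and $\{P_0,L_0\}$, and derives the same contradiction with the $B_2$ property from the common endpoint $0^{n+4}$.
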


\begin{proof}
Starting from $u=0^n$, the first PAL2 operation at start zero produces
$0^{n+2}$.  A second operation at dynamic start zero or one produces the
same endpoint $y=0^{n+4}$:
\[
  0^n\xrightarrow{\,0\,}0^{n+2}\xrightarrow{\,0\,}0^{n+4},
  \qquad
  0^n\xrightarrow{\,0\,}0^{n+2}\xrightarrow{\,1\,}0^{n+4}.
\]
The raw shapes are $\{P_0,P_0\}$ and $\{P_0,L_0\}$, respectively.  Their
$B_2$ sums in $\Gamma$ must differ, whereas both would equal the same endpoint
difference
$\sigma_{n+4}(0^{n+4})-\sigma_n(0^n)$, a contradiction.
\end{proof}

\begin{remark}[Scope of the raw-history obstruction]
Proposition~\ref{prop:raw-b2} concerns the unquotiented history tree; it does
not rule out additive labels on canonically identified endpoint classes or a
state-dependent invariant that telescopes along the updated word.
\end{remark}

\bibliographystyle{plainnat}
\bibliography{references}

\end{document}